\newtheorem{theo}{Theorem}
\newtheorem{open}{Open problem}
\newcommand\numberthis{\addtocounter{equation}{1}\tag{\theequation}}
\newcommand{\PP}{\mathcal{P}_{se}}
\newcommand{\PPp}{\mathcal{P}_{se}^{\Lambda>0}}
\newcommand{\PPm}{\mathcal{P}_{se}^{\Lambda\leq 0}}
\theoremstyle{plain}
\newtheorem{thm}{Theorem}[section]
\newtheorem{lem}[thm]{Lemma}
\newtheorem{prop}[thm]{Proposition}
\newtheorem{cor}[thm]{Corollary}
\theoremstyle{remark}
\newtheorem{rmk}{Remark}[section]
\theoremstyle{definition}
\newtheorem{defn}{Definition}[section]
\newcommand{\ep}{\epsilon} \newcommand{\ls}{\lesssim}
\newcommand{\RR}{\mathbb{R}}
\newcommand{\dphi}{\delta \phi}
\newcommand{\ddphi}{\delta \dot{\phi}}
\newcommand{\domega}{\delta \Omega^2}
\newcommand{\dr}{\delta r}
\newcommand{\ddr}{\delta \dot{r}}
\newcommand{\tuone}{\tilde{u}_1}
\newcommand{\tutwo}{\tilde{u}_2}
\newcommand{\tvone}{\tilde{v}_1}
\newcommand{\tvtwo}{\tilde{v}_2}
\newcommand{\phil}{\phi_{\mathcal{L}}}
\newcommand{\sr}{\mathfrak{s}}
\newcommand{\dphil}{\dot{\phi}_{\mathcal{L}}}
\theoremstyle{plain}
\theoremstyle{remark}
\theoremstyle{definition}
\newcommand{\s}{s_{lin}}
\numberwithin{equation}{section}
\begin{document}
	\title{Violent nonlinear collapse in the interior of charged hairy black holes}

	\author{Maxime~Van~de~Moortel\thanks{mmoortel@princeton.edu}}
	
	\affil{\small  Department of Mathematics, Princeton University, 
		Washington~Road,~Princeton~NJ~08544,~United~States~of~America \vskip.1pc}

	\maketitle
	\begin{abstract}
		We construct a new one-parameter family, indexed by $\epsilon$, of two-ended, spatially-homogeneous black hole interiors solving the Einstein--Maxwell--Klein--Gordon equations with a (possibly zero) cosmological constant $\Lambda$ and bifurcating off a Reissner--Nordstr\"{o}m-(dS/AdS) interior ($\ep=0$). For all small $\epsilon \neq 0$, we prove that, although the black hole is charged, its terminal boundary is an everywhere-\textit{spacelike} Kasner singularity foliated by spheres of zero radius $r$. 
		
		Moreover, smaller perturbations (i.e.\ smaller $|\ep|$) are \textit{more singular than larger ones}, in the sense that the Hawking mass and the curvature blow up following a power law of the form  $r^{-O(\epsilon^{-2})}$ at the singularity $\{r=0\}$. This unusual property  originates  from a dynamical phenomenon -- \textit{violent nonlinear collapse} --  caused by the almost formation of a Cauchy horizon to the past of the spacelike singularity $\{r=0\}$. This phenomenon was previously described numerically in the physics literature and referred to as ``the collapse of the Einstein--Rosen bridge''.

		While we cover all values of $\Lambda \in \RR$, the case $\Lambda<0$ is of particular significance to the AdS/CFT correspondence. 
		Our result can also be viewed in general as a first step towards the understanding of the interior of hairy black holes.
		

	\end{abstract} 
	\section{Introduction}\label{intro.section}
	
	The no-hair conjecture is a well-known statement in the Physics literature, broadly claiming that all stationary black holes are solely described by their mass, angular momentum and charge (namely they belong to the Kerr--(Newman) or the Reissner--Nordstr\"{o}m family), see the review \cite{Chrusciel_review} and references therein. In (electro)-vacuum, celebrated uniqueness theorems \cite{uniqueness2,uniqueness3,uniqueness5,uniqueness1,uniqueness4} preclude the existence of asymptotically flat ``hairy'' black holes.  However, there exists a plethora of literature on hairy black holes for relatively exotic matter models: Arguably the most emblematic known hairy black holes are static solutions coupled with non-abelian gauge theories, satisfying the Einstein--Yang--Mills equations  \cite{Bizon,P9,P4,P8} or the Einstein--Yang--Mills equations coupled with a Higgs or Dilaton field \cite{P9,P2,Sarbach}.
	
	In the present study, we consider a typical matter model, the Einstein--Maxwell--Klein--Gordon equations, with a cosmological constant $\Lambda \in \RR$, and a scalar field $\phi$ obeying the linear Klein--Gordon equation with mass $m^2 \in \RR-\{0\}$:
	
	\begin{align}\label{E1} & Ric_{\mu \nu}(g)- \frac{1}{2}R(g)g_{\mu \nu}+ \Lambda g_{\mu \nu}= \mathbb{T}^{EM}_{\mu \nu}+  \mathbb{T}^{KG}_{\mu \nu} , \\ & \label{E2} \mathbb{T}^{EM}_{\mu \nu}=2\left(g^{\alpha \beta}F _{\alpha \nu}F_{\beta \mu }-\frac{1}{4}F^{\alpha \beta}F_{\alpha \beta}g_{\mu \nu}\right), \hskip 7 mm \nabla^{\mu} F_{\mu \nu}=0, \\ &  \label{E3}  \mathbb{T}^{KG}_{\mu \nu}= 2\left( \nabla_{\mu}\phi\nabla_{\nu}\phi -\frac{1}{2}(g^{\alpha \beta} \nabla_{\alpha}\phi \nabla_{\beta}\phi + m ^{2}|\phi|^2  )g_{\mu \nu} \right), \\ &\label{E5} g^{\mu \nu} \nabla_{\mu} \nabla_{\nu}\phi = m ^{2} \phi.   \end{align}

	A uniqueness result of Bekenstein \cite{Bekenstein}   precludes the existence of asymptotically flat ($\Lambda=0$) hairy black holes for the above system (at least when $m^2  \geq 0$). However, there has been recent significant interest in asymptotically AdS static hairy black holes when $\Lambda<0$, in connection with the AdS/CFT correspondence \cite{r1,r2,r3,numerics.uncharged, numerics.RN, numerics.charged}.  In our main theorem below, we will consider the subject under a dynamical perspective 
	and study rigorously the time-evolution of characteristic initial data consisting of a constant scalar field $\phi \equiv \ep$ of small amplitude $\ep\neq0 $ on a two-ended event horizon. The resulting spacetime is a one-parameter family  bifurcating from the Reissner--Nordstr\"{o}m-(dS/AdS) interior metric, which we interpret as the interior region of a charged and static hairy black hole. We will however limit our study to the black hole interior and not concern ourselves with the construction of the asymptotically AdS black hole exterior  (see Figure \ref{Fig_Two}), since we do not want to impose the sign of the cosmological constant in the present work.
	
	The resulting family of metrics $g_{\ep}$ we construct are charged $\ep$-perturbations of the Reissner--Nordstr\"{o}m-(dS/AdS) interior, but surprisingly \textit{do not admit a Cauchy horizon}; instead, their singularity is everywhere spacelike (Figure \ref{Fig3}).
	
	Furthermore, we show that the evolution problem obeys highly nonlinear dynamics leading to a more violent singularity than expected. 


	\begin{theo} \textup{[Rough version]} \label{thm.intro}
		Fix the following characteristic initial data on bifurcating event horizons $\mathcal{H}^+_1 \cup \mathcal{H}^+_2$: \begin{align}
			& \label{phi.hair}\phi \equiv \epsilon,\\ &g= g_{RN} \color{black} + O(\epsilon^2) \color{black},
		\end{align} where $ g_{RN}$ is the Reissner--Nordstr\"{o}m-(dS/AdS) metric \eqref{RN} with sub-extremal parameters $(M,e,\Lambda)$. The Maximal Globally Hyperbolic Development $(\mathcal{M}_{\ep},g_{\ep})$ of this data is a spatially-homogeneous spacetime with topology $\RR \times \mathbb{S}^2$.
		
		Then, for almost every $(M,e,\Lambda,m^2)$, there exists $\ep_0(M,e,\Lambda,m^2)>0$ such that for all $0<|\ep| < \ep_0$, the spacetime $(\mathcal{M}_{\ep},g_{\ep})$ ends at a \emph{spacelike singularity $\mathcal{S}:=\{r=0\}$}, where $r$ is the area-radius of the $ \mathbb{S}^2$  sphere.  Moreover:

		\begin{enumerate}[i.]
			\item \underline{Almost formation of a Cauchy horizon}: $g_{\ep}$ is uniformly close to Reissner--Nordstr\"{o}m-(dS/AdS) locally (in Reissner--Nordstr\"{o}m-(dS/AdS) time) and moreover $g_{\ep}$ converges weakly to Reissner--Nordstr\"{o}m-(dS/AdS) as $\ep \rightarrow 0$.

			\item \underline{Singular power-law inflation}\label{power.law.statement}: the Hawking mass $\rho$ and the Kretschmann scalar $\mathfrak{K}$ blow up at $\mathcal{S}:=\{r=0\}$ as: \begin{equation} \begin{split} \label{rho.intro}
					\rho(r) \approx  		r^{- b_-^{-2}  \epsilon^{-2}+ O(\ep^{-1})}, \hskip 5 mm \mathfrak{K}(r) \approx  		r^{- 2 b_-^{-2}   \epsilon^{-2} +O(\ep^{-1})},
				\end{split}
			\end{equation} and we call $b_-(M,e,\Lambda,m^2) \neq 0$ the resonance parameter. $\approx$ means equivalent  as $r \rightarrow 0$, up to a constant.
			
			\item \label{Linfty.statement}  \underline{Kasner-type behavior}: near the singularity $\mathcal{S}:=\{r=0\}$, $g_{\ep}$ is uniformly close to a Kasner-like metric $\tilde{g}_{\ep}^{Kas}$:
			\begin{align}  \label{Kasner.intro.eq}
				&\tilde{g}_{\ep}^{Kas}= -d\tilde{\tau}^2 + \tilde{\tau}^{2 (1-4b_-^2 \ep^2)+O(\ep^3)} d\rho^2 +  \tilde{\tau}^{4b_-^2 \ep^2+O(\ep^3)}\cdot  r_-^2 \cdot ( d\theta^2 + \sin^2(\theta) d\psi^2),\ \\ & \phi(\tilde{\tau}) = 2 b_- \cdot [  \ep + O(\ep^2)] 
				\cdot  \log(\tilde{\tau}^{-1}), 
			\end{align} where $r_-(M,e,\Lambda)>0$ is the radius of the Cauchy horizon of the unperturbed metric \eqref{RN}$=g_{\ep=0}$.

		\end{enumerate}
	\end{theo}
	
	\begin{figure} 
		
		\begin{center}
			
			\includegraphics[width=67 mm, height=73 mm]{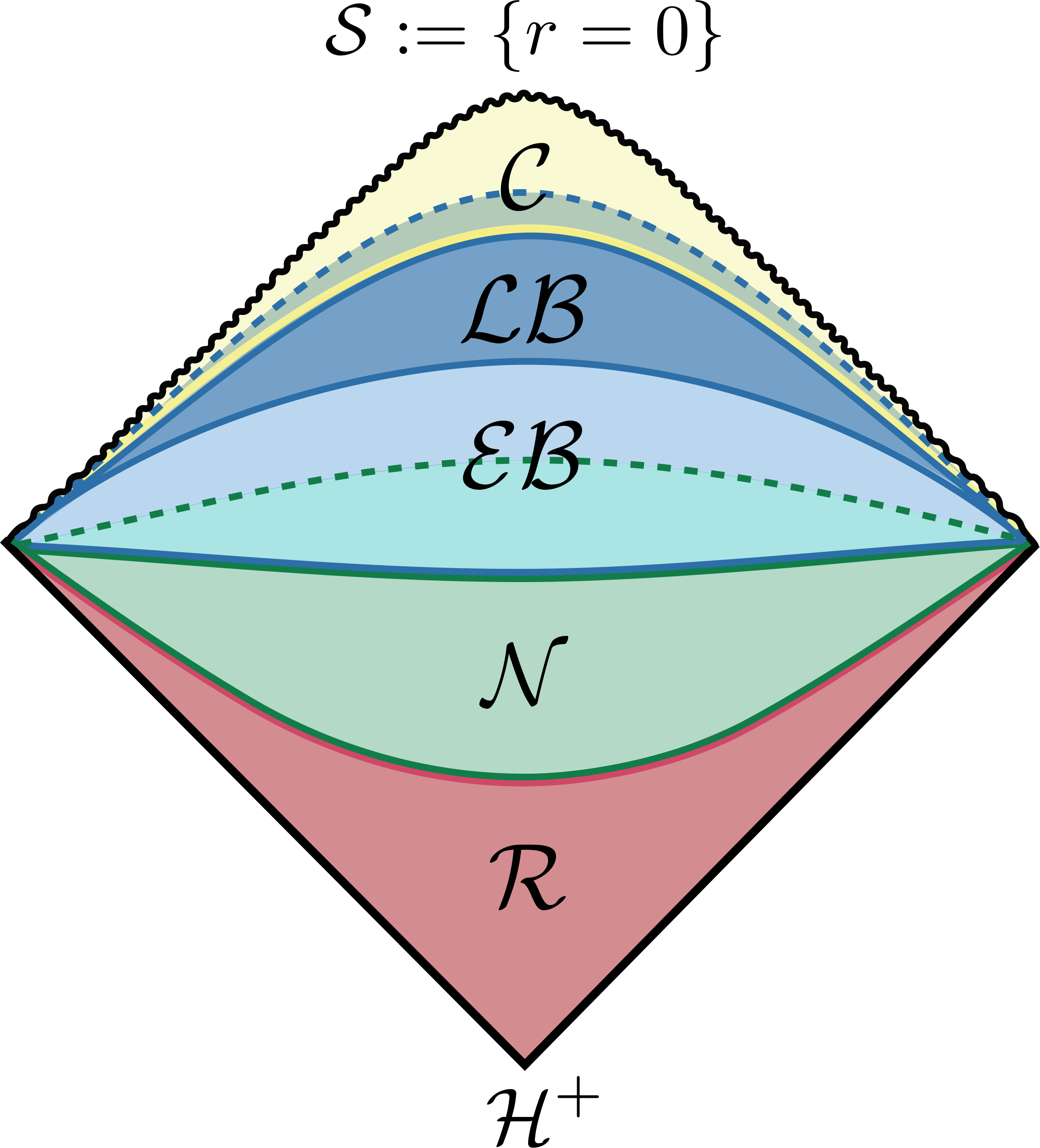}
			
		\end{center}

		\caption{Penrose diagram of $g_{\ep}$ from Theorem~\ref{thm.intro}, where $\mathcal{S}=\{r=0\}$ is a spacelike singularity. The spacetime regions $\mathcal{R}$, $\mathcal{N}$, $\mathcal{EB}$, $\mathcal{LB}$ and $\mathcal{C}$ are introduced in Section~\ref{strategy.section}. Note that   $\mathcal{EB}$ \color{black} overlaps with \color{black} $\mathcal{N}$ and $\mathcal{LB}$ \color{black} overlaps with \color{black} $\mathcal{C}$.}
		\label{Fig3}
	\end{figure} 
	
	We emphasize that we do not fix the sign of $\Lambda$, or of $m^2$: if $\Lambda=0$ (respectively $\Lambda>0$, $\Lambda<0$), the spacetime metric $g_{\ep=0}$ from which $g_{\ep}$ bifurcates is a Reissner--Nordstr\"{o}m (respectively RN-de-Sitter, RN-anti-de-Sitter) interior metric.

	We now make a few essential remarks on our spacetime $(\mathcal{M}_{\ep},g_{\ep})$, and announce the outline of the introduction: 
	
	\begin{itemize}
		\item 
		If we consider instead black holes solutions of \eqref{E1}--\eqref{E5}  \textit{relaxing to Reissner--Nordstr\"{o}m}, namely if instead of \eqref{phi.hair}, we have the following non-hairy behavior of the scalar field  $$\phi \rightarrow 0 \text{ towards } i^+ \text{ in the black hole exterior,}$$ then  the black hole interior \textit{does admit} a Cauchy horizon, in complete contrast to the spacetime $(\mathcal{M}_{\ep},g_{\ep})$ of Theorem~\ref{thm.intro}, see Section~\ref{AF.comparision} and Figure \ref{Fig.relax}.

		\item Asymptotically AdS black holes play an important role in AdS/CFT, in connection to two of the most celebrated problems in the quantum aspects of gravity: the information paradox \cite{HartmanMaldacena,info.paradoxHawking1,info.paradoxHawking2} on the one hand, and the probing of the singularity at the black hole terminal boundary on the other hand \cite{numerics.uncharged,numerics.RN,numerics.charged} (see Section~\ref{ADS.section}).	 For both problems, many standard results in AdS/CFT (like the computations of entanglement entropy \cite{HartmanMaldacena}) are considered on the Reissner--Nordstr\"{o}m-AdS spacetime, which is very specific in that it has a Cauchy horizon, contrary to $g_{\ep}$. It would be interesting to examine these results on the more general 
		spacetime $(\mathcal{M}_{\ep},g_{\ep})$ constructed in Theorem~\ref{thm.intro}.
		
		\item Although $g_{\ep}$ has a spacelike singularity, a Cauchy horizon \textit{almost forms}, i.e.\ $g_{\ep}$ is close to Reissner--Nordstr\"{o}m-(dS/AdS) for large intermediate times (at which a Reissner--Nordstr\"{o}m metric would be ``close to its Cauchy horizon''). Moreover, we have weak convergence of $g_{\ep}$  to Reissner--Nordstr\"{o}m-(dS/AdS), see Section~\ref{RN.stab.intro}.
		
		\item The almost formation of the Cauchy horizon dramatically impacts the singularity structure and is responsible for what we call \textit{violent nonlinear collapse}, reflected by the $r^{-b_-^{-2}\ep^{-2}}$ power-law inflation rates from \ref{rho.intro}. Such singular rates (note that $\ep^{-2}\rightarrow \infty$, as $\ep \rightarrow 0$) conjecturally do not occur for hairy perturbations of Schwarzschild-(dS/AdS), see Section~\ref{violent.section}.
		
		\item The other effect of violent nonlinear collapse is to make $g_{\ep}$ uniformly close (near the singularity) to a \emph{Kasner metric} (see Section~\ref{Kasner.intro.section}) of  exponents $(1-4b_-^2\ep^2 +O(\ep^3),\ 2b_-^2\ep^2 +O(\ep^3),\ 2b_-^2\ep^2 +O(\ep^3))$. Note that an exact Kasner metric of exponents $(1-4b_-^2\ep^2,\ 2b_-^2\ep^2,\ 2b_-^2\ep^2)$ (recall that $b_-(M,e,\Lambda,m^2 )\neq 0$) converges (locally) to Minkowski in  $L^p$ norm, but not uniformly as $\ep \rightarrow 0$.
		
		\item In Section~\ref{comparison.section}, we discuss other results and analogies between $g_{\ep}$ and hairy black holes for other matter models.
		\item The author hopes that the present study will pave the way towards other interesting problems such as \begin{enumerate}[A.]
			\item Constructing a static, asymptotically AdS black hole exterior matching with the interior metric $g_{\ep}$.
			\item Understanding the stability of the metric $g_{\ep}$ with respect to \textit{non spatially-homogeneous perturbations}.
			\item  Understanding the singularity inside rotating (charged or uncharged) hairy black holes.
			\item Studying similar models which admit BKL-type oscillations, a topic of great importance in cosmology.
		\end{enumerate} For further developments on the above problems, we refer the reader to Section~\ref{open.problems}.
		\item   Concerning the proof, we want to emphasize two important aspects (see Section~\ref{strategy.section} for more details): 
		
		\begin{enumerate}[a.]
			\item \textit{The importance of distinguishing different time scales}, in particular short times in which Reissner--Nordstr\"{o}m-(dS/AdS) enjoys Cauchy stability, intermediate times where a typical blue-shift instability kicks in 
			and late times where the nonlinearity dominates and monotonicity takes over, leading to collapse to $\{r=0\}$.
			
			\item We exploit a \textit{linear instability} \cite{ChristophYakov} for the Klein--Gordon equation on  Reissner--Nordstr\"{o}m-(dS/AdS). This instability is important at early/intermediate times and relies on a scattering resonance (absent in the case $m^2=\Lambda=0$) giving $b_- \neq 0$, that occurs for \textit{almost every} (but not all) parameters $(M,e,\Lambda,m^2)$.
		\end{enumerate}

	\end{itemize}

	\subsection{Comparison with non-hairy black holes relaxing to Reissner--Nordstr\"{o}m} \label{AF.comparision}
	
	Dafermos--Luk proved in \cite{KerrStab} the stability of Kerr's Cauchy horizon with respect to vacuum perturbations \textit{relaxing to Kerr} at a fast, integrable rate (consistent with the fast rates one would obtain in the exterior problem \cite{MihalisStabExt}, \cite{MihalisStabExtKerr}). For the model \eqref{E1}--\eqref{E5}, the relaxation is conjectured to occur at a slower rate if $ \Lambda=0$ (see the heuristics/numerics from \cite{Phycists2}, \cite{KonoplyaZhidenko}, \cite{KoyamaTomimatsu}), which is a serious obstruction to asymptotic stability, even in spherical symmetry. Nevertheless, the author proved that the Reissner--Nordstr\"{o}m Cauchy horizon is stable with respect  to spherically symmetric perturbations, providing they decay at a (slow) inverse polynomial relaxation rate consistent with the conjectures:

	\begin{thm}[\cite{Moi}] \label{CH.thm} Consider regular spherically symmetric characteristic data on  $\mathcal{H}^+ \cup \underline{C}_{in}$, where $\mathcal{H}^+:= [1,+\infty)_v  \times \mathbb{S}^2$, converging to a sub-extremal Reissner--Nordstr\"{o}m-(dS/AdS) at the following rate:  for some $s>\frac{1}{2}$ and for all $v \in \mathcal{H}^+$:
		\begin{equation} \label{decay}
			|\phi|(v) + |\partial_v \phi|(v) \ls v^{-s},
		\end{equation} where $v$ is a standard Eddington--Finkelstein advanced time-coordinate. Then, restricting $\underline{C}_{in}$ to be sufficiently short, the future domain of dependence of $\mathcal{H}^+ \cup \underline{C}_{in}$  is bounded by a Cauchy horizon $\mathcal{CH}^+$, namely a null boundary emanating from $i^+$ and foliated by spheres of strictly positive area-radius $r$, as depicted in Figure~\ref{Fig.relax}.
	\end{thm}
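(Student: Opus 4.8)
The plan is to treat \eqref{E1}--\eqref{E5} as a characteristic initial value problem in a double-null gauge. Writing the spherically symmetric metric as $g=-\Omega^2\,du\,dv+r^2\,d\sigma_{\mathbb{S}^2}$ with $\mathcal{H}^+=\{u=0\}$ parametrized by $v\in[1,\infty)$ and $\underline{C}_{in}=\{v=1\}$ parametrized by $u\in[0,U_0]$, the system \eqref{E1}--\eqref{E5} reduces to the two Raychaudhuri equations governing $\partial_u(\Omega^{-2}\partial_u r)$ and $\partial_v(\Omega^{-2}\partial_v r)$, the wave equations for $r$ and for $\phi$ (the latter carrying the $m^2\phi$ term), and the Maxwell equation, which --- the scalar field being uncharged --- makes the charge $Q\equiv e$ constant. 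After invoking local well-posedness off $\mathcal{H}^+\cup\underline{C}_{in}$, the heart of the proof is a bootstrap (continuity) argument on the coordinate rectangle $\{0\le u\le U_0\}\times\{v\ge 1\}$, where $U_0$ is fixed small once and for all --- this is the meaning of ``restricting $\underline{C}_{in}$ to be sufficiently short'' --- carrying bootstrap assumptions that $r$ stays trapped in a fixed compact subinterval of $(r_-,r_+)$, that the Hawking mass $\rho$ stays bounded, and that $\phi$, $\partial_v\phi$, $\partial_u\phi$ obey the decay and boundedness dictated by \eqref{decay} in suitably renormalized coordinates.

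Following Dafermos's well-known three-region decomposition, I would split the rectangle according to the value of $r$. First, a \emph{redshift region} adjacent to $\mathcal{H}^+$, where $r$ is near $r_+$: there the positive surface gravity of the event horizon supplies a favorable exponential weight, Cauchy stability of the Reissner--Nordstr\"{o}m-(dS/AdS) interior applies, and the decay \eqref{decay} propagates inward with no loss. Second, a \emph{no-shift region} where $r$ ranges over a fixed compact subset of $(r_-,r_+)$: here there is no exponential weight, but the region has bounded ``width'', so a Gr\"{o}nwall/energy estimate closes with only a bounded multiplicative constant. Third, the \emph{blueshift region}, where $r$ is near $r_-$ and one approaches the would-be Cauchy horizon: there the negative surface gravity of the Cauchy horizon produces the exponential blueshift amplification $\sim e^{|K_-|v}$, which is the true enemy.

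The main obstacle is precisely the blueshift region: one must show the exponential amplification is defeated by the \emph{borderline} hypothesis $s>\frac{1}{2}$. The point is that the estimate lives at the level of the energy: $\int^{\infty}v^{-2s}\,dv<\infty$ exactly when $s>\frac{1}{2}$, so the scalar-field energy flux through the outgoing cones is finite. Accordingly, the key tool is a weighted energy estimate for $\phi$ along outgoing and ingoing cones, with weights tuned to the blueshift, closed by exploiting the monotonicity forced by the Raychaudhuri equations --- which keep $\partial_u r$ and $\partial_v r$ signed and thus keep $r$ bounded below as long as the flux is controlled. The mass term $m^2\phi$ and the cosmological constant $\Lambda$ only perturb the effective potential without affecting the leading blueshift rate, so they are absorbed as lower-order errors, and pointwise bounds on $\phi$ are recovered from the energy bounds by integrating along characteristics. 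Once the bootstrap closes, the solution exists on the whole rectangle with $r$ bounded below there by a positive constant; passing to a coordinate compactifying $v=\infty$, the quantities $r$, $\Omega^2$, $\phi$ extend continuously up to $\{v=\infty\}$ with $r\to r_{\mathcal{CH}}>0$, and this null hypersurface --- emanating from $i^+$ and foliated by spheres of area-radius $r_{\mathcal{CH}}$ --- is the Cauchy horizon $\mathcal{CH}^+$; a standard maximality argument identifies it with (part of) the future boundary of the development, yielding the Penrose picture of Figure~\ref{Fig.relax}.
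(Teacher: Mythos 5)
This theorem is not proved in the present paper; it is quoted from \cite{Moi}, so the comparison is with the proof there. Your outline is essentially that proof's architecture: double-null gauge, a bootstrap on the rectangle $\{0\le u\le U_0\}\times\{v\ge 1\}$ with $U_0$ small, a decomposition into red-shift / no-shift / blue-shift regions, and the identification of $s>\frac12$ as exactly the threshold at which $\int^\infty v^{-2s}\,dv<\infty$, so that the scalar-field flux entering the Raychaudhuri and mass equations is finite and $r$ stays bounded below --- that is indeed the crux. Two remarks on where you diverge from \cite{Moi}. First, the estimates there are closed pointwise along characteristics (bounding $|\partial_v\phi|$, $|\partial_u\phi|/|\partial_u r|$, $\kappa$, $\Omega^2$ directly from the null transport equations) rather than via weighted energy fluxes; the energy route is the one taken by Dafermos--Luk \cite{KerrStab} outside symmetry and is viable, but one must then use weights degenerating at $\mathcal{CH}^+$, since the nondegenerate energy generically blows up there. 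Second, the blue-shift region has to be split further into an early part, where difference estimates against the Reissner--Nordstr\"{o}m background still close, and a late part, where only weaker degenerate bounds survive --- the same early/late split the present paper uses in Section~\ref{strategy.section}. One small imprecision in your bootstrap: $r$ cannot be assumed to stay in a fixed compact subinterval of $(r_-,r_+)$, since $r\to r_{\mathcal{CH}}\le r_-$ at the Cauchy horizon; the correct assumption is only a positive lower bound $r\ge r_--\eta>0$ (together with $r\le r_+$), which is what the monotonicity from Raychaudhuri plus the finite flux actually delivers.
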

	
	\begin{figure}[H]
		
		\begin{center}
			
			\includegraphics[width= 58 mm, height=40
			mm]{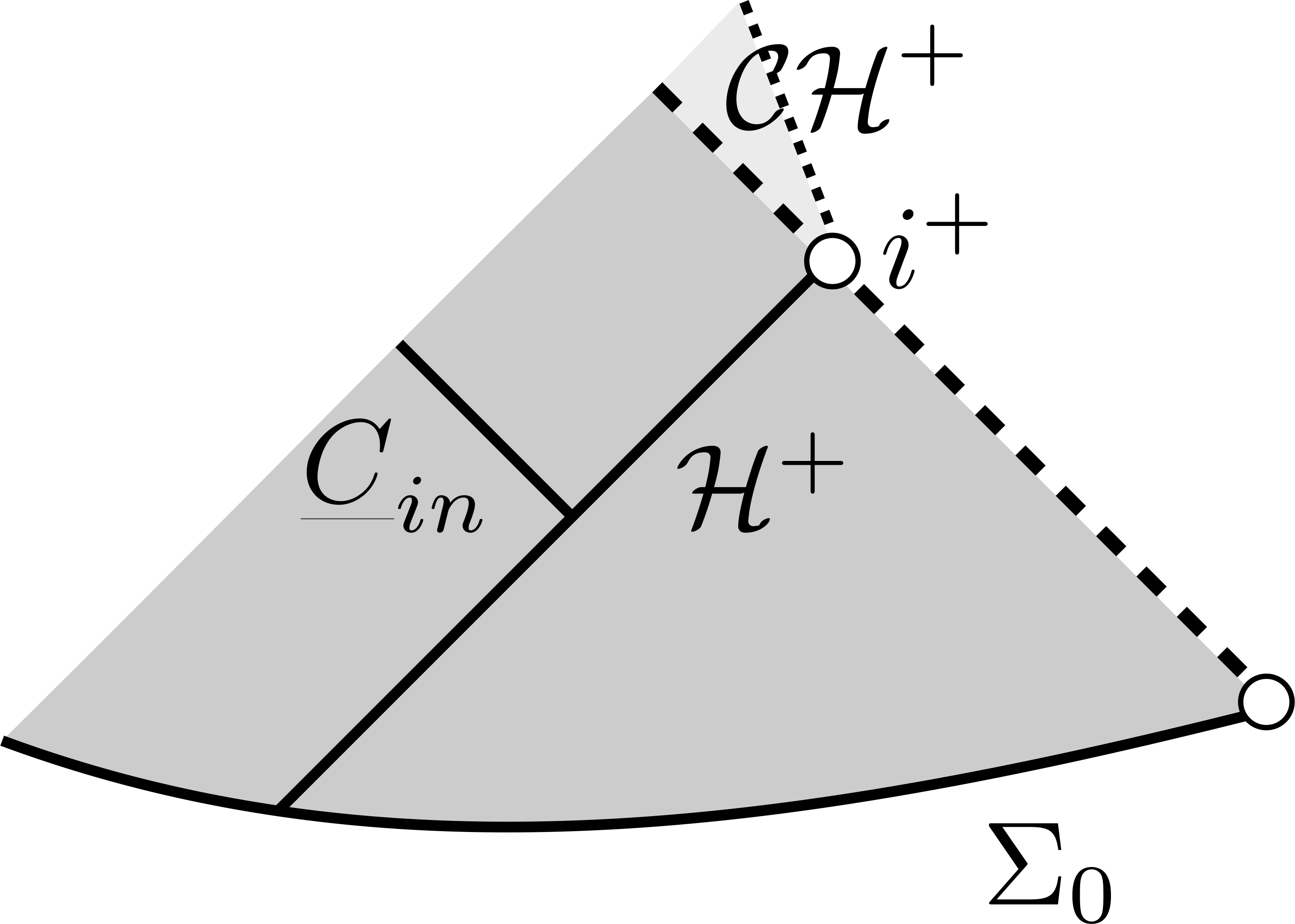}
			
		\end{center}

		\caption{Penrose diagram of the spacetime corresponding to Theorem~\ref{CH.thm}}
		\label{Fig.relax}
	\end{figure} 
	
	\begin{itemize}
		
		\item Theorem~\ref{CH.thm} stands in \textit{complete contrast} with Theorem~\ref{thm.intro}: to summarize, the Cauchy horizon is stable if $\phi$ decays to zero (even at a slow rate as in \eqref{decay}) but is  unstable if $\phi\equiv \ep$ is small but does not decay.
		\item Note that for $\Lambda=0$ and asymptotically flat Cauchy data, since  by \cite{Bekenstein}, Reissner--Nordstr\"{o}m is the only static solution of \eqref{E1}--\eqref{E5}, it means that the decay of $\phi$ to $0$ indeed quantifies the relaxation of the black hole exterior towards a Reissner--Nordstr\"{o}m metric. In a slight abuse of terminology, we will say that a (spherically symmetric) black hole relaxes to  Reissner--Nordstr\"{o}m when the scalar field $\phi$ tends to $0$ towards $i^+$ in black hole exterior.
		
		\item Note that the polynomial decay as in  \eqref{decay} is conjecturally sharp in the exterior \cite{Moi2}  if $\Lambda=0$, but not 
		if $\Lambda>0$ because the decay is exponential \cite{HintzVasy}; moreover \eqref{decay} is not satisfied if $\Lambda<0$, as the decay is logarithmic \cite{HolzegelSmulevici}.

		\item If we assume integrable decay for this model, i.e.\ $s>1$ (an unrealistic assumption if $\Lambda=0$, in view of the conjectured rates in the exterior), then we prove in \cite{Moi} that the metric  is continuously extendible at the Cauchy horizon. Under similar assumptions, Dafermos--Luk reached the same conclusion for perturbations of Kerr \cite{KerrStab} without symmetry: since the integrable decay $s>1$ becomes a realistic assumption in the vacuum case \cite{MihalisStabExtKerr,BH.stab1,KS2}, the result of Dafermos--Luk \cite{KerrStab} also falsifies the $C^0$-formulation of Strong Cosmic Censorship (by means of decay).
		
		\item Nevertheless, if $\frac{1}{2}  < s \leq 1$, there is a large class of $\phi$-data obeying \eqref{decay} such that the Cauchy horizon exists, but admits a novel \emph{null contraction singularity} that renders the metric $C^0$-inextendible, and $\phi$ is unbounded in amplitude \cite{Moi3Christoph,Moi4Christoph}. However, uniform boundedness and continuous extendibility  of the metric and $\phi$ hold for a sub-class of \textit{oscillating event horizon data}, as proven in \cite{Moi3Christoph}. Since these oscillations at the event horizon are conjectured to be generic \cite{HodPiran1,KoyamaTomimatsu}, the work \cite{Moi3Christoph}  also falsifies the $C^0$-formulation of Strong Cosmic Censorship in spherical symmetry by means of the \emph{oscillations} of the perturbation (as opposed to by means of decay).
		
		\item If we assume an averaged version of \eqref{decay} as a lower bound (still consistent with the conjectured rates) then we proved in \cite{Moi,Moi4} that the curvature blows up at the Cauchy horizon and that mass inflation occurs (i.e.\ the Hawking mass blows up): this shows that the metric is (future) $C^{2}$-inextendible \cite{Moi4}. This statement is called the $C^2$ version  of the Strong Cosmic Censorship conjecture, which is thus true for this matter model.
		
		\item Theorem~\ref{CH.thm} is a local result, which is independent of the topology of the Cauchy data. It turns out that solutions of \eqref{E1}--\eqref{E5} with $F \neq 0$ in spherical symmetry are constrained to have two-ended data topology $\RR \times \mathbb{S}^2$, a setting which is not realistic to study astrophysical gravitational collapse (but may be in other settings, see Section~\ref{ADS.section}). In fact, Theorem~\ref{CH.thm} and all the results claimed in this section were originally proven for the Einstein--Maxwell--Klein--Gordon system where the scalar field  $\phi$ is also allowed to be \textit{charged}, see \cite{Moi}, \cite{Moi4}. Unlike model \eqref{E1}--\eqref{E5}, this charged model allows for one-ended data with topology $\RR^3$ admitting a regular center and provides an acceptable model to study spherical collapse as a global phenomenon. The one-ended collapse case brings a striking conclusion: under the above assumptions, while a  Cauchy horizon $\mathcal{CH}^+$ is present  near $i^+$ (in the domain of dependence of $\underline{C}_{in}$, see Figure \ref{Fig.relax}), it \textit{breaks down} globally and a crushing singularity forms near the center, as we prove in \cite{r=0}.

	\end{itemize}

	\subsection{Boundary value problem and potential applications to AdS/CFT} \label{ADS.section}

	Two-ended stationary black holes (``eternal'') play a pivotal role in the celebrated Anti-de-Sitter/Conformal Field Theory correspondence \cite{Maldacena.ADSCFT}, see for instance \cite{Maldacena.BH,Witten}. In the AdS/CFT dictionary, charged black holes correspond thermodynamically to the Grand Canonical Ensemble of $N$ holographic theories at the boundary for large $N$   \cite{Emparan,HawkingReall}.
	
	One important open problem regarding the quantum aspects of gravity is the \emph{information paradox} resulting from the apparent loss of information due to the quantum evaporation of black holes by Hawking radiation \cite{info.paradoxHawking1,info.paradoxHawking2}. Recently, the quantum concept of entanglement entropy \cite{HartmanMaldacena} was used in an approach \cite{info.paradox2} to explain this paradox.  Most computations in this context are, however,  made on the Reissner--Nordstr\"{o}m-AdS electro-vacuum solution, which is highly non-generic, and moreover its terminal boundary is a smooth Cauchy horizon $\mathcal{CH}^+_A \cup \mathcal{CH}^+_B$ (see Figure \ref{Fig.RADS}).
	
	\begin{figure}[H]
		
		\begin{center}
			
			\includegraphics[width=63 mm, height=70 mm]{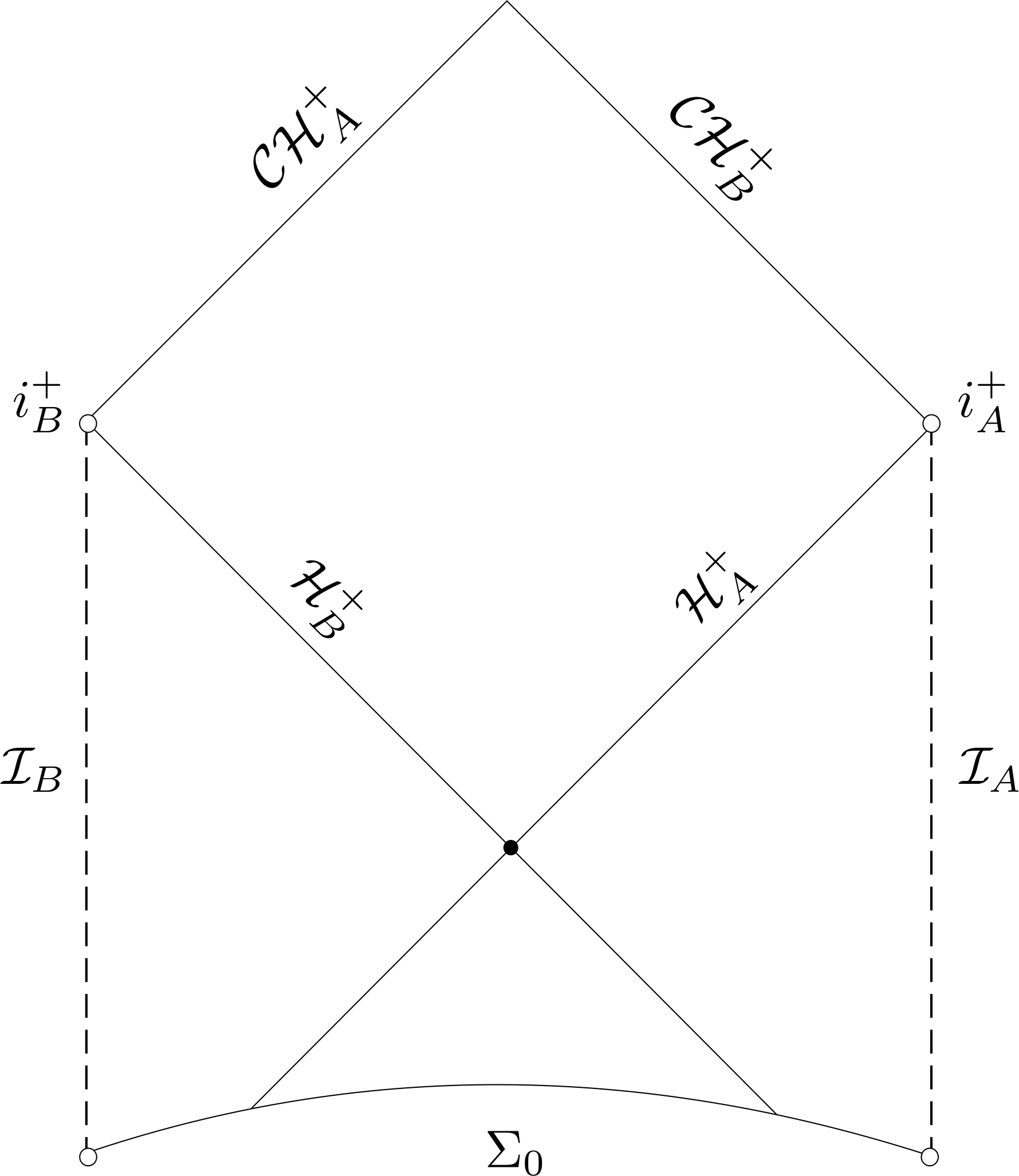}
			
		\end{center}

		\caption{Penrose diagram of the Reissner--Nordstr\"{o}m-AdS spacetime}
		\label{Fig.RADS}
	\end{figure}  
	
	In complete contrast, the terminal boundary of our charged hairy black holes $g_{\ep}$ from Theorem~\ref{thm.intro} is not a Cauchy horizon, but a spacelike singularity $\mathcal{S}$ instead (compared with Figure \ref{Fig3}). It would be interesting to construct a static, asymptotically AdS black hole exterior corresponding to $g_{\ep}$ and carry out the computations of standard quantum quantities, such as the entanglement entropy, in this setting. The metric asymptotics we derive (see Theorem~\ref{maintheorem} and below) will likely be crucial for this task.

	Another fundamental problem is the understanding of quantum effects near the singularity located at the terminal boundary of a black hole. As it turns out, the interior of asymptotically AdS black holes provides a simplified model to understand these effects; motivated by these considerations, Hartnoll et al. \cite{numerics.uncharged,numerics.RN,numerics.charged} studied numerically the interior of hairy black holes for various charged/uncharged matter models, and discovered a wealth of singularities (see Section~\ref{comparison.section}).  In particular, charged hairy black holes with  asymptotically AdS asymptotics are  studied in the numerical work \cite{numerics.RN}: The interior part of these black holes corresponds to  $g_{\ep}$ from Theorem~\ref{thm.intro}, see Section~\ref{hartnoll.section} for details.
	
	Lastly, we mention a soft argument in \cite{numerics.RN} ruling out \emph{smooth} Cauchy horizons (in particular, assuming a finite scalar field $\phi$), though in principle the result in \cite{numerics.RN} would still be consistent with having a singular Cauchy horizon \footnote{Note that in other contexts, singular Cauchy horizons exist due to mass inflation, see \cite{Moi,Moi4} and the discussion in Section~\ref{AF.comparision}.} (where the scalar field $\phi$ or its derivatives blow up for instance, even if the metric itself is smooth).

	\subsection{Almost formation of a Cauchy horizon and stability for intermediate times} \label{RN.stab.intro}
	
	For the purpose of this discussion, we define a time coordinate $s$ on $g_{\ep}$ by \ref{gauge}. On Reissner--Nordstr\"{o}m-(dS/AdS),\\ $s$ coincides with the $r^*$-tortoise coordinate: in particular, $s=r^* \in \RR$ and $\{s=+\infty\}$ is the Cauchy horizon, $\{s=-\infty\}$ is the event horizon. The statements  (Theorem~\ref{maintheorem} and Theorem~\ref{convergence.theorem}) we prove can be summarized as: \begin{enumerate}
		
		\item \label{AF1} the spacetime $\mathcal{M}_{\ep}$ corresponds to $s \in (-\infty,s_{\infty}(\ep))$, ending at $\mathcal{S}=\{r=0\}=\{s_{\infty}(\ep)\}$ and $s_{\infty}(\ep) \simeq \ep^{-2}<+\infty$. 
		\item   \label{AF2} $g_{\ep}$ is uniformly close to the Reissner--Nordstr\"{o}m-(dS/AdS)  metric \eqref{RN} in amplitude for all $s \ll \ep^{-2}$.
		\item The derivatives of the $g_{\ep}$ are uniformly close to the derivatives of $g_{RN}$ for all $s \ll \log(\ep^{-1})$.
		
		\item The derivatives of the $g_{\ep}$, in particular the Hawking mass, become arbitrarily large for  $\log(\ep^{-1}) \ll s \ls \ep^{-2}$.
		\item  \label{AF4}  $g_{\ep}$ converges in the sense of distributions $\mathcal{D}'(\RR_s)$ 
		to the Reissner--Nordstr\"{o}m-(dS/AdS)  metric \eqref{RN} as $\ep \rightarrow 0$.
		\item  \label{AF5}  $g_{\ep}$ does \emph{not} converge uniformly to \eqref{RN}, or in any Lebesgue space $L^p(\RR_s)$ for $p\geq 1$.
	\end{enumerate}
	We interpret statement \ref{AF2} as the \textit{almost formation} of a Cauchy horizon, as $s$ is allowed to be larger than any $\ep$-independent constant, which, on Reissner--Nordstr\"{o}m-(dS/AdS), corresponds to being close to the Cauchy horizon.
	
	Note that we roughly have three regions (see Section~\ref{strategy.section} for details): $s \ll \log(\ep^{-1})$, where Cauchy stability prevails in the $C^1$ norm; $\log(\ep^{-1})\ll s \ll \ep^{-2}$, where $C^0$ stability still holds, but a $C^1$ blow-up, characteristic of the blue-shift instability, occurs; and lastly a region $s \simeq \ep^{-2}$ where the spacelike singularity forms, see Section~\ref{Kasner.intro.section}.
	
	\begin{rmk}Note that the convergence in distributions from statement \ref{AF4} is \textit{specifically expressed} in the $s$ coordinate system (defined in \eqref{gauge}). However, in the  Kasner-type coordinate $\tilde{\tau}$ (see \eqref{Kasner.intro.eq}), the metric converges in $L^p_{\tilde{\tau}}$ to the Minkowski metric  in the $\ep \rightarrow 0$ limit. 
	\end{rmk}
	
	\subsection{Violent nonlinear collapse and strength of the spacelike singularity} \label{violent.section}
	
	\textit{Despite} the weak stability up to large intermediate times of Reissner--Nordstr\"{o}m-(dS/AdS), a spacelike singularity forms at late times $s \simeq \ep^{-2}$. We discuss the strength of this singularity, and compare it to the Schwarzschild metric: \begin{equation} \label{gS} \tag{$g_S$}
		g_{S}:=-(1-\frac{2M}{r}) dt^2+ (1-\frac{2M}{r})^{-1} dr^2 +r^2[ d\theta^{2}+\sin(\theta)^{2}d \varphi^{2}].
	\end{equation} On the Schwarzschild metric \eqref{gS}, $r$ is also the area-radius  and the Hawking mass is constant and equal to $M>0$. We argue that the singularity $\mathcal{S}=\{r=0\}$ on $g_{\ep}$ is ``more singular'' than $\{r=0\}$ on \eqref{gS}, which justifies our denomination \textit{violent nonlinear collapse} (we explain the origin of ``nonlinear'' later in the section) for four reasons: \begin{enumerate}
		\item The Hawking mass $\rho$ on $g_{\ep}$ blows up as a power-law $r^{-b_-^{-2} \cdot \ep^{-2}+\color{black} O(\log(\epsilon^{-1})) \color{black}}$ at $r=0$, while it is constant on \eqref{gS}.
		
		\item The Kretchsmann scalar $\mathfrak{K}:= R_{\alpha\beta \gamma \delta}R^{\alpha\beta \gamma \delta}$ blows up like $r^{-2 b_-^2 \cdot \ep^{-2}+\color{black} O(\log(\epsilon^{-1})) \color{black}}$ while it blows up like $r^{-6}$ on \eqref{gS}.
		\item The quantity $r^6 \rho^{-2} \mathfrak{K} \simeq \ep^{-4}$ is finite on $g_{\ep}$ (as on \eqref{gS}) but becomes arbitrarily large for arbitrarily small $|\ep|$.
		
		\item The spacetime volume of  $\{r \geq r_0,\ 0 \leq t \leq 1\}$ is of order $r_0^{ b_-^{-2} \cdot \ep^{-2}+ O(\log(\epsilon^{-1})) \color{black}}$ for small $r_0$, as opposed to $r^3_0$ on \eqref{gS}.
	\end{enumerate} To go beyond the heuristics of this section, we refer to Theorem~\ref{maintheorem} for the precise estimates that we prove.  \\

	It is also interesting to compare $g_{\ep}$ to spherically symmetric solutions in Christodoulou's model (i.e.\ \eqref{E1}-\eqref{E5} with $F\equiv 0$, $m^2=\Lambda=0$) constructed in \cite{Christo1,Christo2}; such spacetimes converge to \eqref{gS} towards $i^+$ and admit a spacelike singularity. It has been proven recently   in \cite{DejanAn,AnZhang} that, on these spacetimes, $\mathfrak{K}$ and $\rho$ blow-up like $r^{-\alpha(t)}$ with a time-dependent rate $\alpha(t)$ converging to the Schwarzschild values as $t \rightarrow+\infty$, respectively $6$ and $0$.

	In view of this, the truly surprising fact is not the power-law blow up of $\rho$, but the $\ep^{-2}$-rate, which tends to $+\infty$ as $|\ep| \rightarrow 0$. Naively, one may expect that an $\ep$-perturbation of \eqref{RN} will give rise to $O(\ep)$ rates at the very least, allowing to recover \eqref{RN} in the (strong) limit  but we obtain a singular limit instead,  \textit{caused by the nonlinearity}.

	To understand, it is useful to look  at the linear wave equation on the Schwarzschild metric $\Box_{g_S} \psi=0$. Solutions blow-up  at $r=0$ (see \cite{GregJan}) as: $$ \psi(r,t) \approx A(t) \cdot \log(r^{-1}).$$ For constant data $\psi \equiv \ep$ on the event horizon, we get $A(t)= C(M) \cdot \ep$, consistent with the linearity of the equation.
	
	However, in the evolution $(g_{\ep},\phi_{\ep}=\ep)$ of the data of Theorem~\ref{thm.intro} we have, near the singularity $\mathcal{S}=\{r=0\}$: \begin{equation} \label{intro.violent1}
		\phi_{\ep}(r) \simeq \ep^{-1} \cdot \log(r^{-1}),
	\end{equation}a radically different rate! The main explanation behind \eqref{intro.violent1} is a nonlinear estimate that we prove on $g_{\ep}$: \begin{equation} \label{intro.violent2}
		\frac{dr}{ds} \simeq -\ep^2\cdot  r^{-1},
	\end{equation} to be compared with  $\frac{dr}{ds} \approx -M \cdot r^{-1}$ on \eqref{gS} (with the same $s$ defined by \eqref{gauge}). The striking fact, is that \eqref{intro.violent2} is a remnant of the Reissner--Nordstr\"{o}m Cauchy horizon stability for intermediate times (the phenomenon we explained in Section~\ref{RN.stab.intro})! This explains our claim that the  \textit{violent nonlinear collapse} phenomenon only occurs for nonlinear perturbations of charged black holes, and not perturbations of Schwarzschild. The numerics of \cite{numerics.uncharged}, where data are set as in Theorem~\ref{thm.intro} replacing \eqref{RN} by \eqref{gS} (equivalently assuming $F\equiv 0$), tend to confirm this expectation.

	Further details on this nonlinear collapse dynamics and the role of the different phenomena are given in Section~\ref{strategy.section}.
	
	\subsection{Kasner-like behavior and convergence in  $L^p$ spaces} \label{Kasner.intro.section} We recall the Kasner metric, a solution of the Einstein-scalar-field system i.e.\ \eqref{E1}-\eqref{E5} with $F \equiv 0$, $m^2=0$: \begin{equation} \label{Kasner.exact}
		g_{Kas}= -d\tilde{\tau}^2 +   \tilde{\tau}^{2 p_{rad}}  d\rho^2  +    \tilde{\tau}^{2 p_{x}} dx^2+ \tilde{\tau}^{2 p_{y}}dy^2, \hskip 5 mm \phi(\tilde{\tau})= p_{\phi} \cdot \log(\tilde{\tau}^{-1});
	\end{equation} \begin{equation} \label{p.Kasner}
		p_{rad}+ p_x+ p_y =1, \hskip 10 mm p_{rad}^2+ p_x^2+ p_y^2+ 2 p_{\phi}^2 =1.
	\end{equation} The Kasner solution has been used in cosmology as a model of anisotropic Big Bang singularity, in the presence of a so-called stiff fluid (here the scalar field $\phi$). In a recent breakthrough, Fournodavlos--Rodnianski--Speck \cite{FournoRodSpeck} proved that, under a sub-criticality condition on the exponents from \eqref{p.Kasner}, the family \eqref{Kasner.exact} is \textit{stable against perturbations}, with no symmetry assumptions and that the near-singularity dynamics are dominated  by monotonic blow-up.

	\begin{rmk}
		Relation \ref{p.Kasner} may seem unfamiliar, because of the factor $2$ in front of $p_{\phi}^2$. This is due to our definition in \eqref{E3} (adopted in the majority of works in the black hole interior \cite{Christo1,Christo2,MihalisPHD,Kommemi,JonathanStab} ...), as opposed to the standard definition in cosmology where the factor $2$ is absent in \eqref{E3}, see for instance \cite{FournoRodSpeck}. 
	\end{rmk}
	
	\paragraph{Anisotropic Kantowski–Sachs  metrics}
	
	We study \color{black} spherically symmetric \color{black} metrics of the form \begin{equation} \label{Lapse}
		g_{\ep}= -\Omega^2(s) ( ds^2 - dt^2)+ r^2(s) \color{black} d\sigma_{\mathbb{S}^2},
	\end{equation} where $d\sigma_{\mathbb{S}^2}$ is the standard metric on $\mathbb{S}^2$ and $r$ is a scalar function called \textit{the area-radius}, which is geometrically well-defined \color{black} in the spherically-symmetric class. \color{black}   Metrics of the form \eqref{Lapse} \color{black}  are called Kantowski–Sachs cosmologies and are used to model an anisotropic but spatially-homogeneous universe.
	
 More generally, is also possible to consider spatially-homogeneous cosmologies with a 2-surface (other than $\mathbb{S}^2$) \color{black} of constant  curvature $k \in \{-1,0,1\}$ ($\mathbb{H}^2$, $\mathbb{T}^2$, $\mathbb{S}^2$ respectively). \color{black} Our main result will also apply to these cases, with only minor modifications in the proof. We will, however, not pursue that route and stick to Kantowski–Sachs cosmologies.  \color{black}  
	
	\paragraph{Monotonic blow-up due to the scalar field} Our collapse from an almost Cauchy horizon to a spacelike singularity is caused by purely nonlinear dynamics driven by monotonicity. Specifically, the Einstein equations (\eqref{Omegastat}) give a relation where the scalar field dominates and acts as a monotonic source for the lapse $\Omega^2$ from \eqref{Lapse}:
	
	\begin{equation} \label{monotonic1}
		\frac{d^2}{ds^2} \left[\log(\Omega^2)\right](s ) = - 2 (\frac{d\phi}{ds})^2 + ...
	\end{equation}
	
	As we shall explain in Section~\ref{strategy.section},  \eqref{monotonic1} gives rise to a Kasner-type behavior, as given right below, once the lower order terms $...$ have been quantitatively controlled. We emphasize however that the sharp asymptotics of the $- 2 (\frac{d\phi}{ds})^2$ are necessary for the proof (not only its sign!).   \paragraph{Uniform estimates near a (quasi)-Kasner metric}

	Note that, as stated, the Kasner exponents are constants, but there exists generalization of \eqref{Kasner.exact} where the exponents are allowed to depend on $x$, $y$ and $\rho$ (but not on $\tilde{\tau}$). 
	Formally, our metric $g_{\ep}$, written as \eqref{Kasner.intro.eq}, corresponds (up to errors that converge uniformly to zero as $\ep \rightarrow 0$) to a Kasner metric of exponents $p_{rad}= 1-4b_-^2 \cdot \ep^2$,  $p_{x}= p_{y}= 2b_-^2 \cdot \ep^2$, $p_{\phi}= 2b_- \cdot \ep$ (up to a $O(\ep^3)$ error), which satisfy \eqref{p.Kasner} to order $O(\ep^4)$. Nevertheless, the $O(\ep^3)$ error are actually time-dependent (see Theorem~\ref{metric.Kasner.unif}), so \eqref{Kasner.intro.eq} is not necessarily an exact Kasner metric (although the time-dependence in the Kasner exponents is lower order in $\ep$). Our uniform estimates are only valid in a Kasner-time coordinate  $\tilde{\tau}$ (thus, do not contradict the convergence in distribution to Reissner--Nordstr\"{o}m-(dS/AdS) in the other time coordinate $s$ from Section~\ref{RN.stab.intro}) defined so that the metric takes the following product form: \begin{equation} \label{Kasner.time.def}
		g_{\ep} = -d\tilde{\tau}^2 + \underbrace{ \Omega^2(\tilde{\tau}) dt^2+ r^2(\tilde{\tau}) d\sigma_{\mathbb{S}^2}}_{\breve{g}}\color{black},
	\end{equation} where $\breve{g}$ does not involve $d\tilde{\tau}$. This form is often called the synchronous gauge (i.e.\ unitary lapse with zero shift).

	\paragraph{Comparison between the proper time and the area-radius} The spacelike singularity $\mathcal{S}=\{r=0\}$ corresponds to $\{\tilde{\tau}=0\}$. Another consequence of the almost stability of the Reissner--Nordstr\"{o}m-(dS/AdS) Cauchy horizon  is an important discrepancy between the area-radius $r$ (from \eqref{Lapse}) and the proper time $\tilde{\tau}$ (from \eqref{Kasner.time.def})
	\begin{equation*}
		\tilde{\tau}(r) \approx  (\frac{r}{r_-})^{ \frac{\ep^{-2}}{2 b_-^2} + O(\log(\ep^{-1}))},\ \text{or equivalently}\ r(\tilde{\tau}) \approx r_-(M,e,\Lambda) \cdot \tilde{\tau}^{2b_-^2 \ep^2+ O(\ep^3)}.
	\end{equation*}
	The above relation also explains immediately how we find Kasner exponents of the form $(1-O(\ep^2),\ O(\ep^2),\ O(\ep^2))$.

	\paragraph{Convergence to Minkowski in the $L^p_{\tilde{\tau}} $ \color{black} norm}
	The particular case where $p_{rad}=1$, $p_x=p_y=p_{\phi}=0$ in \ref{Kasner.exact}  corresponds (locally) to the Minkowski metric. If $(p_{rad},p_x,p_y,p_{\phi})=(1+\alpha_{rad}(\ep),\ \alpha_{x}(\ep),\ \alpha_{y}(\ep),\ \alpha_{\phi}(\ep))$, and $\alpha_{rad}(\ep)$, $\alpha_{x}(\ep)$, $\alpha_{y}(\ep)$, $\alpha_{\phi}(\ep)$ tend to $0$ as $\ep \rightarrow 0$, the corresponding Kasner metric \eqref{Kasner.exact} converges in $L^p([0,\ep^4]_{\tilde{\tau}})$ to Minkowski, for any $p < +\infty$; however, the convergence is not uniform. We will also show that $g_{\ep}$ converges to Minkowski as well in $L^p([0, \ep^4]_{\tilde{\tau}})$ norm (in particular, it follows from  Theorem~\ref{Kasner.thm}).  It is important to note that this convergence of the metric components expressed using $ \tilde{\tau}$ coordinates (proper time) is consistent with the convergence of the metric components in coordinates $s$ to another limit (namely the Reissner--Nordstr\"{o}m interior) mentioned in Section~\ref{RN.stab.intro}.

	To summarize our analysis from Theorem~\ref{Kasner.thm} gives us\color{black}  $$ (g_{\ep})_{\alpha \beta}= g_{Kas} + \mathfrak{E}^{unif}_{\alpha \beta}=m+ E^K_{\alpha \beta} $$ where $g_{Kas}$ is formally \eqref{Kasner.exact} with Kasner parameters  $(p_{rad}, p_{x}, p_{y}, p_{\phi} )= (1-4b_-^2 \cdot \ep^2, 2b_-^2 \cdot \ep^2, 2b_-^2 \cdot \ep^2, 2b_- \cdot \ep)+O(\ep^3)$,   $m$ is the Minkowski metric,  $\mathfrak{E}^{unif}_{\alpha \beta}$ converges to $0$ in $L^{\infty}([0,\ep^4]_{\tilde{\tau}})$ and $E^{K}_{\alpha \beta}$ converges \footnote{The convergence is implicitly understood to occur for the components of $\mathfrak{E}^{unif}_{\alpha \beta}$, $E^{K}_{\alpha \beta}$  in a Kasner-type frame, see Theorem~\ref{Kasner.thm}.} to $0$ in $L^p([0,\ep^4]_{\tilde{\tau}})$ for any $1<p<+\infty$\color{black}.

		\paragraph{Spatially-homogeneous perturbation of a Schwarzschild-(dS/AdS) spacetime} \label{Schwarz.section}
		
		In Theorem~\ref{thm.intro}, we assume that the spacetime $g_{\ep=0}$ from which $g_{\ep}$ bifurcates is a Reissner--Nordstr\"{o}m-(dS/AdS) metric, with a non-trivial Maxwell field. A natural question is to ask what happens in the limiting case where the Maxwell field is zero, meaning when $g_{\ep=0}$ is the Schwarzschild-(dS/AdS) metric \eqref{gS}. Note that \eqref{gS}, under a coordinate change,  has the same asymptotics (as $r\rightarrow 0$) as  \eqref{Kasner.exact}    with $\phi \equiv 0$ and exponents $(p_{rad},p_x,p_y)=(-\frac{1}{3},\frac{2}{3}, \frac{2}{3} )$. In the absence of a Maxwell field or angular momentum, there is no ``mitigating factor'' to collapse\footnote{In contrast to the data considered in Theorem~\ref{thm.intro}, for which the emergence of a spacelike singularity at late time is more surprising.}, and it is reasonable to expect that the metric will admit a spacelike singularity, and that as $\ep \rightarrow 0$, the space-time would converge in a weak sense to the unperturbed \eqref{gS}.
		
		\begin{rmk} If this expectation is true, then in the uncharged case, the stationary Schwarzschild-AdS black hole has a stable singularity $\mathcal{S}$ with respect to the hairy non-decaying perturbations $\phi$. Theorem~\ref{thm.intro} showed that in the charged case, in contrast, the stationary Reissner--Nordstr\"{o}m-AdS interior solution   is not stable, since the Cauchy horizon of the black hole is  destroyed by arbitrarily small perturbations and replaced by a spacelike singularity.
			
		\end{rmk}

		While the uncharged analogue of $g_{\ep}$ is not covered in our work, numerics \cite{numerics.uncharged} support the above expectations. They also suggest the following important differences between $F \equiv 0$ and the charged case from Theorem~\ref{thm.intro}: \begin{enumerate}[i.]
			\item \label{cont1}The Kasner exponents are bounded away from $0$ (with respect to $\ep$) in the uncharged case, and $p_{rad}<0$.
			
			\item  \label{cont2} As a consequence,  uncharged collapse is not \textit{violent} (in the sense of Section~\ref{violent.section}), contrary to the charged case.
			
			\item  \label{cont3}$\phi$ with data $\ep$  blows up at the singularity as $\phi \simeq \ep \cdot \log(r^{-1})$ as in the linear theory on \eqref{gS} (see Section~\ref{violent.section}).
			
			\item  \label{cont4} The background metric \eqref{gS} is Cauchy-stable for small $\ep$, and there no bifurcation in the stability analysis.
		\end{enumerate}
		We recall that points \ref{cont1}, \ref{cont2}, \ref{cont3}, \ref{cont4} are in sharp contrast with Theorem~\ref{thm.intro} (the charged case), see Section~\ref{RN.stab.intro} and \ref{violent.section}.

		\subsection{Comparison with other matter models and other results in the interior} \label{comparison.section}
		
		In this section, we mention prior results addressing the interior of a black hole that is not converging to Schwarzchild, Reissner--Nordstr\"{o}m, or Kerr. Most of these works are based on either heuristics, or numerics.
		\subsubsection{Numerics on spatially-homogeneous perturbations of the Reissner--Nordstr\"{o}m-AdS interior} \label{hartnoll.section}
		The metric $g_{\ep}$ from Theorem~\ref{thm.intro} was previously investigated in an interesting numerical work \cite{numerics.RN}. The numerical results corroborate entirely Theorem~\ref{thm.intro}; they also  consider the case of non-small perturbations (which are not covered by Theorem~\ref{thm.intro}) and suggest, that even in this case, a Kasner-like singularity forms. \cite{numerics.RN} was preceded by numerics in the uncharged case \cite{numerics.uncharged} (already mentioned in Section~\ref{Kasner.intro.section}) and succeeded by numerics \cite{numerics.charged} studying spatially-homogeneous perturbations of Reissner--Nordstr\"{o}m-AdS in which \textit{the scalar field itself carries a charge} (same model as in \cite{Moi}-\cite{Moi4}). In the charged scalar field case, these numerics suggest intriguing intermediate time oscillations impacting the late-time Kasner singularity, see Section~\ref{open.problems}.
		\subsubsection{Spatially-homogeneous Einstein--Yang--Mills--Higgs interior solutions} \label{EYMH}
		
		The Einstein--$SU(2)$-(magnetic)-Yang--Mills have long been known to admit (non-singular) particle-like asymptotically flat solutions \cite{Bartnick}, \cite{P7}, which are moreover static and spherically symmetric. An equally striking result is the mathematical construction  \cite{P8} of a discrete, infinite family of (asymptotically flat) so-called  \textit{colored black holes} which thus falsify the no-hair conjecture for the $SU(2)$-Yang--Mills matter model (see also \cite{Bizon} for pioneering numerics).
		
		The interior of such black holes is spatially-homogeneous with topology $\RR \times \mathbb{S}^2$,  i.e.\  corresponds to a Kantowski--Sachs cosmology analogous to our spacetime from Theorem~\ref{thm.intro}. Numerical studies in the interior of such black holes \cite{P4}, \cite{P9} have highlighted oscillations and sophisticated dynamics,  in which power law mass inflation/Kasner-type behavior \emph{alternates} with the almost formation of a Cauchy horizon (analogous to the one discussed in Section~\ref{RN.stab.intro}).

		The Einstein--$SU(2)$-Yang--Mills-\emph{Higgs} model, in which a massive scalar field analogous to \eqref{E5} is added, has also been studied numerically  \cite{P2,P4,P8}. These studies suggest that the dynamics are drastically changed by the Higgs field which suppresses the oscillations, and the singularity is spacelike with a power-law mass inflation: \begin{equation} \label{lambda.YM}
			\phi(r) \approx \log(r^{-\lambda}), \hskip 5 mm \rho(r) \approx r^{-\lambda},
		\end{equation}  for some $\lambda>1$, which is consistent the behavior  (\eqref{rho.intro}, \eqref{intro.violent1}) of  $g_{\ep}$ in our model. However, in these numerics, it is not clear how the constant $\lambda$ relates to the size of the initial data, and whether the \textit{violent nonlinear collapse scenario} that we put forth applies or not (recall that in our case, data proportional to $\ep$ give a rate $\lambda(\ep) \simeq \ep^{-2}$).

		\subsection{Directions for further studies} \label{open.problems}
		In this section, we provide open problems that our new Theorem~\ref{thm.intro} has prompted, and their underlying motivation.
		\subsubsection{Extensions of Theorem~\ref{thm.intro} for the same matter model}
		We define the temperature $T$ of a  Reissner--Nordstr\"{o}m black hole as $T(M,e,\Lambda):=2K_+$ the surface gravity of the event horizon.  $T(M,e,\Lambda)>0$ for sub-extremal parameters, and $T(M,e,\Lambda)=0$ corresponds to an extremal Reissner--Nordstr\"{o}m black hole.
		Theorem~\ref{thm.intro} applies to perturbations of a sub-extremal Reissner--Nordstr\"{o}m black hole with fixed parameters  $(M,e,\Lambda)$ hence $T\simeq 1$. Choosing $\ep$-dependent parameters $(M,e,\Lambda)$ allows to formulate:
		
		\begin{open}
			Study spacetimes as in Theorem~\ref{thm.intro} assuming that the parameters depend on $\ep$ and are extremal at the limit in the sense that $T(M_{\ep},e_{\ep},\Lambda_{\ep}) \rightarrow 0$ as $\ep \rightarrow 0$.
		\end{open}
		
		The numerics \cite{numerics.RN} suggest that the important scaling is $\frac{|\phi|_{|\mathcal{H}^+}}{T}$, meaning that our spacetime $g_{\ep}$, for which $|\phi|_{|\mathcal{H}^+} = \ep$ and $T \simeq 1$, should be similar to a spacetime with data $|\phi|_{|\mathcal{H}^+} \simeq  1$ and $T \simeq \ep^{-1}$. If this is true, then the spacetime with data $|\phi|_{|\mathcal{H}^+} \simeq  \ep^2$ and $T \simeq \ep$ (which is extremal at the limit $\ep \rightarrow 0$) behaves in the same way as $g_{\ep}$.

		Another natural possible direction to extend Theorem~\ref{thm.intro} is to relax the smallness assumption on the data $\phi$:
		\begin{open}
			Study spacetimes as in Theorem~\ref{thm.intro} for large data i.e.\ without assuming that $\ep$ is small.
		\end{open}
		
		We emphasize that the resolution of this problem does not follow immediately from the techniques of Theorem~\ref{thm.intro}. Nevertheless, numerics \cite{numerics.RN} suggest that, even in the case of large perturbations, the singularity is still spacelike.

		Lastly, recall that Theorem~\ref{thm.intro} only applies to \textit{almost every} but not all parameters $(M,e,\Lambda,m^2)$. This is because, for discrete values of  mass $m^2$ (so-called non-resonant masses, see Section~\ref{linear.section}), $b_-(M,e,\Lambda,m^2)=0$ and $\phi$ ``degenerates in the linear theory''. It would be interesting to see understand the singularity of $g_{\ep}$ in this case:
		
		\begin{open}
			Study spacetimes as in Theorem~\ref{thm.intro} for exceptional $(M,e,\Lambda,m^2)$ such that $b_-(M,e,\Lambda,m^2)=0$.
		\end{open}
		
		We emphasize that, from the point of view of genericity, these exceptional values, which form a set of zero measure (in fact, the zero set of a holomorphic function on $\mathbb{C}^4$) are most likely irrelevant. The case $b_-=0$ is discussed in Section 3.2 in \cite{numerics.RN}, and it is argued numerically  that the black hole terminal boundary is a Cauchy horizon  for at least countably many values of the parameters. This is consistent with Theorem~\ref{thm.intro}, and would indicate that the statement for ``almost all parameters'' in Theorem~\ref{thm.intro} cannot be improved into ``for all parameters''.

		\subsubsection{Static black hole exteriors with AdS boundary conditions}
		
		As we explained in Section~\ref{ADS.section}, our spacetime  $(\mathcal{M}_{\ep},g_{\ep})$ finds potential applications to the AdS/CFT theory. The idea is to prescribe boundary data on an asymptotically AdS boundary in the black hole exterior, which we must also construct. This construction should be much simpler in the exterior, in the absence of any singularity (as opposed to the interior where the question of stability is more subtle, and varies at different time scales, see Section~\ref{RN.stab.intro} and \ref{violent.section}).
		Based on the numerics from \cite{numerics.RN}, the following problem seems reasonable to formulate and accessible:\color{black}
		\begin{open} \label{open4}
			Construct a static two-ended black hole exterior with appropriate data on the AdS boundary, such that the spacetime $(\mathcal{M}_{\ep},g_{\ep})$ in Theorem~\ref{thm.intro} is the corresponding black hole interior, as depicted in Figure \ref{Fig_Two}.
		\end{open}
		Numerics \cite{numerics.uncharged,numerics.RN} indicate that indeed, the metric $g_{\ep}$ is the black hole interior region corresponding to an asymptotically AdS black hole with the following asymptotics towards the AdS boundary $\{r=\infty\}$: $$ \phi(r) = \frac{\delta_\ep}{r^{\frac{3}{2}-\sqrt{\frac{9}{4}+m^2}}} + o(\frac{1}{r^{\frac{3}{2}-\sqrt{\frac{9}{4}+m^2}}}), \text{ as } r \rightarrow \infty$$ where $\delta_\ep \neq 0$ is small constant and $m^2<0$. This corresponds to a choice of Neumann boundary conditions, see \cite{Claude}.

		\begin{figure} 		
			
			\begin{center}
				
				\includegraphics[width=91 mm, height=70 mm]{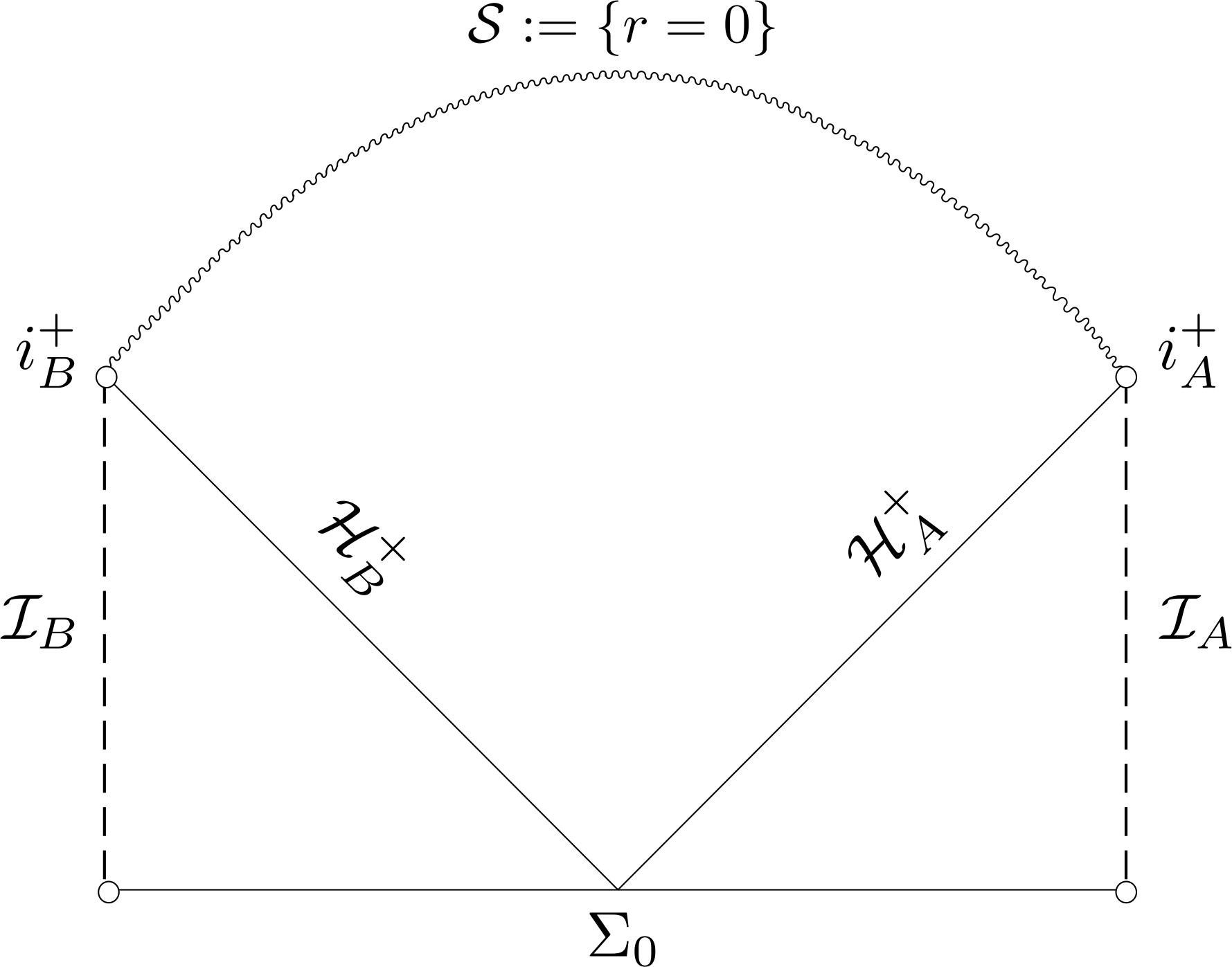}

			\end{center}

			\caption{Penrose diagram of the asymptotically AdS two-ended black hole  resolving Open Problem \ref{open4}}
			\label{Fig_Two}
		\end{figure}

		\subsubsection{Stability of $g_{\ep}$  with respect to non-homogeneous perturbations}
		After solving Open Problem \ref{open4} and obtaining the metric $g_{\ep}$ covering both the black hole interior and exterior with AdS boundary conditions, a natural (and more demanding) problem  to ask is the question of stability of $g_{\ep}$:
		\begin{open} \label{openv}
			Study the stability of $g_{\ep}$  against small non-static perturbations for the initial/boundary value problem.
		\end{open}
		
		Nonlinear stability problems with AdS boundary conditions are notoriously difficult, despite recent remarkable progress in spherical symmetry \cite{G1,G2,G3} (in the direction of instability, however). A more accessible question in the direction of Open Problem \ref{openv} is to first study \emph{only perturbations of the interior}, for  instance concrete data of the following form on the event horizon, with $f(v,\theta,\varphi) \rightarrow 0$ as $v\rightarrow +\infty$: \begin{equation} \label{pertb}
			\phi_{|\mathcal{H}^+}(v,\theta,\varphi)= \ep+ \ep^{2} \cdot f(v,\theta,\varphi).
		\end{equation} 
		
		In particular, in the case where $f(v,\theta,\varphi)=f(v)$ is spherically symmetric, it seems reasonable to expect that the quantitative methods of the present paper can be generalized to yield the stability of the interior region of $g_{\ep}$, at least if $f$ converges sufficiently fast to $0$. The non spherically-symmetric is expectedly more difficult, even though the recent breakthrough \cite{FournoRodSpeck}  proving the stability of Kasner for positive exponents (hence including the exponents that we obtain, see Section~\ref{Kasner.intro.section}) could help in this direction, at least close enough to the singularity.

		We must also point out an important connection: solving Open Problem \ref{openv}, even in spherical symmetry, will yield valuable insight on spacelike singularities even inside the \emph{non-hairy black holes} of Figure~\ref{Fig.relax} that converge to Reissner--Nordstr\"{o}m. First, recall that, despite Theorem~\ref{CH.thm} asserting the existence of a non-empty Cauchy horizon $\mathcal{CH}^+$ for such black holes, it is conjectured that there additionally exists  a spacelike singularity $\mathcal{S}=\{r=0\}$ as soon as the scalar field amplitude is large enough in the two-ended\footnote{In the more realistic one-ended case corresponding to gravitational collapse, the situation is even more drastic: a singularity $\mathcal{S}=\{r=0\}$ must always exist \cite{r=0}. Studying this case, however, requires the scalar field to be charged, and thus to generalize Theorem~\ref{maintheorem}  to  the charged model. Solving  Open Problem~\ref{openv} in this context will thus yield valuable insights on spacelike singularities in gravitational collapse.} case \cite{Mihalisnospacelike}.  
		
		By the domain of dependence, such a spacelike singularity will emerge from event horizon data which are locally identical to the data allowed  by Open Problem~\ref{openv}. Thus, it is enough to glue\footnote{However, the null constraint forbid gluing scalar field data $\phi \equiv \ep$ on a compact set to the decaying-in-time tail needed for Theorem~\ref{CH.thm}.} event horizon data corresponding to Open Problem~\ref{openv} on a compact set to decaying-in-time asymptotic tails for $\phi$ as needed for Theorem~\ref{CH.thm} to construct a large class of two-ended black holes with both a spacelike singularity $\mathcal{S}=\{r=0\}$ quantitatively described by Theorem~\ref{maintheorem} (more precisely, its analogue upon a successful resolution of Open Problem~\ref{openv}) and a null Cauchy horizon $\mathcal{CH}^+$.

		\subsubsection{Rotating hairy black holes}
		
		A remarkable mathematical construction of a one-parameter family of rotating hairy black hole exteriors namely stationary, two-ended asymptotically flat solutions of the Einstein--Klein--Gordon system \eqref{E1}-\eqref{E5} (with $F\equiv 0$) that bifurcate off the Kerr metric has been carried out by Chodosh--Shlapentokh-Rothman \cite{OtisYakov}. Nevertheless, the interior of such hairy black holes has never been studied, and it is not known whether a Cauchy horizon exists as for the Kerr metric, or if it is replaced by a spacelike singularity as for $g_{\ep}$. This prompts the following problem:
		\begin{open}
			Characterize the singularity inside the Chodosh--Shlapentokh-Rothman hairy black holes.
		\end{open}

		We plan to return to this problem in a future work. 
		
		\subsubsection{Other matter models}
		\paragraph{Models where the singularity is conjectured to be Kasner-like} 
		
		We already mentioned in detail in Section~\ref{Schwarz.section} the case of perturbations of Schwarzschild-(dS/AdS) and the main difference compared to the setting of Theorem~\ref{thm.intro}, together with the numerics \cite{numerics.uncharged} suggesting that the singularity is Kasner-like. This leads us to formulate the 
		\begin{open}
			Study spatially-homogeneous perturbations of  Schwarzschild-(dS/AdS) black holes.
		\end{open}

		As we explained in Section~\ref{EYMH}, the Einstein equations coupled with  the SU(2)-Yang--Mills--Higgs equations also admit spatially-homogeneous black hole interiors, and numerics \cite{P4,P2,P8} suggest a Kasner-like singularity:
		
		\begin{open}
			Study spatially-homogeneous solutions for the Einstein-SU(2)-Yang--Mills--Higgs equations.
		\end{open}
		
		\paragraph{Models where oscillations are conjectured}
		
		As we explained in Section~\ref{EYMH}, without a Higgs field, numerics suggest that the Einstein-SU(2)-Yang--Mills black holes admit a chaotic behavior in the interior  \cite{P4,P9}. We are hopeful that the methods we  developed in the present paper could also be adapted to address the following 
		\begin{open}
			Study spatially-homogeneous solutions for the Einstein-SU(2)-Yang--Mills equations.
		\end{open}
		
		More closely related to our work is the generalization of Theorem~\ref{thm.intro} with the same data but for a more general model where the scalar field is \textit{allowed to be charged}. In this case, the coupling with the Maxwell field is non trivial, and sourced by the charged scalar field. (This model was studied by the author \cite{Moi3Christoph,Moi4Christoph,Moi,Moi2,Moi4},  in a different setting where the scalar field decays to zero, instead of being a small constant (see Section~\ref{AF.comparision})). Numerics \cite{numerics.charged} suggest very interesting dynamics in a certain parameter range for this system, in which oscillations precede the formation of a Kasner-like singularity with $p_{rad}<0$ (different from our scenario, recall Section~\ref{Kasner.intro.section}) which is then inverted to a Kasner-like singularity with $p_{rad}>0$, a phenomenon the authors of \cite{numerics.charged} call ``Kasner-inversion''. \color{black} Interestingly, similar Kasner inversions also play an important role in the standard cosmological picture given by the BKL scenario \cite{BKL1,BKL2}. \color{black}
		\begin{open}
			Study data as in Theorem~\ref{thm.intro} for the Einstein--Maxwell-\emph{charged}--Klein--Gordon equations.
		\end{open}
		
		It would be extremely interesting to adapt the methods of the present paper to the setting of a charged scalar field, a problem we hope to return to in the future.

		\subsection{Strategy of the proof}\label{strategy.section}
		\subsubsection{Set-up of the problem and general description of the strategy} \label{setup.intro}
		The data described in Theorem~\ref{thm.intro} give rise to a \textit{spatially-homogeneous} spacetime $\mathcal{M}_{\ep}=\RR \times \mathbb{S}^2 \times (0,r_+(M,e,\Lambda)]_r$, which we can describe as a superposition of $\RR \times \mathbb{S}^2$ cylinders with variable area-radius $r$. The EMKG system \eqref{E1}-\eqref{E5} then reduces to a coupled system of nonlinear ODEs  (see Section~\ref{ODEsection}), that we analyze in physical space.
		
		One of the main ideas behind the proof is to divide the spacetime into regions of the form $\{ r_{inf}(\ep)< r < r_{\sup}(\ep)\}$ (or ``epochs'' since $r$ is a timelike coordinate in the interior). We will express these epochs using a time-coordinate $s$ (see \eqref{gauge} for a precise definition) which is monotonic with respect to $r$ (that is, $\frac{dr}{ds}<0$). $s \in (-\infty,s_{\infty}(\ep))$ can be thought of as a generalization of the standard $r^* \in \RR$ coordinate on Reissner--Nordstr\"{o}m. The data $\phi \equiv \ep \neq 0$ are posed on the event horizon $\mathcal{H}^+:=\{s=-\infty\}$ (Section~\ref{data.section}) and we split the spacetime into (see Figure \ref{Fig3})
		
		\begin{enumerate}
			\item the red-shift region $\mathcal{R}:= \{ -\infty < s \leq -\Delta(M,e,\Lambda,m^2) \}$, for $\Delta(M,e,\Lambda,m^2)>0$ sufficiently large.
			\item the no-shift region  $\mathcal{N}:= \{ -\Delta < s \leq \Delta_0(\ep) \}$,  $\Delta_0(\ep):= c(M,e,\Lambda,m^2) \cdot \log(\ep^{-1})$ for  $c(M,e,\Lambda,m^2)>0$.
			\item the early blue-shift region  $\mathcal{EB}:= \{ S''(M,e,\Lambda)< s \leq \s(\epsilon) \}$,  $\s(\epsilon):=\frac{\log(\nu(M,e,\Lambda,m^2) \cdot \epsilon^{-2})}{2|K_-|}$, where $\nu(M,e,\Lambda,m^2)>~0$ small enough and $S''(M,e,\Lambda)>0$ are fixed constants that will be determined in the proof\color{black}, and $2K_-(M,e,\Lambda)$ is the surface gravity of the Reissner--Nordstr\"{o}m-(dS/AdS) Cauchy horizon.
			\item the late blue-shift region  $\mathcal{LB}:= \{ \s(\epsilon) < s \leq \color{black} \delta_{\mathcal{C}}\color{black} \cdot \ep^{-1}\}$, where   $\delta_{\mathcal{C}}\color{black}(M,e,\Lambda,m^2)>0$. \color{black}
			\item the crushing region  $\mathcal{C}:= \{  4\s(\ep) < s < s_{\infty} \}$, and $s_{\infty}(\epsilon):= \frac{\ep^{-2}+O(\color{black}\log(\ep^{-1}))}{ 4|K_-| \cdot b_-^2}<+\infty$.  
			\item The spacelike singularity $ \mathcal{S}=\{r=0\}=\{s=s_{\infty}(\ep)\}$. 
			
		\end{enumerate}
		
		\begin{rmk}
			Note that the region  \color{black} $\mathcal{EB}$ overlaps with $\mathcal{N}$ (for small enough $\ep\color{black}>0$) and $\mathcal{LB}$ overlaps with $\mathcal{C}$. Nevertheless, our estimates in $\mathcal{EB}$ (respectively $\mathcal{LB}$) are more precise than those in $\mathcal{N}$  (respectively $\mathcal{C}$) on the overlap $\mathcal{EB}\cap \mathcal{N}$     (respectively $\mathcal{LB}\cap \mathcal{C}$). We consider  $\mathcal{C}$ in isolation because it is the largest region where the metric is Kasner-like.  \color{black} 
			
		\end{rmk}
		
		Each region is characterized by a particular dynamical regime \footnote{As a guiding principle, the early regions are in linear regime and the late ones are driven by the nonlinearity.} in which specific techniques apply, see below.

		To estimate the EMKG system, we use a standard bootstrap method (which is crucial, in view of the nonlinearities) and prove weighted estimates (drastically different in each region) integrating  along the time direction.
		
		In fact, we expect most of our estimates to generalize  in the non-homogeneous situations (see Section~\ref{open.problems}). We emphasize that our methods are entirely quantitative (and are inspired from those in \cite{Moi} which deal with a spacetime with fewer symmetries, albeit with simpler dynamics) and that, even though $g_{\ep}$ are analytic spacetimes, \textit{we do not rely on analyticity at any stage}.

		Before turning to the nonlinear estimates, we will mention in Section~\ref{linear.section} a prior linear result  proven in \cite{ChristophYakov} that is  useful in the  early blue-shift region $\mathcal{EB}$ (where the dynamics are still mostly linear) and the late blue-shift region $\mathcal{LB}$\color{black}.
		\subsubsection{Preliminary estimates on the corresponding  linearized solution} \label{linear.section}
		
		To capture the late time dynamics and the ultimate formation of the singularity (see Section~\ref{NLintro}), we require sharp estimates on the scalar field $\phi$. It is well-known that such sharp estimates cannot be obtained exclusively  through physical spaces methods (see the discussion in \cite{Moi3Christoph}).  Moreover, even for the \textit{linear wave equation} $\Box_{g_{RN}} \phil = m^2 \phil$ on a fixed Reissner--Nordstr\"{o}m-(dS/AdS), the late time behavior \textit{depends} on the parameters $(M,e,\Lambda,m^2)$ (see \cite{ChristophYakov}).

		In our specific setting, the linearized version of $\phi$ from Theorem~\ref{thm.intro} corresponds to $\phil^{(\ep)} = \ep \cdot  \phil^{(1)}$, where   $\phil^{(1)}$  is a solution of $\Box_{g_{RN}} \phil^{(1)} = m^2 \phil^{(1)}$ with constant data $ \phil^{(1)}=1$ on $\mathcal{H}^+$. $ \phil^{(1)}$ solves a linear radial ODE in $r^*$ and \begin{equation*}
			\phil^{(1)}(r^*)=  A(M,e,\Lambda,m^2) \cdot \tilde{v}_1(r^*)+  B(M,e,\Lambda,m^2)\cdot  \tilde{v}_2(r^*),\ \text{where } \tilde{v}_1(r^*) \simeq 1,\ \tilde{v}_2(r^*) \simeq r^* \text{ as } r^* \rightarrow +\infty.
		\end{equation*}
		
		One of the results of \cite{ChristophYakov} is to show that for all sub-extremal parameters $(M,e,\Lambda)$ and almost every masses $m^2$: \begin{equation} \label{Res}
			B(M,e,\Lambda,m^2) \neq 0.\tag{res}
		\end{equation} a condition that we will refer to as a \textit{scattering resonance}. When  \eqref{Res} is satisfied, $\phil^{(1)}(r^*) \approx r^*$ near the  Reissner--Nordstr\"{o}m-(dS/AdS) Cauchy horizon.  The exceptional set $D(M,e,\Lambda)\subset \RR$ of masses $m^2$ for which \eqref{Res} is not true is called the set of \textit{non-resonant masses} and it was shown in \cite{ChristophYakov} that $0 \in D(M,e,\Lambda)$, hence $D(M,e,\Lambda)$ is non-empty.

		In the early regions (specifically $\mathcal{R}$, $\mathcal{N}$ and $\mathcal{EB}$), we will estimate the difference between the actual solution $\phi$ and the linearized $\phil$ (see Section~\ref{RSNintro} for details) and show an estimate of the following form: for all $s \ls \log(\ep^{-1})$ \begin{equation} \label{diff.int} \tag{diff}
			|\color{black}\frac{d\phi}{ds}\color{black}(s) - \color{black}\frac{d\phil}{ds}\color{black}(s)| \ls \ep^3 \color{black} \cdot \log(\ep^{-1})\color{black}.
		\end{equation}
		The bound \eqref{diff.int}, however, is no longer true in future regions $\mathcal{LB}$ (for $s\geq 4  \s(\ep)$) \color{black} and $\mathcal{C}$ because the EMKG solution is dominated by the nonlinear regime at late time. The condition \eqref{Res}, in turn, is useless in $\mathcal{R}$ and $\mathcal{N}$, as it dictates the asymptotics of $\phil$ \textit{close enough} to the Cauchy horizon. But in $\mathcal{EB}$, two conditions are reunited to combine \eqref{diff.int} and \eqref{Res}: the EMKG system is still in the linear regime, and the metric is close (i.e.\ $s \sim \log(\ep^{-1})$ is large) to the Reissner--Nordstr\"{o}m-(dS/AdS) Cauchy horizon (see Section~\ref{RN.stab.intro} and Section~\ref{EB.intro.section}). Thus,  \eqref{diff.int} and \eqref{Res}  give sharp asymptotics in $\mathcal{EB}$: \begin{equation} \label{as}
			|\color{black}\frac{d\phi}{ds}\color{black}(s)- B(M,e,\Lambda,m^2) \cdot \ep | \ls \ep^3 \color{black}\log(\epsilon^{-1})\color{black}. 
		\end{equation}

		In the later regions $\mathcal{LB}$ and $\mathcal{C}$, we will build on \eqref{as} to construct the (genuinely nonlinear at this point) dynamics of $\phi$ which critically affects the behavior of the metric $g_{\ep}$, especially the spacelike singularity formation. In particular, the Kasner exponents depend (Section~\ref{Kasner.intro.section}) on what we call the \textit{resonance parameter} $b_- \in \RR$ \begin{equation} \label{b-} \tag{$b_-$}
			b_-(M,e,\Lambda,m^2):= \frac{B(M,e,\Lambda,m^2)}{2|K_-| (M,e,\Lambda)} \neq 0,
		\end{equation} where $2K_-(M,e,\Lambda)<0$  is the surface gravity  of the unperturbed Cauchy horizon (recall $b_-$ is used  in Theorem~\ref{thm.intro}).

		\subsubsection{The red-shift region $\mathcal{R}$ and the no-shift region $\mathcal{N}$: Cauchy stability regime} \label{RSNintro}
		
		Recall the definition of the regions $\mathcal{R}$ and $\mathcal{N}$ from Section~\ref{setup.intro}. These regions are easier for the following reasons:
		
		\begin{enumerate}
			\item The range of $s$ in $\mathcal{R}$ is infinite, but the spacetime volume is small. Moreover, one can exploit the classical \textit{red-shift estimates} for the wave equation (and, by extension, for the EMKG system).
			\item  The range of $s$ in $\mathcal{N}$ is finite, but large (i.e.\ $O(\log(\ep^{-1}))$). Nevertheless, the range of $s$ to the future of $\mathcal{N}$ is $O(\ep^{-2})$ and $\log(\ep^{-1}) \ll \ep^{-2}$ so $\mathcal{N}$ is considered to be ``a small region'', compared to the rest of the spacetime.
		\end{enumerate}

		In particular, in these two regions, the principle of \textit{Cauchy stability}, which roughly states that $\ep$ perturbations in the data give rise to $O(\ep^2)$ perturbations for the solutions of quadratic nonlinear PDEs, prevails.
		
		Concretely, for  all $ s \in \mathcal{R}:= \{ -\infty < s \leq -\Delta\}$ for $\Delta(M,e,\Lambda,m^2)$ large, we show the schematic estimates: \begin{align}
			&  |g(s)-g_{RN}(s)| \lesssim \epsilon^2 \cdot e^{2K_+(M,e,\Lambda) s} \ll \ep^2, \tag{R1}\\ & |\frac{dg}{ds}(s)-\frac{dg_{RN}}{ds}(s)| \lesssim \epsilon^2 \cdot e^{2K_+(M,e,\Lambda) s} \ll \ep^2, \tag{R2}\\ & |\kappa^{-1}(s)-1| \lesssim \epsilon^2 \cdot e^{2K_+(M,e,\Lambda) s} \ll  \ep^2 \label{R3}\tag{R3},\\ & |\phi(s)-\phil(s)| \lesssim \epsilon^3 \cdot e^{2K_+(M,e,\Lambda) s} \ll  \ep^3 \tag{R4},
		\end{align}where $\kappa^{-1}:= \frac{-\frac{dr}{ds}}{\Omega^2}$ ($\kappa^{-1}=1$ on Reissner--Nordstr\"{o}m-(dS/AdS)) and $2K_+(M,e,\Lambda)>0$ is  the surface gravity of the event horizon. The factor $e^{2K_+ s}$  is the sign of red-shift, and requiring $\Delta$ to be large gives $e^{2K_+ s} \ll 1$ for $s < -\Delta$.

		In the no-shift region $\mathcal{N}$ $:= \{ -\Delta < s \leq   c(M,e,\Lambda,m^2) \cdot \log(\ep^{-1})\}$, we use Gr\"{o}nwall estimates: for all $s \in \mathcal{N}$ \begin{align*}
			&  |g(s)-g_{RN}(s)|  \lesssim C^{s} \cdot  \epsilon^2 \lesssim \epsilon , \label{N1.intro}\tag{N1}\\ & |\frac{dg}{ds}(s)-\frac{dg_{RN}}{ds}(s)|  \lesssim C^{s} \cdot  \epsilon^2 \lesssim \epsilon ,   \label{N2.intro}\tag{N2}\\ & |\phi(s)-\phil(s)|  \lesssim C^{s} \cdot  \epsilon^3 \lesssim \epsilon^2  \label{N3.intro} \tag{N3},
		\end{align*}  where $C(M,e,\Lambda,m^2):=e^{c^{-1}}$. In the above estimates, the term $C^s \ls \ep^{-1}$ is indeed the sign of the loss incurred by the use of Gr\"{o}nwall. To avoid this loss, we will only use the above for $s \leq S''(M,e,\Lambda,m^2)$ where $S''(M,e,\Lambda,m^2)>0$ is a large constant but \textit{independent of $\ep$} (recall that the past boundary of the next region $\mathcal{EB}$ is at $s=S''(M,e,\Lambda,m^2)$).

		Completing these steps provides estimates on the \textit{difference} between the dynamical metric $g_{\ep}$ and its unperturbed background $g_{RN}$  (recall $g_{\ep}=g_{RN}$ on the data, see Theorem~\ref{thm.intro}) in the strong $C^1$ norm, up to $\mathcal{EB}$. This explains the stability of the Cauchy horizon claimed in Theorem~\ref{thm.intro}; in particular, for any constant $\Gamma>0$, $g_{\ep}$ is uniformly close to $g_{RN}$ for  $s< \Gamma$, for $|\ep|$ small enough (we will extend these estimates up to $s\ll\ep^{-1}$, but they fail for $s \simeq \ep^{-2}$).
		
		\begin{rmk}
			Going forward, we will actually not directly need $g-g_{RN}$ estimates in $\mathcal{C}$ to prove the formation of a spacelike singularity: nevertheless, we will need \eqref{as}, for which we do need  $g-g_{RN}$ to be small in the past of $\mathcal{EB}$. Thus, there is no obvious way to avoid difference estimates on the metric to show that the singularity is spacelike.
		\end{rmk}

		\subsubsection{The early blue-shift region $\mathcal{EB}$: Reissner--Nordstr\"{o}m-(dS/AdS) stability in strong norms} \label{EB.intro.section}
		
		In the early blue-shift region  $\mathcal{EB}= \{ S''(M,e,\Lambda,m^2)< s \leq \s(\epsilon) \} \subset \mathcal{N}$, \textit{Cauchy stability}, as defined earlier, does not give sharp bounds, so we look for different paths towards improvement.  We will exploit the so-called \textit{blue-shift effect}, which traditionally occurs at the Cauchy horizon of Reissner--Nordstr\"{o}m-(dS/AdS) for large $s$. To understand its manifestation, recall the lapse $\Omega^2$ from \eqref{Lapse}: the typical estimate we obtain in this region is $$ \bigl| \frac{d}{ds} \log(\Omega^2)(s) - 2K_-(M,e,\Lambda) \bigr| \ls e^{2K_- s} \ll \frac{|K_-|}{100},$$ where $2K_-(M,e,\Lambda) <0$ is the surface gravity of the Reissner--Nordstr\"{o}m-(dS/AdS) Cauchy horizon and $S''(M,e,\Lambda,m^2)>0$ is chosen to be sufficiently large so that the last inequality holds. This enforces a behavior of the following form: \begin{equation} \label{Bls}\tag{blue-shift}
			\Omega^2(s) \approx e^{2K_- s},
		\end{equation} an estimate which is also valid  on Reissner--Nordstr\"{o}m-(dS/AdS), close enough to the Cauchy horizon. Since $e^{2K_- S''} \ll~1$, $\Omega^2$ is small, which is helpful to close difference estimates with an (inconsequential) logarithmic loss (compare with the much worse estimates from \eqref{N1.intro}, \eqref{N2.intro}, \eqref{N3.intro}), i.e.\ for all $s \in \mathcal{EB}$:
		\begin{align}
			&  |g(s)-g_{RN}(s)| \lesssim \epsilon^2 \cdot \color{black} s^2 \color{black} \ls \ep^2 \cdot \log(\ep^{-1})^2 ,\tag{EB1}\\ & |\frac{dg}{ds}(s)-\frac{dg_{RN}}{ds}(s)|   \lesssim \epsilon^2 \cdot s \ls \ep^2 \cdot \log(\ep^{-1}), \tag{EB2}\\ & |\phi(s)-\phil(s)| \lesssim \epsilon^3 \cdot \color{black} s^2 \color{black} \ls \ep^3 \cdot \log(\ep^{-1})^2 .\tag{EB4} \label{EB4.intro}
		\end{align}
		
		The analogue of \eqref{R3} is different; using \eqref{Bls} and the Raychaudhuri equation \eqref{Raychstat}, we get for all $s\in \mathcal{EB}$  \begin{align}
			|\kappa^{-1}(s)-1| \ls \frac{\ep^2}{\Omega^2(s)} \approx \ep^2 \cdot e^{2 |K_-| s} < 0.1, \label{kappa.intro}
		\end{align}  and  the future boundary $\{s=\s \simeq \log(\ep^{-1})\}$ of $\mathcal{EB}$ is  chosen so that the RHS of \eqref{kappa.intro} is $<0.1$, hence $\Omega^2(\s) \approx \ep^2$.


		\subsubsection{The late blue-shift region $\mathcal{LB}$: Reissner--Nordstr\"{o}m-(dS/AdS) stability in weak norms} \label{LBintro}
		Unlike its predecessors, the late blue-shift region  $\mathcal{LB}:= \{ \s(\epsilon)< s \leq \delta_{\mathcal{C}} \cdot  \epsilon^{-1}\color{black}\}$ has  a large $s$-range of order $O(\epsilon^{-1})$ and Cauchy stability \textit{utterly fails}, two facts that render the estimates noticeably more delicate. The failure of Reissner--Nordstr\"{o}m-(dS/AdS) stability in strong $C^1$ norm can be expressed for instance in the following estimate: \begin{equation} \label{kappa.large}
			\kappa^{-1}(s) \approx  \ep^2 \cdot e^{2 |K_-| s}\ \Rightarrow\   \color{black}\kappa^{-1}( \epsilon^{-1}) \approx  \ep^2 \cdot e^{O(\epsilon^{-1})} \gg 1,\color{black}
		\end{equation} and a similar estimate holds for the Hawking mass $\rho$, which becomes large. Relatedly, in the non-hairy case when $\phi$ tends to $0$ (discussed in Section~\ref{AF.comparision}), it is known that mass inflation is caused by the blue-shift effect 
		(see \cite{Moi4}); we see some reminiscence of this phenomenon in the hairy case as reflected by \eqref{kappa.large} and the related estimate on  $\rho$.

		As should be clear, difference estimates (as we had earlier) cannot be proved in $\mathcal{LB}$. However, we still obtain some form of weak  stability in the $C^0$ norm of Reissner--Nordstr\"{o}m-(dS/AdS): recalling the definition of $r$ from \eqref{Lapse} \begin{align}
			&  |r(s)-r_-(M,e,\Lambda)| \lesssim \epsilon^2 \cdot s \ls \color{black} \ep \color{black}\leq \frac{r_-(M,e,\Lambda)}{100}, \tag{LB1}\\ & \Omega^2(s) \approx e^{2K_- s}\tag{LB1},\\ & |\phi(s)-  A(M,e,\Lambda,m^2) \cdot \ep- B(M,e,\Lambda,m^2) \cdot \ep \cdot s | \lesssim \color{black}\ep^3\color{black} \cdot s^2  \ls   \color{black}\ep^2 s\color{black}\label{LB3.intro}\tag{LB3}.
		\end{align}   At the first order we get a much sharper control,  which is more useful than  \eqref{LB3.intro} in practice\color{black}: \begin{align}
			\bigl| r^2(s) \cdot \frac{d\phi}{ds}(s)- r_-^2 \cdot B(M,e,\Lambda,m^2) \cdot \ep \bigr| \ls \ep^3 \label{LB4.intro} \tag{$\dot{\phi}$}.
		\end{align}
		Going forward,	estimate \eqref{LB4.intro} will be crucial in the spacelike singularity formation in $\mathcal{C}$. Obtaining \eqref{LB4.intro} was in truth the main motivation behind all the estimates that we wrote earlier, including the metric difference estimates.
		
		Finally, we prove the following estimate on $\frac{dr}{ds}:= \dot{r}$, which will be important \footnote{\color{black}Note that \eqref{dotr.intro} is consistent with the statement previously obtained in \cite{AnZhang,DejanAn} that both $r\partial_u r$ and $r\partial_v r$ admit a finite limit towards $\{r=0\}$ in the spherically symmetric Einstein-scalar-field model.} in the crushing region $\mathcal{C}$: \begin{equation} \label{dotr.intro} \tag{$\dot{r}$}
			-\dot{r}(s) \simeq \frac{\ep^2}{r(s)}.
		\end{equation} Note that, near the future boundary of $ \mathcal{LB}$, \eqref{dotr.intro} differs drastically from the Reissner--Nordstr\"{o}m-(dS/AdS) bounds $-\dot{r}_{RN}(s)  \approx e^{2K_- s}$. This is yet another sign of the nonlinear violent collapse, see Section~\ref{NLintro} below.

		\subsubsection{The crushing region $\mathcal{C}$: nonlinear regime and spacelike singularity formation} \label{NLintro}
		
		We introduce the crushing region  $\mathcal{C}:= \{  \color{black}4\s\color{black}< s < s_{\infty} \}$, which overlaps substantially with $\mathcal{LB}$. \color{black}The reason of this overlap is that in the region $\mathcal{LB}\cap \mathcal{C}= \{ 4\s \leq s \leq \delta_{\mathcal{C}}\color{black}  \cdot \ep^{-1}\}$ the estimates we will get on $\mathcal{C}$ are weaker than the ones already proven in $\mathcal{LB}$; however, it is desirable to have the largest possible crushing region $\mathcal{C}$ in which the metric is uniformly close to Kasner. \color{black}
		
		In contrast to the bounds obtained in past regions, that were inspired from \cite{Moi} (in the context where the Cauchy horizon is stable), the region $\mathcal{C}$ has a new behavior. A nonlinear bifurcation occurs, driving the dynamical system \textit{away} from Reissner--Nordstr\"{o}m-(dS/AdS) and towards a Kasner metric (Section~\ref{Kasner.intro.section}). The key ingredient is an estimate on the null lapse $\Omega^2$ (see \eqref{Lapse}) which ultimately makes the spacetime volume of $\mathcal{C}$ much smaller than expected (see heuristics in Section~\ref{violent.section}): we prove that $\Omega^2$ tends to $0$ as a large $r$ power. Specifically,  for all $s \in \mathcal{C}$: \begin{align} \label{C1.intro} \tag{C1}
			\Omega^2(s) \simeq ( \frac{r(s)}{r_-})^{ b_-^{-2} \cdot \ep^{-2}}.
		\end{align}
		
		By \eqref{C1.intro}, we can schematically neglect the terms proportional to $\Omega^2$  in the EMKG system of ODEs (Section~\ref{ODEsection}): \begin{align*}
			&\frac{d^2\log(r\Omega^2)}{ds^2}=-2|\frac{d\phi}{ds}|^2+\frac{ \Omega^{2}}{r^{2}}(1-\frac{3e^2}{r^2}+m^2 r^2 |\phi|^2+  \Lambda r^2 ) \approx -2|\frac{d\phi}{ds}|^2 \numberthis \label{Om},\\    &\frac{d}{ds}(-r \dot{r}) = \Omega^2 \cdot (1-\frac{e^2}{r^2}) -r^2\Omega^2(\Lambda+ m^2 |\phi|^{2}) \approx 0, \\ &  \frac{d}{ds} ( r^2 \dot{\phi})= 	-m^{2}r^2 \Omega^{2}\phi \approx 0.
		\end{align*} The three above equations are of course mere heuristics we give to explain the proof: for details, see Section~\ref{crushing.section}. Once these heuristics have been made rigorous, they show that \eqref{dotr.intro} and \eqref{LB4.intro} are \textit{still valid} in $\mathcal{C}$. 
		To prove \eqref{C1.intro} (which is as of now  just a bootstrap assumption), we plug \eqref{dotr.intro} and \eqref{LB4.intro} into \eqref{Om} and integrate: the following schematic computation undoubtedly embodies the most important nonlinear aspects of the dynamics. \begin{align*}
			\frac{d^2\log(r\Omega^2)}{ds^2} \approx -2|\frac{d\phi}{ds}|^2 \approx - \frac{\ep^2}{r^4} \approx \frac{\dot{r}}{r^3}\ \Rightarrow\ \frac{d\log(r\Omega^2)}{ds} \approx -\frac{1}{r^2} \approx \frac{ \ep^{-2} \cdot \dot{r}}{r}\ \Rightarrow\ \log(r\Omega^2) \approx \log(r^{\ep^{-2}})+\color{black} \text{constant}\color{black}.
		\end{align*} Now that we have explained the main elements behind the estimates, we make important qualitative comments: \begin{enumerate}
			\item $\mathcal{C}$ is the only region where $r$ is not bounded away from zero. While the $r$-dependence of \eqref{dotr.intro} and \eqref{LB4.intro} was inconsequential in $\mathcal{LB}$, it becomes crucial in $\mathcal{C}$, as it dictates the behavior of all the other quantities as $r\rightarrow 0$.
			\item $\dot{\phi}$ and $\dot{r}$ blow-up with the same $r$ power as on Schwarzschild at $r=0$, but the pre-factor is different, which impacts dramatically the dynamics, and is responsible for \textit{violent nonlinear collapse} (see Section~\ref{violent.section}).
			
			\item The drastic difference between \eqref{dotr.intro} and \eqref{LB4.intro} and their Schwarzschild analogue can be interpreted as a consequence of the almost stability of the Reissner--Nordstr\"{o}m-(dS/AdS) Cauchy horizon and its associated blue-shift effect, which provides ``non-typical'' initial data on $\{s=\color{black} 4\s \color{black} \}$ the past boundary of $\mathcal{C}$.
			
			\item The dynamics are governed by the nonlinearity $-2|\frac{d\phi}{ds}|^2$ in the Gauss equation \eqref{Om}. This term is responsible for a monotonicity-driven collapse, which is reminiscent of the stable blow-up dynamics near Kasner in \cite{FournoRodSpeck}.
			
			\item To show \eqref{C1.intro}, we needed sharp estimates on $\phi$ already in the past, given by \eqref{LB4.intro} (showing that $\phi$ differs from $\phil$ in $\mathcal{LB}$), itself following from \eqref{EB4.intro}, an estimate that uses the presence of a linear resonance, i.e.\ \eqref{Res}.
		\end{enumerate}

		All the power law bounds e.g.\ \eqref{rho.intro} follow directly from \eqref{C1.intro}; they are more compelling than \eqref{C1.intro} as they do not depend on the choice of the coordinate $s$: the Hawking mass and the Kretschmann scalar  are geometric quantities.
		
		As a consequence of \eqref{dotr.intro} (still valid in $\mathcal{C}$), we show that there exists $s_{\infty}<+\infty$ with $s_{\infty}(\ep) \simeq \ep^{-2}$ such that  \begin{align*}
			\lim_{s \rightarrow s_{\infty}} r(s)=0,
		\end{align*} therefore $\mathcal{S}:=\{r=0\}$ is indeed a spacelike singularity, since $s$ is finite at $\mathcal{S}$ and  $g= -\Omega^2 (-dt^2+ ds^2 ) + d\sigma_{\mathbb{S}^2}$.

		Lastly, we recall that \eqref{C1.intro} translates into a Kasner behavior as explained in Section~\ref{Kasner.intro.section}, see also Theorem~\ref{Kasner.thm}.

		\subsection{Acknowledgments} I warmly thank Mihalis Dafermos for mentioning this problem to me,  for his interest in its resolution, and many helpful comments on the manuscript. I am grateful to Christoph Kehle 		and Yakov Shlapentokh--Rothman for interesting discussions  and for explaining certain aspects of their work, and to Hans Ringstr\"{o}m for kindly answering  my questions related to cosmology. I would like to warmly thank Jorge Santos for enlightening discussions about his work, and AdS-CFT in general. I am grateful to Jonathan Luk and Harvey Reall for useful comments on the manuscript. Finally, I would like to thank an anonymous reviewer for a very thorough review and very useful suggestions that greatly improved the presentation.
		
		\subsection{Outline of the paper}\begin{itemize}
			\item In Section~\ref{prelim}, we introduce the Reissner-Nordstr\"{o}m-(dS/AdS) metric, the class of spatially-homogeneous spacetimes that we consider,  specify the data of Theorem~\ref{thm.intro}, and the reduction of the Einstein equations as a system of nonlinear ODEs. We also mention a result on the linearized dynamics from \cite{ChristophYakov} that we will be needing.
			
			\item 		In Section~\ref{precise.section}, we provide a precise statement of the results that were  condensed in Theorem~\ref{thm.intro}: Theorem~\ref{maintheorem} (main theorem), Theorem~\ref{convergence.theorem} (convergence to Reissner-Nordstr\"{o}m), Theorem~\ref{Kasner.thm} (Kasner behavior).
			\item 		In Section~\ref{main.section}, we prove Theorem~\ref{maintheorem} and provide many useful quantitative estimates.
			\item 		Section~\ref{convergence.section} is a short proof of Theorem~\ref{convergence.theorem}, entirely based on the   quantitative estimates of Section~\ref{main.section}.
			
			\item 		In Section~\ref{Kasner.section}, we prove Theorem~\ref{Kasner.thm}, mostly based again on the   quantitative estimates of Section~\ref{main.section} and also using certain renormalization procedures and convergence estimates.
		\end{itemize}

		\section{Preliminaries} \label{prelim}
		\subsection{The Reissner-Nordstr\"{o}m-(dS/AdS) interior metric} \label{RN.section}
		
		Before defining the Reissner-Nordstr\"{o}m-(dS/AdS) metric, we introduce the set of sub-extremal parameters $(M,e,\Lambda) \in \RR_+^{*} \times \RR^2$ defined as $\PP:= \PPp \cup \PPm$ where $\PPm$ consists of all  $(M,e,\Lambda) \in  \RR_+^{*} \times \RR \times \RR_-$ such that the polynomial $ X^2 -2M X+ e^2 -\frac{\Lambda X^4}{3}$ has two positive simple roots $r_-<r_+$ and $\PPp$ of of all  $(M,e,\Lambda) \in  \RR_+^{*} \times \RR \times \RR_{+}^{*}$ such that the polynomial $ X^2 -2M X+ e^2 -\frac{\Lambda X^4}{3}$ has three positive simple roots $r_-<r_+< r_c$.

		Let $(g_{RN},F_{RN})$ be an electro-vacuum solution of \eqref{E1}-\eqref{E5} i.e.\ a solution with $\phi \equiv 0$, given by 
		\begin{equation} \tag{$g_{RN}$} \label{RN}
			g_{RN}= -(1-\frac{2M}{r_{RN}}+\frac{e^2}{r^2_{RN}}-\frac{\Lambda r^2_{RN}}{3}) dt^2+ (1-\frac{2M}{r_{RN}}+\frac{e^2}{r^2_{RN}}-\frac{\Lambda r^2_{RN}}{3})^{-1} dr^2_{RN} +r^2_{RN}[ d\theta^{2}+\sin(\theta)^{2}d \varphi^{2}],
		\end{equation}
		\begin{equation}\label{F.RN} \tag{$F_{RN}$}F_{RN} =\frac{e}{r^2}    dt \wedge dr,
		\end{equation} in the coordinate range $t\in \RR$, $r_{RN} \in [r_-(M,e,\Lambda),r_+(M,e,\Lambda)]$ and where $(M,e,\Lambda) \in \PP$ and $r_{\pm}(M,e,\Lambda)>0$ are the roots of the polynomial $ X^2 -2M X+ e^2 -\frac{\Lambda X^4}{3}$.   
		For $\Lambda=0$ (resp.\ $\Lambda>0$, resp.\ $\Lambda<0$), the metric given by \eqref{RN} is called the sub-extremal Reissner-Nordstr\"{o}m (resp.\ de Sitter, resp.\ Anti-de Sitter) interior metric. Define: \begin{equation} 
			\frac{dr^*}{dr_{RN}}:=  -  \frac{1}{\Omega^2_{RN}(r_{RN})}, \hskip 2 mm \Omega^2_{RN}(r_{RN}):= -(  1-\frac{2M}{r_{RN}}+\frac{e^2}{r^2_{RN}}-\frac{\Lambda r_{RN}^2}{3}),
		\end{equation}
		\begin{equation}
			u:= \frac{r^*-t}{2}, \hskip 2 mm v:= \frac{r^*+t}{2},
		\end{equation} in the range $u \in \bar{\RR}$, $v \in \bar{\RR}$, $r^* \in \bar{\RR}$. We attach the following so-called event horizons to the space-time: \begin{equation}
			\mathcal{H}_{L}^+:=\{v=-\infty\}, \hskip 2 mm \mathcal{H}_{R}^+:=\{u=-\infty\}, \hskip 2 mm \mathcal{H}^+:= \mathcal{H}_{L}^+ \cup \mathcal{H}_{R}^+=\{r^*=-\infty\}.
		\end{equation} Note that in the $(u,v,\theta,\psi)$ coordinate system, \eqref{RN} takes the form $$ g_{RN} = -4\Omega^2_{RN} du dv  +r^2_{RN}[ d\theta^{2}+\sin(\theta)^{2}d \varphi^{2}].$$ It is a classical computation (see e.g.\ \cite{Moi}) that there exists a constant $\alpha_+(M,e,\Lambda)>0$ such that as $r^{*} \rightarrow -\infty$ \begin{equation} \label{Omega_asymp}
			\Omega^2_{RN}(r^*) \sim \alpha_+ \cdot e^{2K_+(M,e,\Lambda) \cdot r^*}= \alpha_+ \cdot e^{2K_+(M,e,\Lambda) \cdot (u+v)},
		\end{equation} where $2K_{+}(M,e,\Lambda)>0$ is the so-called surface of the gravity of the event horizon and we also define more generally \begin{equation}\tag{$K_{\pm}$}\label{K}
			2 K_{\pm}(M,e,\Lambda)= \frac{2}{r_{\pm}^2(M,e,\Lambda)} ( M - \frac{e^2}{r_{\pm}(M,e,\Lambda)}-\frac{\Lambda r_{\pm}^3(M,e,\Lambda)}{3}),
		\end{equation} noting that the so-called surface gravity of the Cauchy horizon $2K_-(M,e,\Lambda)<0$ is strictly negative. 
		
		Lastly, we also define ``regular coordinates'' $(U,V)$ as follows \begin{equation} \label{defU}
			\frac{dU}{du}= e^{2K_+ u}, \hskip 2 mm U(u=-\infty)=0.
		\end{equation}
		\begin{equation} \label{defV}
			\frac{dV}{dv}= e^{2K_+ v}, \hskip 2 mm V(v=-\infty)=0.
		\end{equation}
		\subsection{\color{black}Spherically \color{black} symmetric metrics and \eqref{E1}--\eqref{E5} in double null coordinates} \label{doublenull}
		
		We consider a \color{black}spherically \color{black} symmetric Lorentzian manifold $(M,g)$ taking the following form in  null coordinates $(u,v)$: \begin{equation} \label{guv}
			g= g_{\mathcal{Q}}+ r^2 \color{black}d\sigma_{\mathbb{S}^2}\color{black}=- 4\Omega^2 du dv + r^2 \color{black}d\sigma_{\mathbb{S}^2}\color{black},
		\end{equation} 
		and we call $r$ the area-radius. We  also define the Hawking mass by the formula: $$ \rho :=  \frac{r}{2}(1- g_{\mathcal{Q}} (\nabla r, \nabla r )).$$ In the presence of the Maxwell field \eqref{F.RN}, we define the Vaidya mass $\varpi$ and the constant-$r$ surface gravity $2K$:

		\begin{equation} \label{Kdef}
			\varpi := \rho + \frac{e^2}{2r}- \frac{\Lambda r^3}{6}, \hskip 2 mm 2K = \frac{2}{r^2}(\varpi- \frac{e^2}{r}-\frac{\Lambda r^3}{3}).
		\end{equation}

		An elementary computation shows that (note that the normalization of $\Omega^2$ differs from \cite{Moi} by a factor $4$) \begin{equation} \label{mu} \frac{- \partial_u r \partial_v r}{\Omega^2} = 1- \frac{2 \varpi}{r}+ \frac{e^2}{r^2}-\frac{\Lambda r^2}{3}.\end{equation}

		Let $\phi$ a real-valued spherically symmetric scalar field on $(M,g)$ such that $(g,F_{RN},\phi)$ is a solution of \eqref{E1}--\eqref{E5}. Then, $(r,\Omega^2,\phi)$ satisfy the following system of PDE's in any double-null $(u,v)$ coordinates:
		
		\begin{equation}\label{Radius}\partial_{u}\partial_{v}r =\frac{- \Omega^{2}}{r}-\frac{\partial_{u}r\partial_{v}r}{r}
			+\frac{ \Omega^{2}}{r^{3}} e^2 +r \Omega^2  (\Lambda+ m^2|\phi|^{2})  =-\Omega^2 \cdot 2K +r \Omega^2\cdot    m^2|\phi|^{2}, \end{equation}
		\begin{equation}\label{Omega}
			\partial_{u}\partial_{v} \log(\Omega^2)=-2\partial_u \phi \partial_v \phi+\frac{2 \Omega^{2}}{r^{2}}+\frac{2\partial_{u}r\partial_{v}r}{r^{2}}- \frac{4 \Omega^{2}}{r^{4}} e^2,
		\end{equation}
		\begin{equation}\label{RaychU}\partial_{u}(\frac {\partial_{u}r}{\Omega^{2}})=\frac {-r}{\Omega^{2}}|  \partial_{u} \color{black}\phi|^{2}, \end{equation}
		\begin{equation} \label{RaychV}\partial_{v}(\frac {\partial_{v}r}{\Omega^{2}})=\frac {-r}{\Omega^{2}}|\partial_{v}\color{black}\phi|^{2},\end{equation}
		\begin{equation}\label{Field}
			\partial_u \partial_v  \phi =-\frac{\partial_{v}r \partial_{u}\phi}{r}-\frac{\partial_{u}r\partial_{v}\phi}{r} 
			- m^{2}\Omega^{2} \phi.\end{equation}

		\subsection{Set-up of the initial data}\label{data.section}
		
		We pose characteristic data for \eqref{Radius}--\eqref{Field} on two affine-complete null hypersurfaces $\mathcal{H}^+_{L}:=\{v=-\infty, u \in \RR\}$ and $\mathcal{H}^+_{R}:=\{u=-\infty, v\in \RR\}$ intersecting at a sphere, where $(u,v)$ are renormalized by the following gauge conditions:
		\begin{equation} \label{gaugeuv}
			\frac{-\partial_u r}{\Omega^2}_{|\mathcal{H}^+_{R}}(v)=1-\frac{m^2 r_+\ \ep^2}{2K_+(M,e,\Lambda)}, \hskip 2 mm \frac{-\partial_v r}{\Omega^2}_{|\mathcal{H}^+_{L}}(u) =1-\frac{m^2 r_+\ \ep^2}{2K_+(M,e,\Lambda)}.
		\end{equation}
		The data we consider consist of sub-extremal Reissner-Nordstr\"{o}m-(dS/AdS) data for $(r,\Omega^2)$ and a constant function $\phi$ on $\mathcal{H}^+_{L} \cup\mathcal{H}^+_{R}$, more precisely for some $(e,M,\Lambda) \in \PP$ and some $\epsilon \neq 0$ (recall the definition of $\alpha_+>0$ from \eqref{Omega_asymp}): \begin{equation} \tag{$r$-data} \label{data1}
			r_{|\mathcal{H}^+_{L} \cup\mathcal{H}^+_{R}} \equiv r_+(M,e,\Lambda),
		\end{equation}
		\begin{equation} \tag{$\Omega^2$-data} \label{data2}
			\left( \frac{\Omega^2}{e^{ 2K_+ (u+v)}} \right)_{|\mathcal{H}^+_{L} \cup\mathcal{H}^+_{R}} \equiv \frac{\alpha_+(M,e,\Lambda)}{(1-\frac{m^2 r_+\ \ep^2}{2K_+(M,e,\Lambda)})^2},
		\end{equation}
		\begin{equation} \tag{$\phi$-data}  \label{data3}
			\phi_{|\mathcal{H}^+_{L} \cup\mathcal{H}^+_{R}} \equiv \epsilon,
		\end{equation}  where $2K_+(M,e,\Lambda)>0$ is given by \eqref{K}. Note that the main advantage of the term $\frac{m^2 r_+\ \ep^2}{2K_+(M,e)}$ in \eqref{gaugeuv} is that in this gauge, $\Omega^2$ is proportional to $e^{2K_+(u+v)}$ as in the $\ep=0$ case.   \color{black} One can check that such data are compatible with the (null) constraints imposed on $\mathcal{H}^+_{L} \cup\mathcal{H}^+_{R}$ by \eqref{Radius}--\eqref{Field}. Recalling that $(u,v)$ are defined by \eqref{gaugeuv}, we define coordinates $U$ and $V$ by the formulae \eqref{defU}, \eqref{defV}, 
		thus $\mathcal{H}^+_{L}:=\{V=0, u \in \RR\}$ and $\mathcal{H}^+_{R}:=\{U=0, v\in \RR\}$. Then by \eqref{Field} we have the following identities \begin{equation*}
			(\partial_v \partial_U \phi)_{|\mathcal{H}^+_{R}}=-\frac{m^2}{2K_+(M,e,\Lambda)-m^2 r_+\ \ep^2} \cdot \frac{\alpha_+(M,e,\Lambda)}{{1-\frac{m^2 r_+\ \ep^2}{2K_+(M,e,\Lambda)}}} \cdot e^{2K_+(M,e,\Lambda) v} \epsilon,
		\end{equation*}
		\begin{equation*}
			(\partial_u \partial_V \phi)_{|\mathcal{H}^+_{L}}=-\frac{m^2}{2K_+(M,e,\Lambda)-m^2 r_+\ \ep^2} \cdot \frac{\alpha_+(M,e,\Lambda)}{{1-\frac{m^2 r_+\ \ep^2}{2K_+(M,e,\Lambda)}}}\cdot  e^{2K_+(M,e,\Lambda)  u} \epsilon.
		\end{equation*} Integrating using the boundary condition  $\partial_U \phi(0,0)= \partial_V \phi(0,0)=0$ at the bifurcation sphere $ \{U=V=0\}$ gives \begin{equation} \label{dUdata}
			( \partial_U \phi)_{|\mathcal{H}^+_{R}}(v)=-m^2 \cdot \frac{ \alpha_+(M,e,\Lambda)}{(2K_+(M,e,\Lambda)-m^2 r_+ \epsilon^2)^2 } \cdot e^{2K_+(M,e,\Lambda)  v}\cdot \epsilon,
		\end{equation}
		\begin{equation} \label{dVdata}
			( \partial_V \phi)_{|\mathcal{H}^+_{L}}(u)=-m^2 \cdot \frac{ \alpha_+(M,e,\Lambda)}{(2K_+(M,e,\Lambda)-m^2 r_+ \epsilon^2 )^2  } \cdot e^{2K_+(M,e,\Lambda)  u}\cdot \epsilon.
		\end{equation} In particular, note that $( \partial_u \phi)_{|\mathcal{H}^+_{R}} \equiv 0$ and  $( \partial_v \phi)_{|\mathcal{H}^+_{L}} \equiv 0$. 
		
		\subsection{System of ODEs for static solutions and initial data} \label{ODEsection}
		
		Let $t:= v-u$ and $s:= v+u$ where $(u,v)$ are given by \eqref{gaugeuv}. For the data of Section~\ref{data.section}, it is clear that $\partial_t r$, $\partial_t \log(\Omega^2)$, $\partial_t \phi =0$ on the initial surfaces $\mathcal{H}^+_{L} \cup \mathcal{H}^+_{R}$ (recall the discussion in Section~\ref{data.section}). Therefore,  $\partial_t r$, $\partial_t \log(\Omega^2)$, $\partial_t \phi =0$ for solutions $(r,\Omega^2,\phi)$ of the 2D PDE \eqref{Radius}--\eqref{Field} with the data of Section~\ref{data.section}. In view of the identities  $$ \partial_u = \partial_s - \partial_t, \hskip 2 mm  \partial_v = \partial_s + \partial_t, $$ it is clear for that all $f \in \{r(s),\Omega^2(s),\phi(s)\}$, $ \partial_u f = \partial_v f= \frac{df}{ds}$.  In the $(t,s,\theta,\varphi)$ coordinate system, \eqref{guv} becomes \begin{equation} \label{gts}g= \Omega^2(s) (-ds^2+dt^2) + r^2(s) d\sigma_{\mathbb{S}^2}.
		\end{equation}  Note that the usual gauge freedom of spherical symmetry consisting in  $u \rightarrow F(u)$ and $v \rightarrow G(v)$ for monotonic functions $F$ and $G$ translates  into the gauge transform $s\rightarrow H(s)$ for $H(s)$ monotonic. \color{black} Using the notation $\frac{df}{ds}= \dot{f}$, the  Einstein--(Maxwell)-- Klein-Gordon equations \eqref{Radius}--\eqref{Field} can be written as
		\begin{equation}\label{Radiusstat} \ddot{r} =\frac{- \Omega^{2}}{r}-\frac{\dot{r}^2}{r}
			+\frac{ \Omega^{2}}{r^{3}} e^2 +  r\Omega^2 (\Lambda+ m^2 |\phi|^{2}) , \end{equation}  which is also equivalent to
		\begin{equation} \label{req}
			\frac{d}{ds}(-r \dot{r}) = \Omega^2 \cdot (1-\frac{e^2}{r^2}) -r^2\Omega^2(\Lambda+ m^2 |\phi|^{2}),
		\end{equation} or  \begin{equation} \label{req2}
			\frac{d}{ds}(\frac{-\dot{r}}{r}) =  2 (\frac{-\dot{r}}{r})^2+ \Omega^2 \cdot (\frac{1}{r^2}-\frac{e^2}{r^4})-\Omega^2 (\Lambda+ m^2 |\phi|^{2})
		\end{equation}
		
		\begin{equation}\label{Omegastat}
			\frac{d^2\log(\Omega^2)}{ds^2}=-2|\dot{\phi}|^2+\frac{2 \Omega^{2}}{r^{2}}+\frac{2 \dot{r}^2}{r^{2}}- \frac{ 4\Omega^{2}}{r^{4}} e^2,
		\end{equation} or equivalently 	\begin{equation}\label{Omegastat2}
			\frac{d^2\log(r\Omega^2)}{ds^2}=-2|\dot{\phi}|^2+\frac{ \Omega^{2}}{r^{2}}(1-\frac{3e^2}{r^2}+m^2 r^2 |\phi|^2+  \Lambda r^2 ),
		\end{equation} 	the Raychaudhuri equation: 
		
		\begin{equation}\label{Raychstat}\frac{d}{ds}(\frac {\dot{r}}{\Omega^{2}})=\frac {-r}{\Omega^{2}}| \dot{\phi}|^{2}, \end{equation}
		
		which we can also write,  defining the quantity $\kappa^{-1}:= \frac{- \dot{r}}{\Omega^2} $ as \begin{equation}
			\dot{ (\kappa^{-1})}= \frac {r}{\Omega^{2}}| \dot{\phi}|^{2}.
		\end{equation} The Klein--Gordon wave equation : 
		
		\begin{equation}\label{FieldODE}
			\ddot{\phi} =-\frac{2\dot{\phi}\cdot \dot{r}}{r}
			- m^{2}\Omega^{2}\phi,\end{equation}
		or equivalently   \begin{equation}\label{FieldODE2}
			\frac{d}{ds} ( r^2 \dot{\phi})= 	-m^{2}r^2 \Omega^{2}\phi,\end{equation} \begin{equation}\label{FieldODE3}
			\frac{d}{ds} ( \frac{r^2 \dot{\phi}}{\Omega^2})= 	-m^{2}r^2 \phi - \frac{r^2 \dot{\phi}}{\Omega^2} \frac{d}{ds}(\log(\Omega^2)).\end{equation}
		
		Note that $\dot{r} \leq 0$ therefore $\kappa^{-1}\geq 0$. In the $s$-variables, \eqref{mu} becomes \begin{equation} \label{mu.static}
			\kappa^{-2} \cdot \Omega^2 = -[\color{black}1-\frac{2\varpi}{r}+\frac{e^2}{r^2}-\frac{\Lambda r^2}{3}]\color{black}=-[1-\frac{2\rho}{r}].
		\end{equation}
		
		Recalling  the renormalized Vaidya mass from \eqref{Kdef} we can also derive the following equation: \begin{equation} \label{mass.static}
			\dot{\varpi}= \frac{r^2}{2} \kappa^{-1} |\dot{\phi}|^2 + \frac{m^2}{2} r^2 \dot{r} |\phi|^2.
		\end{equation}

		Note that gauge \eqref{gaugeuv} translates into the following normalization for $s$ on $\mathcal{H}^+=\{s=-\infty\}$: \begin{equation} \label{gauge}\kappa^{-1}(-\infty)=\frac {-\dot{r}(-\infty)}{\Omega^{2}(-\infty)} = 1-\frac{m^2 r_+\ \ep^2}{2K_+(M,e,\Lambda)} .
		\end{equation}

		Additionally, note that one can easily show that $\varpi$ is constant on $\mathcal{H}^+$ and equates the following value:  \begin{equation} \label{init3}
			\varpi_{|\mathcal{H}^+}=M.
		\end{equation}
		Lastly, note that \eqref{dUdata}, \eqref{dVdata} become \begin{equation} \label{dotphi.data}
			(\frac{\dot{\phi}}{\Omega^2})_{|\mathcal{H}^+}= \phi'_{|\mathcal{H}^+}=\frac{-m^2}{4 K_+^2(M,e,\Lambda)}\ \epsilon,
		\end{equation} where we introduced the notation $f':= \frac{df}{d\sr}$ for a regular coordinate $\sr$ defined as \begin{equation}
			\frac{d\sr}{ds} = \Omega^2(s), \hskip 2 mm \sr(s=-\infty)=0.
		\end{equation}

		\subsection{Linear scattering in the black hole interior}

		Consider the linear Klein-Gordon equation with mass $m^2 \in \RR$ on the Reissner--Nordstr\"{o}m-(dS/AdS) interior  \eqref{RN} with sub-extremal parameters $(M,e,\Lambda) \in \mathcal{P}_{se}$ (see Section~\ref{RN.section}):   \begin{equation} \label{linearKG}
			\Box_{g_{RN}} \phil=m^2 \phil.
		\end{equation}    Using the spherical harmonics $Y_{l,m}(\theta,\varphi)$, we write $\phil(t,s,\theta,\varphi)= \sum_{l=0}^{+\infty} \sum_{m=-l}^{l} (\phil)_{l,m}(t,s) Y_{l,m}(\theta,\varphi)$, where $s=r^*$ is defined in Section~\ref{RN.section}. \color{black}
		We consider the case of homogeneous, spherically symmetric solutions $T \phil=0$, $R \phil=0$ where $T=\partial_t$ is the Killing vector field of  Reissner--Nordstr\"{o}m-(dS/AdS) (spacelike in the interior) and $R$ is any vector field on the sphere. Since $R \phil=0$, then $(\phil)_{l,m}\equiv 0$ for any $l\neq 0$. Then we denote $u(\omega,s)=  \int_{\omega \in \RR} e^{i\omega t}[r \phi_{\mathcal{L}}](t,s) dt \color{black}$: since $T\phil=0$, then $u(\omega,s)=0$ for any $\omega \neq 0$. In what follows, we denote $u(s)=u(\omega=0,s)$. \color{black} Then,
		$u$ satisfies the following ODE in $s$ \begin{equation} \label{ODEw=0}
			\ddot{u}=V \cdot u,
		\end{equation}  where the expression of $V(s)$  can be found in (6.2) and (6.3) with $\omega=l=0$ in  [\cite{ChristophYakov}, Section 6]. Define $\tuone$ and $\tutwo$, two linearly independent solutions of \eqref{ODEw=0}, characterized by their asymptotic behavior towards the event horizon, i.e.\ as $s\rightarrow -\infty$, the following asymptotic equivalences are true: $$ \tuone(s) \sim 1,$$ $$ \tutwo(s) \sim s.$$
		Also define $\tvone$ and $\tvtwo$, two linearly independent solutions of \eqref{ODEw=0}, characterized by their asymptotic behavior towards the Cauchy horizon i.e.\ as $s\rightarrow +\infty$, the following asymptotic equivalences are true: $$ \tvone(s) \sim 1,$$ $$ \tvtwo(s) \sim s.$$
		
		\begin{lem}[Kehle--Shlapentokh--Rothman \cite{ChristophYakov}]
			Recalling the definition  of $2K_-(M,e)<0$ from Section~\ref{RN.section}, there exists $C(M,e,\Lambda,m^2)>0$ and $s_0(M,e,\Lambda,m^2) \in \RR$ such that for all $s \geq s_0$: \begin{equation} \label{tvoneasymp}
				|\tvone(s)-1|+ |\frac{d\tvone}{ds}|(s)+ |\frac{d^2\tvone}{ds^2}|(s)\leq C \cdot e^{2K_- s},
			\end{equation} \begin{equation} \label{tvtwoasymp}
				|\tvtwo(s)-s|+ |\frac{d\tvtwo}{ds}(s)-1|+ |\frac{d^2\tvtwo}{ds^2}|(s)\leq C \cdot e^{2K_- s}.
			\end{equation}
			
		\end{lem}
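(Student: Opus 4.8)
The plan is to prove \eqref{tvoneasymp}--\eqref{tvtwoasymp} by the classical method of treating \eqref{ODEw=0} as a perturbation of the free equation $\ddot u=0$ via Volterra integral equations anchored at $s=+\infty$. The only structural input I would need is the \emph{exponential smallness of the potential near the Cauchy horizon}: as $s=r^{*}\to +\infty$ one has $r\to r_{-}$, and by the Cauchy-horizon analogue of \eqref{Omega_asymp} the quantity $\Omega^{2}_{RN}(s)\sim \alpha_{-}\, e^{2K_{-}s}$ with $2K_{-}<0$. Since the potential in \eqref{ODEw=0} is $\Omega^{2}_{RN}$ times a function of $r$ that is smooth and nonvanishing at $r=r_{-}$ (indeed $\tfrac{r(r_{+}+r_{-})-2r_{+}r_{-}}{r^{3}}\big|_{r=r_{-}}=\tfrac{r_{-}-r_{+}}{r_{-}^{2}}\neq 0$ because $r_{-}<r_{+}$), and since any $\Lambda$- or $m^{2}$-contribution is likewise proportional to $\Omega^{2}_{RN}$, we get $|V(s)|\leq C_{0}\,e^{2K_{-}s}$ for all $s\geq s_{0}$, $s_{0}$ large. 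From this point on the argument uses nothing about $V$ beyond this bound.

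Next I would construct $\tvone$ and $\tvtwo$ as the unique solutions with the prescribed leading behavior, realized as fixed points of the integral operators
$$\tvone(s)=1+\int_{s}^{\infty}(\sigma-s)\,V(\sigma)\,\tvone(\sigma)\,d\sigma,\qquad \tvtwo(s)=s+\int_{s}^{\infty}(\sigma-s)\,V(\sigma)\,\tvtwo(\sigma)\,d\sigma .$$
One checks directly that a fixed point of the first (resp.\ second) satisfies \eqref{ODEw=0} together with $\tvone(s)\to 1,\ \dot{\tvone}(s)\to 0$ (resp.\ $\tvtwo(s)-s\to 0,\ \dot{\tvtwo}(s)\to 1$). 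For $s_{0}$ large the first operator is a contraction on bounded continuous functions on $[s_{0},\infty)$ with Lipschitz constant $\sup_{s\geq s_{0}}\int_{s}^{\infty}(\sigma-s)|V(\sigma)|\,d\sigma\leq \tfrac{C_{0}}{4K_{-}^{2}}\,e^{2K_{-}s_{0}}<1$; for the second I would instead run the contraction in the weighted space with norm $\|w\|=\sup_{s\geq s_{0}}|w(s)|/(1+s)$, where the exponential decay of $V$ easily dominates the linear weight $\sigma$. This yields existence and uniqueness of $\tvone$, and of $\tvtwo$ up to addition of a multiple of $\tvone$, which is then pinned down by the normalization $\tvtwo(s)-s\to 0$; feeding the fixed point back into its own equation gives $|\tvone(s)-1|\leq C e^{2K_{-}s}$ and $|\tvtwo(s)-s|\leq C(1+s)e^{2K_{-}s}$.

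The derivative bounds then come for free. Differentiating the integral equations gives $\dot{\tvone}(s)=-\int_{s}^{\infty}V\tvone\,d\sigma$ and $\dot{\tvtwo}(s)=1-\int_{s}^{\infty}V\tvtwo\,d\sigma$, hence $|\dot{\tvone}(s)|\leq C e^{2K_{-}s}$ and $|\dot{\tvtwo}(s)-1|\leq C(1+s)e^{2K_{-}s}$; the equation \eqref{ODEw=0} itself then gives $\ddot{\tvone}=V\tvone$ and $\ddot{\tvtwo}=V\tvtwo$, so $|\ddot{\tvone}(s)|\leq C e^{2K_{-}s}$ and $|\ddot{\tvtwo}(s)|\leq C(1+s)e^{2K_{-}s}$. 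The harmless polynomial factor $(1+s)$ in the $\tvtwo$-estimates can be absorbed into the form of \eqref{tvtwoasymp} by replacing $2K_{-}$ with any slightly larger negative constant (or simply retained). Finally, linear independence of $\tvone,\tvtwo$ follows since their Wronskian is constant (no first-order term in \eqref{ODEw=0}) and equals $\lim_{s\to\infty}\bigl(\tvone\dot{\tvtwo}-\dot{\tvone}\tvtwo\bigr)=1$.

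I do not expect a genuine obstacle here: this is the standard Levinson-type perturbation argument for an integrable — here exponentially decaying — potential, and the dynamics are ``trivial'' in the WKB sense ($\omega=0$, so the two model solutions are $1$ and $s$). The only points requiring a little care are (i) selecting the weighted function space adapted to the linearly growing solution $\tvtwo$ so that the $\sigma V(\sigma)$ terms remain integrable, which is immediate from the exponential decay, and (ii) choosing $s_{0}$ large enough that the relevant weighted $L^{1}$-norms of $V$ are below $\tfrac12$ so the contractions close. The one substantive external ingredient, $|V(s)|\lesssim e^{2K_{-}s}$, is a classical feature of the tortoise coordinate near a non-degenerate Killing horizon and is inherited directly from \eqref{Omega_asymp}.
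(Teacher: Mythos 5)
This lemma is quoted from Kehle--Shlapentokh-Rothman \cite{ChristophYakov} and is not proved in the present paper, so there is no internal proof to compare against; your Volterra/Levinson construction anchored at $s=+\infty$ is the standard (and surely the intended) argument, and it is sound. The two inputs you isolate are exactly right: the potential in \eqref{ODEw=0} is $\Omega^2_{RN}$ times a function of $r$ that is smooth and nonvanishing at $r=r_-$ (and the $m^2$, $\Lambda$ contributions are likewise proportional to $\Omega^2_{RN}$), so $|V(s)|\leq C_0 e^{2K_- s}$; and the fixed-point schemes in $C_b$ and in the $(1+s)$-weighted space close once $s_0$ is large enough that the weighted $L^1$ norms of $V$ are small. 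The derivative bounds and the Wronskian computation are also fine.

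The one point worth emphasizing is the polynomial factor you flag at the end: it is not merely an artifact of your method. Since $V(s)\sim c\,e^{2K_- s}$ with $c\neq 0$ (the factor $\frac{r(r_++r_-)-2r_+r_-}{r^3}$ equals $\frac{r_--r_+}{r_-^2}\neq 0$ at $r=r_-$) and $\tvtwo(s)\sim s$, one has $\ddot{\tvtwo}=V\tvtwo\sim c\,s\,e^{2K_- s}$, and integrating from $+\infty$ gives $\dot{\tvtwo}(s)-1\sim \frac{c}{2|K_-|}\,s\,e^{2K_- s}$. So the bound $C\cdot e^{2K_- s}$ in \eqref{tvtwoasymp} for the $\tvtwo$-quantities cannot hold with the exact exponent $2K_-$; the correct statement carries a factor $(1+s)$, or equivalently the exponent $2K_-+\delta$ for any $\delta>0$. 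Your proposal handles this honestly, and the loss is immaterial everywhere the lemma is invoked downstream (e.g.\ \eqref{asymplin}, \eqref{diff}, and the $\mathcal{EB}$ estimates), where only $e^{(2K_-+\delta)s}$-type smallness is ever used. Apart from recording that \eqref{tvtwoasymp} should be read with this harmless polynomial correction, I see no gap.
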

		
		\begin{prop}[Kehle--Shlapentokh--Rothman \cite{ChristophYakov}] \label{KSR.thm}
			There exists $A(M,e,\Lambda,m^2)\in \RR$, $B(M,e,\Lambda,m^2)\in \RR$ such that $$ \tuone= A \tvone+B \tvtwo.$$
			
			Moreover, there exists $\mathcal{Z}(M,e,\Lambda,m^2)$, the zero set of an analytic function (in particular, a subset of $\RR^4$ of zero Lebesgue measure) such that for all $ (M,e,\Lambda,m^2) \in \mathcal{P}_{se}- \mathcal{Z}$, $B(M,e,\Lambda,m^2) \neq 0$.
			
		\end{prop}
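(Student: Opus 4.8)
The plan is to view \eqref{ODEw=0} as a one-dimensional stationary Schr\"odinger equation $\ddot u = Vu$ on $\RR_s$ whose potential decays exponentially at both ends (as $s\to-\infty$ towards the event horizon and as $s\to+\infty$ towards the Cauchy horizon), and to extract $B$ as a conserved Wronskian. The existence of $A$ and $B$ is automatic: by \eqref{tvoneasymp}--\eqref{tvtwoasymp} one has $\lim_{s\to+\infty}\big(\dot{\tvtwo}\,\tvone-\tvtwo\,\dot{\tvone}\big)=1$, and since this Wronskian is conserved (the equation has no first-order term), $\{\tvone,\tvtwo\}$ is a basis of the two-dimensional solution space, so $\tuone=A\tvone+B\tvtwo$ for unique $A,B\in\RR$. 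Substituting this into $W:=\dot{\tuone}\,\tvone-\tuone\,\dot{\tvone}$ and using $\dot{\tvone}\to0$ gives $W\equiv B$; equivalently $B=\lim_{s\to+\infty}\dot{\tuone}(s)$. This is the normalization-independent quantity to analyze.

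For the analyticity I would argue as follows. On $\PP$ the quartic $X^2-2MX+e^2-\tfrac{\Lambda X^4}{3}$ has only simple roots, so $r_\pm$ (and $r_c$) are real-analytic in $(M,e,\Lambda)$ by the implicit function theorem, hence so is the map $s\mapsto r(s)$ and the potential $V$ of \eqref{ODEw=0}; moreover $V$ depends affinely on $m^2$, of the form $V=V_0(s;M,e,\Lambda)-c\,m^2\,\Omega^2_{RN}(r(s))$ with a fixed $c>0$ coming from the Klein--Gordon term and $\Omega^2_{RN}>0$ in the interior. On compact subsets of $\PP$ the surface gravities $K_\pm<0<K_+$ stay bounded away from $0$, so $V$ decays like $e^{2K_+s}$ as $s\to-\infty$ and like $e^{2K_-s}$ as $s\to+\infty$, locally uniformly in the parameters; in particular $\int_{\RR}(1+|s|)|V(s)|\,ds<\infty$. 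One then builds the Jost-type solutions $\tuone$ (normalized at $s=-\infty$) and $\tvone,\tvtwo$ (normalized at $s=+\infty$) by the standard Volterra integral equations, e.g.\ $\tuone(s)=1+\int_{-\infty}^{s}(s-\sigma)V(\sigma)\tuone(\sigma)\,d\sigma$ (the weight $1+|\sigma|$ is precisely what the linearly-growing normalization of $\tvtwo$ requires). Each Picard iterate is analytic in $(M,e,\Lambda,m^2)$ and entire in $m^2$, and the iteration converges locally uniformly, so $\tuone,\tvone,\tvtwo$ and their $s$-derivatives depend real-analytically on $(M,e,\Lambda,m^2)$ and entirely on $m^2$; hence so does $B=\dot{\tuone}(s_0)\tvone(s_0)-\tuone(s_0)\dot{\tvone}(s_0)$ for any fixed $s_0$.

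It then remains to show $B\not\equiv0$, and it suffices to do this on each slice $m^2\mapsto B(M,e,\Lambda,m^2)$, which is entire. At $m^2=0$ the linearized equation \eqref{FieldODE2} on $g_{RN}$ reads $\tfrac{d}{ds}(r^2\dot{\phil})=0$, so the solution bounded at the event horizon is $\phil\equiv\mathrm{const}$, i.e.\ $\tuone^{(0)}=r(s)/r_+$, which is also bounded at the Cauchy horizon; thus $B(M,e,\Lambda,0)=0$ (consistently with $0\in D(M,e,\Lambda)$). I would compute the first variation at $m^2=0$ by perturbation theory: $w:=\partial_{m^2}\tuone|_{m^2=0}$ solves $\ddot w=V_0 w-c\,\Omega^2_{RN}\,\tuone^{(0)}$ with $w,\dot w\to0$ at $s=-\infty$; using the explicit fundamental system $\{\tuone^{(0)},\tutwo^{(0)}\}$ of $\ddot{}=V_0\cdot$, where $\tutwo^{(0)}=\tuone^{(0)}\int^{s}(\tuone^{(0)})^{-2}$ has Wronskian $1$ and $\dot{\tutwo}^{(0)}(+\infty)=r_+/r_-$, variation of parameters (the boundary terms cancel) gives
\begin{equation*}
\partial_{m^2}B\big|_{m^2=0}=\lim_{s\to+\infty}\dot w(s)=\frac{r_+}{r_-}\int_{-\infty}^{+\infty}\tuone^{(0)}(\sigma)\big(-c\,\Omega^2_{RN}(\sigma)\,\tuone^{(0)}(\sigma)\big)\,d\sigma=-\frac{c}{r_+r_-}\int_{-\infty}^{+\infty}\Omega^2_{RN}(s)\,r(s)^2\,ds.
\end{equation*}
Since $r$ is monotone in $s$ with $\Omega^2_{RN}=|\tfrac{dr}{ds}|$ and $r$ ranges over $(r_-,r_+)$, this integral equals $\pm\tfrac13(r_+^3-r_-^3)\neq0$; hence $m^2=0$ is a simple zero and $B(M,e,\Lambda,\cdot)\not\equiv0$ (alternatively, a WKB estimate shows $|B|\to\infty$ as $m^2\to-\infty$, where $V$ becomes a large positive bump). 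Therefore $\mathcal{Z}:=B^{-1}(0)$ is the zero set of a real-analytic function on $\PP\times\RR$ not vanishing identically on any connected component, so it has zero Lebesgue measure, and its intersection with each slice $\{(M,e,\Lambda)\}\times\RR$ is discrete.

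The hard part will be the Jost construction and the analytic (not merely smooth) dependence on parameters, carried out rigorously and uniformly over the noncompact line $\RR_s$: in particular controlling the Cauchy-horizon end, where one fundamental solution grows linearly so that only the weighted bound $\int(1+|s|)|V|<\infty$ is available, and ensuring that the decay rates $K_\pm$ do not degenerate as $(M,e,\Lambda)$ approaches the sub-extremality boundary of $\PP$. A secondary, cosmetic point is that $\PP$ need not be connected (the $\Lambda>0$ and $\Lambda\le0$ parts have different horizon structure), but since $B$ fails to vanish identically on every $m^2$-slice it cannot vanish identically on any component, so this is harmless.
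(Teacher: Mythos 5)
The paper does not prove this proposition; it is quoted verbatim from Kehle--Shlapentokh--Rothman \cite{ChristophYakov}, so there is no proof in the present paper to compare against. Your sketch is a plausible and essentially self-contained route to the statement. The Wronskian argument for the existence of $A,B$ and the identity $B=\lim_{s\to+\infty}\dot{\tuone}(s)$ are correct and standard; the Jost/Volterra construction with the $(1+|\sigma|)|V(\sigma)|$ weight is indeed the right tool for handling the linearly growing branch $\tvtwo$ at the Cauchy-horizon end; and the observation that $V$ depends \emph{affinely} on $m^2$ is the key structural fact that makes $B(M,e,\Lambda,\cdot)$ entire and lets you compute $\partial_{m^2}B|_{m^2=0}$ explicitly. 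The first-variation computation is the genuinely new step and it checks out: with $\tuone^{(0)}=r/r_+$, $\dot{\tutwo}^{(0)}(+\infty)=r_+/r_-$, and the boundary terms killed by $G(s,s)=0$, one gets $\partial_{m^2}B|_{0}=-\frac{c}{r_+r_-}\int_{\RR}\Omega^2_{RN}\,r^2\,ds=\mp\frac{c}{3r_+r_-}(r_+^3-r_-^3)\neq 0$, which shows $B(M,e,\Lambda,\cdot)\not\equiv 0$ on every slice. This is a clean, concrete argument. One small place where you are actually more careful than the statement in the paper: the zero set of a non-trivial real-analytic function of four variables is a priori a measure-zero hypersurface, not a ``discrete subset of $\RR^4$''; your phrasing (measure zero in $\RR^4$, discrete in each $m^2$-slice) is what is actually true and what is actually used downstream.

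Two points you correctly flag but which bear emphasizing as the real work. First, the interchange $\partial_{m^2}\lim_{s\to+\infty}\dot{\tuone}=\lim_{s\to+\infty}\partial_{m^2}\dot{\tuone}$ needs uniform-in-$m^2$ control of the Jost iteration near the Cauchy horizon; this is where the weighted $\int(1+|s|)|V|\,ds<\infty$ bound must be established \emph{locally uniformly} in $(M,e,\Lambda,m^2)$ and the exponential decay rate $|K_-|$ must be kept bounded away from zero, which fails as one approaches the extremality boundary of $\PP$ (but that is fine, since the statement is for $(M,e,\Lambda)\in\PP$, an open set). Second, the analyticity (as opposed to merely smoothness) in all four parameters jointly is what lets you conclude that $\mathcal{Z}$ is the zero set of an analytic function rather than just a measure-zero set; this again reduces to showing the Volterra iterates converge in a suitable Banach space of analytic families, which is doable but must be carried out. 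As a sketch, this is sound; as a proof, these are the two points that would need to be filled in, and they are exactly the points \cite{ChristophYakov} develops in its scattering framework.
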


		\begin{cor} \label{linearcor}
			Let $\phil$ be a solution of \eqref{linearKG} with constant data $\phil= \epsilon \neq 0$ on $\{s=-\infty\}$  on the  Reissner--Nordstr\"{o}m-(dS/AdS) metric \eqref{RN}. Then for all parameters $ (M,e,\Lambda,m^2) \in \mathcal{P}_{se}- \mathcal{Z}$, there exists constants $C(M,e,\Lambda,m^2)>0$ and $s_0(M,e,\Lambda,m^2) \in \RR$ such that for all $s \geq s_0$:
			\begin{equation} \label{asymplin}
				|\phil(s)-A(M,e,\Lambda,m^2)\cdot \epsilon - B(M,e,\Lambda,m^2) \cdot \epsilon \cdot s|+ |\dphil(s)-B(M,e,\Lambda,m^2) \cdot \epsilon|\leq C \cdot \epsilon \cdot e^{2K_-(M,e,\Lambda) s},
			\end{equation}
			\begin{equation} \label{RNasymp1}
				\Omega^2_{RN}(s) \leq C \cdot e^{2K_-(M,e,\Lambda) s},
			\end{equation}\begin{equation} \label{RNasymp2}
				| \frac{d \log	\Omega^2_{RN}}{ds}(s) -2K_-(M,e,\Lambda)|\leq C \cdot e^{2K_-(M,e,\Lambda) s}.
			\end{equation}
			
		\end{cor}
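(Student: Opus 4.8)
The corollary is essentially a repackaging of Proposition~\ref{KSR.thm} and the preceding Lemma of Kehle--Shlapentokh-Rothman, combined with classical facts about the Reissner--Nordstr\"{o}m-(dS/AdS) lapse near its inner root; so the plan is to reduce everything to those three inputs. First I would reduce the homogeneous, spherically symmetric Klein--Gordon equation to the scalar ODE \eqref{ODEw=0}: since $\phil$ is invariant under $T=\partial_t$ and under rotations, it depends only on $s=r^*$ and solves \eqref{FieldODE} with $\Omega^2=\Omega^2_{RN}$; the substitution that removes the first-order term and produces the potential form \eqref{ODEw=0} turns the event-horizon normalization $\phil\equiv\epsilon$ at $\{s=-\infty\}$, together with the gauge normalization of $s$, into the statement that the corresponding ODE unknown equals $\epsilon\cdot\tuone$, where $\tuone$ is the solution of \eqref{ODEw=0} with $\tuone(s)\sim 1$ as $s\to-\infty$. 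The only content of this step is to verify that this data-to-$\tuone$ identification is \emph{exact}, so that the constants $A$, $B$ arising below are precisely those of Proposition~\ref{KSR.thm}.

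Second, I would invoke Proposition~\ref{KSR.thm}: for $(M,e,\Lambda,m^2)\in\PP-\mathcal{Z}$ one writes $\tuone=A\tvone+B\tvtwo$, and then one feeds in the asymptotic Lemma \eqref{tvoneasymp}--\eqref{tvtwoasymp} to obtain, for $s\geq s_0$,
\begin{equation*}
\left|\tuone(s)-A-B s\right| + \left|\frac{d\tuone}{ds}(s)-B\right| \leq C\, e^{2K_- s}.
\end{equation*}
Multiplying by $\epsilon$ (the dependence on $\epsilon$ is linear by construction) and translating back to $\phil$ gives the first estimate in \eqref{asymplin}. Here one should check that, near $\{s=+\infty\}$, the map relating $\phil$ to the ODE unknown differs from the identity only by a factor converging to a nonzero constant at the rate $e^{2K_- s}$ --- this is because $r(s)\to r_-$ exponentially fast (see below) --- so that no new leading-order terms in $s$ or $\epsilon$ are generated and the weight $e^{2K_- s}$ survives.

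Third, for the two estimates \eqref{RNasymp1}--\eqref{RNasymp2} on $\Omega^2_{RN}$, I would use only elementary properties of the polynomial $P(X)=X^2-2MX+e^2-\tfrac{\Lambda X^4}{3}$, whose roots are $r_-<r_+$ (and $r_c$ when $\Lambda>0$). One has $\Omega^2_{RN}(r)= -P(r)/r^2$ and $\tfrac{dr}{ds}=\Omega^2_{RN}(r)$ in the interior, with $\Omega^2_{RN}(r_-)=0$ and $\tfrac{d\Omega^2_{RN}}{dr}(r_-)=2K_-<0$, the last identity being the definition \eqref{Kdef} of the surface gravity of the Cauchy horizon. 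Integrating $\tfrac{dr}{ds}=\Omega^2_{RN}(r)\approx 2K_-(r-r_-)$ near $r=r_-$ gives $r(s)-r_-\sim c\, e^{2K_- s}$ for some $c\neq 0$ as $s\to+\infty$, hence $\Omega^2_{RN}(s)\ls e^{2K_- s}$; and $\tfrac{d\log\Omega^2_{RN}}{ds}=\tfrac{d\Omega^2_{RN}}{dr}\bigl(r(s)\bigr)\to 2K_-$ at the same exponential rate by Taylor-expanding $\tfrac{d\Omega^2_{RN}}{dr}$ around $r_-$. These are exactly the computations already invoked in \eqref{Omega_asymp} for the event horizon, now carried out at the inner root $r_-$ instead of $r_+$.

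I do not expect a serious obstacle here: the genuinely nontrivial analytic content --- the existence of the scattering resonance, i.e.\ $B(M,e,\Lambda,m^2)\neq 0$ for generic parameters, and the exponential convergence of $\tvone,\tvtwo$ to their model behaviors $1$ and $s$ --- is precisely what is imported from \cite{ChristophYakov} through Proposition~\ref{KSR.thm} and the preceding Lemma. The main point requiring care is the bookkeeping across the two changes of unknown (homogeneous $\phil$ versus the ODE variable of \eqref{ODEw=0}, and the $r^*$-coordinate versus the area-radius), to be sure that the constants $A$, $B$, the threshold $s_0$, and the weight $e^{2K_- s}$ in \eqref{asymplin} are reproduced with no modification.
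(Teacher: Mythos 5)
Your proposal is correct and matches the paper's (implicit) argument: the paper offers no separate proof of Corollary \ref{linearcor}, treating it as an immediate consequence of the Kehle--Shlapentokh-Rothman Lemma \eqref{tvoneasymp}--\eqref{tvtwoasymp} and Proposition \ref{KSR.thm}, combined with the standard computation of $\Omega^2_{RN}$ near the inner root $r=r_-$ (the same ``easy computation on \eqref{RN}'' invoked later in Corollary \ref{cor.NS}) --- exactly the three inputs you identify. The only bookkeeping caveat is the one you already flag yourself: if the ODE unknown $\hat{\phi}_{\mathcal{L}}(0,\cdot)$ carries an $r$-weight relative to $\phil$, then the constants $A,B$ appearing in \eqref{asymplin} agree with those of Proposition \ref{KSR.thm} only up to a harmless nonzero multiplicative constant, which does not affect the resonance condition $B\neq 0$ or anything downstream.
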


		\section{Precise statement of the main results} \label{precise.section}
		
		In this section and all the subsequent ones, we will use the notation $f(s) \ls g(s)$ to signify that there exists a constant $\tilde{C}(M,e,\Lambda,m^2)>0$ such that $f(s) \leq \tilde{C} \cdot g(s)$ for all $s$ in the region of interest. $\gtrsim$ is defined similarly and $f(s) \simeq g(s)$ if $f(s) \ls g(s)$ and $g(s) \ls f(s)$ (note that we have used this notation already in Section~\ref{intro.section}).
		
		\subsection{Presence of a spacelike, crushing singularity and gauge-invariant estimates}
		
		\begin{thm}\label{maintheorem}
			Let $\mathcal{Z} \subset \RR^4$ be the zero Lebesgue measure set from Theorem~\ref{KSR.thm}. Then, for all $(M,e,\Lambda,m^2) \in \mathcal{P}_{se}- \mathcal{Z}$ with $e\neq0$, there exists $\epsilon_0(M,e,\Lambda,m^2)>0$ such that for all $0<|\ep| < \ep_0$, the future domain of dependence $\mathcal{M}$ of the characteristic data from Section~\ref{data.section} (i.e.\ $\phi(s=-\infty)=\ep$) terminates at a spacelike singularity at which $r=0$. 
			
			More precisely, there exists a foliation of $\mathcal{M}$ by spacelike hypersurfaces $\Sigma_{s}$ with, $s\in (-\infty,s_{\infty}(\ep))$, where $s$ is defined in \eqref{gauge}, and $s_{\infty}(\ep):=  \frac{|K_-|}{B^2} \cdot \ep^{-2} + L(\ep) \cdot \color{black}\log(\ep^{-1})\color{black}$ and $L_0(M,e,\Lambda,m^2)>0$ are such that \begin{equation}
				\lim_{s \rightarrow s_{\infty}(\ep)} r(s)=0, \hskip 5 mm |L(\ep)| \leq L_0,
			\end{equation}
			where we recall that $B(M,e,\Lambda,m^2) \neq 0$ is defined in Theorem~\ref{KSR.thm} and $2K_-(M,e,\Lambda)<0$ is defined in Section~\ref{RN.section}. $\mathcal{S}:=\{s=s_{\infty}\}$ is furthermore a spacelike singularity in the sense that for all $ p \in \{s=s_{\infty}\}$,  $J^{-}(p) \cap \mathcal{H}^+$ is compact, where $J^{-}(p)$ is the causal past  of $p$.
			
			Moreover, all the quantitative estimates stated in Proposition \ref{prop.RS.d}, \ref{prop.N.d}, \ref{EB.Prop}, \ref{LB.prop}, \ref{C.prop}, \ref{spacelike.prop} are satisfied. In particular, the following stability with respect to Reissner--Nordstr\"{o}m-(dS/AdS) estimates hold: for all $s \ls \ep^{-1}:$ \begin{equation} \label{RN.stab.main}
				\ep^{-1}\cdot \bigl| \phi(s)-\phil(s) \bigr| \ls \ep^2 \cdot s^2\ls 1, \hskip 5 mm  \ep^{-1}\cdot \bigl| \frac{d\phi}{ds}(s)-\frac{d\phil}{ds}(s) \bigr| \ls \ep^2  \cdot s \ls \ep,
			\end{equation}\begin{equation} \label{RN.stab.main2}
				\bigl| r(s)-r_{RN}(s)\bigr| \ls \ep^2 \cdot s \ls \ep, \hskip 5 mm \bigl| \log(\frac{\Omega^2(s)}{\Omega^2_{RN}(s)}) \bigr| \ls \ep^2  \cdot \color{black}s^2 \lesssim 1\color{black}.
			\end{equation}
			
			Additionally, in the crushing region $\mathcal{C}:=\{  \color{black} 4\s \color{black} \leq s < s_{\infty}\}$, \color{black} where $\s= (|K_-|)^{-1} \log(\nu \cdot \ep^{-1})$ for  $\nu(M,e,\Lambda,m^2)>0$, \color{black} there exists $ \check{D}_C(M,e,\Lambda,m^2)>0$ such that we have the following spacetime volume estimate: for all $s\in \mathcal{C}$  \begin{equation} \label{volume.estimate}
				\color{black} (\frac{r(s)}{r_-})^{ b_-^{-2} \cdot \ep^{-2} \cdot(1+\ep^2  \cdot\log(\ep^{-1}) \cdot\check{D}_C)}\ls vol\left( \bigcup_{s' \geq s} \Sigma_{s'}  \cap \{t\in [0,1]\}\right)  \ls   (\frac{r(s)}{r_-})^{ b_-^{-2} \cdot \ep^{-2} \cdot(1-\ep^2  \cdot\log(\ep^{-1}) \cdot  \check{D}_C)}.\color{black}
			\end{equation}

			Finally, there exists $ \tilde{D}_C(M,e,\Lambda,m^2)>0$ such that the following blow-up  estimates hold for all $ s \in \mathcal{C}$: \begin{equation} \label{blow.up.main.theorem.1}
				|\phi|(s) \simeq  \epsilon^{-1} \cdot \log( r^{-1}(s)), \hskip 5 mm  |\frac{d\phi}{dr}|(s) \simeq \frac{\epsilon^{-1}}{r(s)},\end{equation}
			\begin{equation} \label{blow.up.main.theorem.2}
				\color{black}\ep^4\color{black} \cdot  (\frac{r(s)}{r_-})^{-b_-^{-2}\cdot \ep^{-2} \cdot (1-\color{black}  \ep^2  \cdot\log(\ep^{-1})\color{black} \cdot  \tilde{D}_C) } \ls \rho(s)\ls \color{black}\ep^4\color{black}  \cdot  (\frac{r(s)}{r_-})^{-b_-^{-2}\cdot \ep^{-2} \cdot (1+\color{black}  \ep^2  \cdot\log(\ep^{-1})\color{black} \cdot  \tilde{D}_C) },
			\end{equation}\begin{equation} \label{blow.up.main.theorem.3}
				\frac{r^6(s) \cdot\mathfrak{K}(s)}{\rho^2(s)} \simeq \ep^{-4},\end{equation} where $\rho(s)$ is the Hawking mass and $\mathfrak{K}(s):=  R_{\alpha \beta \mu \nu} R^{\alpha \beta \mu \nu}(s)$ is the Kretchsmann scalar. Moreover for all $p>1+\ep$:   \begin{align*} 
				&\sup_{ s\in[\ep^{-1},s_{\infty}]} |\phi|(s)= +\infty, \hskip 10 mm \sup_{ s\in[\ep^{-1},s_{\infty}]} |\frac{d\phi}{dr}|(s)= +\infty,   \numberthis \label{phiblowup}\\& 
				\sup_{s\in[\ep^{-1},s_{\infty}]} \rho(s)= +\infty, \hskip 15 mm  \int_{s\in[\ep^{-1},s_{\infty}] \cap  \{t\in [0,1]\}} \rho^{p}   dvol_g:= 4\pi  \int_{\ep^{-1}}^{s_{\infty}} \rho^{p}(s')  \cdot  r^2(s') \cdot \Omega^2(s') ds' = +\infty,
				\\& 
				\sup_{s\in[\ep^{-1},s_{\infty}]}  r^6(s) \cdot\mathfrak{K}(s)= +\infty, \hskip 4 mm \lim_{\ep \rightarrow 0} \sup_{s\in[\ep^{-1},s_{\infty}]} \frac{ r^6(s) \cdot\mathfrak{K}(s)}{\rho^2(s)}= +\infty,  \hskip 4 mm   \int_{s\in[\ep^{-1},s_{\infty}] \cap  \{t\in [0,1]\}}    (r^3 \cdot \sqrt{\mathfrak{K}})^pdvol_g = +\infty.
			\end{align*}

		\end{thm}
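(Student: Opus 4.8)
The plan is to use spatial homogeneity to reduce \eqref{E1}--\eqref{E5} to the coupled nonlinear ODE system \eqref{Radiusstat}--\eqref{FieldODE3} in the variable $s$ normalized by \eqref{gauge}, with data posed at $\{s=-\infty\}$ as in Section~\ref{data.section}, and then to run a single bootstrap argument propagating forward in $s$ through the regions $\mathcal{R}$, $\mathcal{N}$, $\mathcal{EB}$, $\mathcal{LB}$, $\mathcal{C}$, with weighted estimates tailored to each (this being the content of the forthcoming region-by-region Propositions~\ref{prop.RS.d}, \ref{prop.N.d}, \ref{EB.Prop}, \ref{LB.prop}, \ref{C.prop}, \ref{spacelike.prop}). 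In the red-shift region $\mathcal{R}=\{s\le-\Delta\}$ one commutes the transport equations \eqref{RaychU}--\eqref{Field} with $\partial_s$ and exploits the favorable sign of the red-shift term to get $|g-g_{RN}|+|\dot g-\dot g_{RN}|+|\kappa^{-1}-1|\ls\ep^2 e^{2K_+s}$ and $|\phi-\phil|\ls\ep^3 e^{2K_+s}$, where $\phil=\ep\,\phil^{(1)}$ is the corresponding linearized solution of \eqref{linearKG}. In the no-shift region $\mathcal{N}=\{-\Delta<s\le c\log(\ep^{-1})\}$, only a Gr\"onwall argument survives, losing a factor $C^s\ls\ep^{-1}$; since $\log(\ep^{-1})\ll\ep^{-2}$ this loss is harmless, and these bounds are needed only up to an $\ep$-independent $s=S''$. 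In $\mathcal{EB}=\{S''<s\le\s(\ep)\}$ the new ingredient is the blue-shift estimate $|\tfrac{d}{ds}\log\Omega^2-2K_-|\ls e^{2K_-s}$, forcing $\Omega^2(s)\approx e^{2K_-s}\ll1$; smallness of $\Omega^2$ lets one close difference estimates with only a logarithmic loss, and combining this with the linear scattering resonance $B\neq0$ of Proposition~\ref{KSR.thm}/Corollary~\ref{linearcor} --- valid precisely because $(M,e,\Lambda,m^2)\notin\mathcal{Z}$, i.e.\ \eqref{Res} holds --- yields the sharp asymptotics $|\phi(\s(\ep))-B\ep|\ls\ep^3$ and $|r^2\dot\phi-r_-^2B\ep|\ls\ep^3$ at the end of $\mathcal{EB}$.

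In the late blue-shift region $\mathcal{LB}=\{\s(\ep)<s\le\ep^{-5/4}\}$ strong $C^1$-stability of \eqref{RN} genuinely fails --- $\kappa^{-1}$ and the Vaidya mass $\varpi$ grow like $\ep^2e^{2|K_-|s}$, a remnant of mass inflation --- but weak $C^0$-control persists: $|r-r_-|\ls\ep^2 s\ls\ep^{3/4}$ and $\Omega^2(s)\approx e^{2K_-s}$ via \eqref{Raychstat}. The key step is to propagate the first-order bound $|r^2\dot\phi-r_-^2B\ep|\ls\ep^3$ across $\mathcal{LB}$: integrating \eqref{FieldODE2} and using $\Omega^2\approx e^{2K_-s}$ together with the crude bound \eqref{LB3.intro} on $\phi$, the right-hand side $-m^2r^2\Omega^2\phi$ is integrable in $s$, so $r^2\dot\phi$ stays within $O(\ep^3)$ of $r_-^2B\ep$. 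Feeding this and $\Omega^2\ll1$ into \eqref{req} and \eqref{Raychstat} then produces $-\dot r(s)\simeq\ep^2/r(s)$, i.e.\ $-r\dot r\simeq\ep^2$; this is the dynamical signature of the almost-Cauchy-horizon, and it differs drastically from the Schwarzschild/\eqref{RN} behavior near the would-be Cauchy horizon.

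The heart of the proof is the crushing region $\mathcal{C}=\{\Delta_{\mathcal{C}}\cdot\ep^{-1}<s<s_\infty\}$, which is taken to overlap $\mathcal{LB}$ so that the above bounds hold on its past boundary. Here one bootstraps $\Omega^2(s)\simeq\ep^2\,(r(s)/r_-)^{b_-^{-2}\ep^{-2}}$; under this assumption every term proportional to $\Omega^2$ or $\Omega^2/r^2$ in \eqref{req}, \eqref{FieldODE2} and \eqref{Omegastat2} is negligible as $r\to0$, so $-r\dot r$ and $r^2\dot\phi$ are essentially conserved (equal to their $\mathcal{LB}$-values $\simeq\ep^2$ and $r_-^2B\ep$), and \eqref{Omegastat2} reduces to $\tfrac{d^2}{ds^2}\log(r\Omega^2)\approx-2|\dot\phi|^2\approx-2r_-^4B^2\ep^2\,r^{-4}$. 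Changing the independent variable from $s$ to $r$ via $\dot r\approx-m_\infty/r$, with $m_\infty:=\lim(-r\dot r)\simeq\ep^2$ (the limit exists because $\tfrac{d}{ds}(-r\dot r)$ involves only $\Omega^2$- and $\Omega^2/r^2$-terms, both $\to0$), one integrates twice --- first $\tfrac{d}{ds}\log(r\Omega^2)\approx-\mathrm{const}\cdot r^{-2}$, then $\log(r\Omega^2)\approx\mathrm{const}\cdot\ep^{-2}\log r$ --- recovering the bootstrap, with the $\ep^{-2}$-coefficient equal to $b_-^{-2}=(2|K_-|/B)^2$ at leading order; the $O(\ep)$ relative error in the exponent (as in \eqref{blow.up.main.theorem.2}) comes from the errors accumulated in the two integrations. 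From $-r\dot r\to m_\infty$ one gets $r^2(s)\approx2m_\infty(s_\infty-s)$, hence $r\to0$ at a finite $s_\infty$; pinning down $m_\infty$ via \eqref{mu.static} and \eqref{mass.static} gives $s_\infty=\tfrac{|K_-|}{B^2}\ep^{-2}+O(\ep^{-1})$. Since $s$ is finite at $\mathcal{S}=\{r=0\}=\{s=s_\infty\}$ and $g=\Omega^2(-ds^2+dt^2)+r^2\,d\sigma_{\mathbb{S}^2}$ by \eqref{gts}, any $p\in\mathcal{S}$ sits at finite regular double-null coordinates, so $J^-(p)\cap\mathcal{H}^+$ is compact --- i.e.\ $\mathcal{S}$ is a spacelike singularity.

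It remains to extract the gauge-invariant statements from the $\mathcal{C}$-estimates. From \eqref{mass.static}, $\dot\varpi\simeq\tfrac{r^2}{2}\kappa^{-1}|\dot\phi|^2$; substituting $\kappa^{-1}=-\dot r/\Omega^2$, $\dot\phi\simeq r_-^2B\ep\,r^{-2}$ and the bootstrapped $\Omega^2$, and integrating in $r$, gives $\varpi\simeq\ep^2(r/r_-)^{-b_-^{-2}\ep^{-2}(1+O(\ep^2))}$, whence $\rho\simeq\varpi$ (the $e^2/2r$ and $\Lambda r^3/6$ corrections being lower order) and the bound \eqref{blow.up.main.theorem.2}; the volume estimate \eqref{volume.estimate} follows from $dvol_g=4\pi r\Omega^2\,ds$ and the same substitution. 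The Kretschmann scalar is computed from $(r,\Omega^2,\phi,\varpi)$ via the standard formula for metrics \eqref{gts}, and in $\mathcal{C}$ its matter/Ricci contribution $\sim(|\dot\phi|^2/\Omega^2)^2$ dominates the Weyl/Coulombic contribution $\sim\varpi^2/r^6$ by a factor $\simeq\ep^{-4}$, so $\mathfrak{K}\simeq r^{-2b_-^{-2}\ep^{-2}(1+O(\ep^2))}$ and, crucially, all powers of $r$ cancel in the ratio, leaving $r^6\mathfrak{K}/\rho^2\simeq\ep^{-4}$; together with $|\phi|\simeq\ep^{-1}\log(r^{-1})$ and $|d\phi/dr|=|\dot\phi/\dot r|\simeq\ep^{-1}/r$ (from $r^2\dot\phi\simeq r_-^2B\ep$ and $-r\dot r\simeq\ep^2$) this yields \eqref{blow.up.main.theorem.1}--\eqref{blow.up.main.theorem.3}. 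Finally, all the ``$\sup=+\infty$'' and ``$\int(\cdot)^p\,dvol_g=+\infty$'' assertions follow by elementary integration: with $dvol_g=4\pi r\Omega^2\,ds$ and $ds=dr/\dot r\simeq-r\,dr/\ep^2$, one finds $\int\rho^p\,dvol_g\simeq\int_0\ep^{2p}r^{2-(p-1)b_-^{-2}\ep^{-2}}\,dr$, which diverges as soon as $(p-1)b_-^{-2}\ep^{-2}\ge3$ --- in particular for every $p>1+\ep$ once $\ep$ is small --- and the same computation, with $r^3\sqrt{\mathfrak{K}}\simeq\ep^2 r^{-b_-^{-2}\ep^{-2}}$, handles the curvature integral; the suprema diverge because the rates blow up as $r\to0$, and $\lim_{\ep\to0}\sup(r^6\mathfrak{K}/\rho^2)=\lim_{\ep\to0}(\mathrm{const}\cdot\ep^{-4})=+\infty$. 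The main obstacle is precisely the sharp control of $\phi$ near $\mathcal{S}$: no purely physical-space estimate produces it, so one must import the resonance $B\neq0$ from \cite{ChristophYakov} in $\mathcal{EB}$ and then propagate the first-order bound $|r^2\dot\phi-r_-^2B\ep|\ls\ep^3$ unbroken through $\mathcal{LB}$ and into $\mathcal{C}$, across the transition from the linear to the genuinely nonlinear regime --- this is what controls $-2|\dot\phi|^2$ in \eqref{Omegastat2}, hence the decay rate of $\Omega^2$, hence every power law in the theorem --- while closing the $\mathcal{C}$-bootstrap with the exponent determined to relative accuracy $O(\ep)$, despite the error amplification in the double integration of \eqref{Omegastat2}, is the other delicate point.
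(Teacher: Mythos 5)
Your proposal follows essentially the same approach as the paper's proof in Section~\ref{main.section}: the same region decomposition into $\mathcal{R}$, $\mathcal{N}$, $\mathcal{EB}$, $\mathcal{LB}$, $\mathcal{C}$ with matching weighted estimates in each, the same use of the scattering resonance $B\neq0$ of Proposition~\ref{KSR.thm} to pin down $\dot\phi$ at the end of $\mathcal{EB}$, the same propagation of $|r^2\dot\phi-r_-^2B\ep|\ls\ep^3$ and $-r\dot r\simeq\ep^2$ through $\mathcal{LB}$ and into $\mathcal{C}$, and the same bootstrap $\Omega^2\simeq\ep^2(r/r_-)^{b_-^{-2}\ep^{-2}}$ closed via \eqref{Omegastat2}. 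The few tactical deviations (passing to $r$ rather than $s$ as the integration variable in $\mathcal{C}$; extracting $\rho$ from \eqref{mass.static} by direct integration rather than algebraically from \eqref{mu.static} and \eqref{C6}) lead to the same estimates and do not constitute a different route.
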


		\subsection{Convergence to Reissner--Nordstr\"{o}m-(dS/AdS) in a weak topology}
		
		\begin{defn}[Convergence in distribution] \label{distrib.conv}
			We define the functions the real-line $s \in \RR \rightarrow r^2_{RN}(s)$ and $ s \in \RR \rightarrow \Omega^2_{RN}(s):=-(1-\frac{2M}{r_{RN}(s)}+ \frac{e^2}{r_{RN}^2(s)})$, where $r_{RN}$  is the area-radius function of the spherically symmetric Reissner--Nordstr\"{o}m-(dS/AdS) metric \eqref{RN}, and the coordinate $s=r^*$ on \eqref{RN} (see Section~\ref{RN.section}). Similarly, defining a coordinate $s$ from \eqref{gauge}, we view $r^2_{\ep}(s)$ and $  \Omega^2_{\ep}(s)$ from \eqref{gts} (where $g$ is the metric from Theorem~\ref{maintheorem}) as functions on $\{ s \in  (-\infty,s_{\infty}(\ep))\}$, and we extend these functions on the real-line as $\tilde{r}^2_{\ep}(s)=r^2_{RN}(s)$, $\tilde{\Omega^2}_{\ep}(s)=\Omega^2_{RN}(s)$ for all $s>s_{\infty}(\ep)$. Then for $(f_{\ep},\tilde{f}_{\ep},f_{RN})  \in \{\ (r^2_{\ep},\tilde{r^2}_{\ep},r^2_{RN}),\ (\Omega^2_{\ep},\tilde{\Omega^2}_{\ep},\Omega^2_{RN})\ \}$, we say that $f_{\ep}$ converges uniformly (respectively in $L^p(\RR_s)$, resp. in $\mathcal{D}'(\RR_s)$) to $f_{RN}$ if $s \in \RR \rightarrow \tilde{f}_{\ep}(s)$ converges uniformly (respectively in $L^p$, resp. in distribution) to $s\in \RR \rightarrow f_{RN}(s)$ as $\ep \rightarrow 0$, denoted  $f_{\ep} \overset{L^{\infty}(\RR_s)}{\rightarrow}f_{RN}$ (resp. $f_{\ep} \overset{L^p(\RR_s)}{\rightarrow}f_{RN}$, resp. $f_{\ep} \overset{\mathcal{D}'(\RR_s)}{\rightharpoonup}f_{RN}$).

		\end{defn}
		
		\begin{thm} \label{convergence.theorem}
			Under the assumptions of Theorem~\ref{maintheorem}, the following convergence holds (as in Definition \ref{distrib.conv}):
			\begin{align} \label{convergence}
				&\Omega^2_{\ep}  \overset{\mathcal{D}'(\RR_{s})}  \rightharpoonup \Omega_{RN}^2, \hskip 10 mm \Omega^2_{\ep}  \overset{L^{\infty}(\RR_{s})}  \rightarrow \Omega_{RN}^2, \hskip 10 mm \Omega^2_{\ep}  \overset{L^{p}(\RR_{s})}  \rightarrow \Omega_{RN}^2, \hskip 10 mm ,\\ & r^2_{\ep}  \overset{\mathcal{D}'(\RR_{s})}  \rightharpoonup r_{RN}^2, \hskip 12 mm  \lim_{\ep \rightarrow 0} \sup_{s \leq \ep^{-2+\alpha}} \bigl| r^2(s)- r^2_{RN}(s) \bigr|=0
			\end{align} for all $\alpha \in (0,2)$. 
			Nevertheless, $r_{\ep}^2$ does not converge to $r_{RN}^2$ uniformly or in $L^p(\RR_s)$ for any $p\geq 1$: we have $$ \inf_{s \geq \ep^{-1} } r^2_{\ep}(s)=0, \hskip 10 mm  \int_{\ep^{-1}}^{s_{\infty}} |r^2(s)-r^2_{RN}(s)|^p ds \gtrsim \ep^{-2}.$$

		\end{thm}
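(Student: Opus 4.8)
Since Theorem~\ref{maintheorem} already records all the quantitative input, the proof is a matter of assembling it across the regions of Section~\ref{strategy.section}. The plan is, for each of the six assertions, to decompose the $s$-line as $\RR_s=\{s\le \s(\ep)\}\cup(\mathcal{LB}\cup\mathcal{C})\cup[s_\infty(\ep),+\infty)$, where $\s(\ep)\simeq\log(\ep^{-1})$ is the future endpoint of $\mathcal{EB}$ and $s_{\mathcal{C}}:=\Delta_{\mathcal{C}}\ep^{-1}$ is the past endpoint of $\mathcal{C}$ (recall $\s(\ep)\le s_{\mathcal{C}}\le\ep^{-5/4}$ and $\mathcal{LB}\cup\mathcal{C}$ covers $(\s(\ep),s_\infty(\ep))$ for $\ep$ small), and to use: \eqref{RN.stab.main2} together with the red-shift/no-shift/early-blue-shift estimates (Propositions~\ref{prop.RS.d}, \ref{prop.N.d}, \ref{EB.Prop}) on the first piece; the blue-shift bound $\Omega^2_\ep(s)\approx e^{2K_-s}$ of $\mathcal{LB}$ (cf.\ \eqref{kappa.large}) and the crushing estimates \eqref{C1.intro}, \eqref{dotr.intro} on the middle; and the fact that, by Definition~\ref{distrib.conv}, the extended functions $\tilde r^2_\ep,\tilde\Omega^2_\ep$ coincide identically with $r^2_{RN},\Omega^2_{RN}$ on the last piece.

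\textbf{Convergence of $\Omega^2_\ep$.} First I would prove uniform convergence. On $\{s\le\s(\ep)\}$, \eqref{RN.stab.main2} gives $|\log(\Omega^2_\ep/\Omega^2_{RN})|(s)\ls\ep^2 s$, and since $\Omega^2_{RN}$ is bounded and decays like $e^{2K_+s}$ at $-\infty$ (cf.\ \eqref{Omega_asymp}), this yields $|\Omega^2_\ep-\Omega^2_{RN}|(s)\ls\ep^2\log(\ep^{-1})\to 0$ there (with the tail $\{s\le-\Delta\}$ already $\ls\ep^2$ by Proposition~\ref{prop.RS.d}). On $\mathcal{LB}\cup\mathcal{C}$, \eqref{C1.intro} shows $\Omega^2_\ep$ is decreasing on $\mathcal{C}$ (the exponent $b_-^{-2}\ep^{-2}$ is positive and $r$ is decreasing), so $\Omega^2_\ep(s)\ls\Omega^2_\ep(s_{\mathcal{C}})\approx e^{2K_- s_{\mathcal{C}}}\ls e^{2K_-\s(\ep)}\ls\ep$ throughout, the same bound holding on $\mathcal{LB}$ directly and, by Corollary~\ref{linearcor}, for $\Omega^2_{RN}$; on $[s_\infty,+\infty)$ the difference vanishes. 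Hence $\sup_{\RR}|\tilde\Omega^2_\ep-\Omega^2_{RN}|\ls\ep^2\log(\ep^{-1})+\ep\to 0$. For $L^p$ convergence I would integrate these pointwise bounds: the only non-routine contribution is $\int_{\mathcal{C}}(\Omega^2_\ep)^p\,ds$, which by \eqref{C1.intro} and the substitution $ds=\dot r^{-1}\,dr$ with $-\dot r\simeq\ep^2 r^{-1}$ (from \eqref{dotr.intro}), together with $r(s_{\mathcal{C}})\le r_-$, equals up to constants $\ep^{2p-2}\int_0^{r(s_{\mathcal{C}})}(r/r_-)^{pb_-^{-2}\ep^{-2}} r\,dr\simeq\ep^{2p}\to 0$; every other piece is $\ls\ep^{2p}+\log(\ep^{-1})(\ep^2\log(\ep^{-1}))^p+\ep^p\to 0$. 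Convergence $\Omega^2_\ep\rightharpoonup\Omega^2_{RN}$ in $\mathcal{D}'(\RR_s)$ then follows from either.

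\textbf{Convergence of $r^2_\ep$ on $\{s\ls\ep^{-2+\alpha}\}$ and in $\mathcal{D}'$.} Here I would bound $|r^2_\ep(s)-r^2_{RN}(s)|\le|r^2_\ep(s)-r_-^2|+|r_-^2-r^2_{RN}(s)|$. The second term is $\ls e^{2K_-\s(\ep)}\ls\ep$ for $s\ge\s(\ep)$ (since $|r_{RN}(s)-r_-|\ls\int_s^{\infty}\Omega^2_{RN}\ls e^{2K_-s}$ by Corollary~\ref{linearcor}) and $\ls\ep$ for $s\le\s(\ep)$ by \eqref{RN.stab.main2}. For the first term, $|r^2_\ep(s)-r_-^2|\ls\ep^2 s$ holds: by \eqref{RN.stab.main2} for $s\le\s(\ep)$; by Proposition~\ref{LB.prop} on $\mathcal{LB}$; and on $\mathcal{C}\cap\{s\le\ep^{-2+\alpha}\}$ (which is a proper subset of $\mathcal{C}$ since $\ep^{-2+\alpha}<s_\infty(\ep)$ for $\alpha<2$) by integrating $-r\dot r\simeq\ep^2$ (see \eqref{dotr.intro}) from $s_{\mathcal{C}}$, where $|r^2_\ep(s_{\mathcal{C}})-r_-^2|\ls\ep$. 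Thus $\sup_{s\le\ep^{-2+\alpha}}|r^2_\ep-r^2_{RN}|\ls\ep^2\log(\ep^{-1})+\ep^\alpha+\ep\to 0$ for every $\alpha\in(0,2)$. Distributional convergence $r^2_\ep\rightharpoonup r^2_{RN}$ is then immediate: any $\chi\in C_c^\infty(\RR_s)$ has $\mathrm{supp}\,\chi\subset[-N,N]\subset\{s<s_\infty(\ep)\}\cap\{s\le\ep^{-1}\}$ for $\ep$ small, so $\tilde r^2_\ep\equiv r^2_\ep$ there and $|\langle\tilde r^2_\ep-r^2_{RN},\chi\rangle|\le\|\chi\|_{L^1}\sup_{[-N,N]}|r^2_\ep-r^2_{RN}|\to 0$.

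\textbf{Failure of uniform and $L^p$ convergence of $r^2_\ep$, and the main obstacle.} By Theorem~\ref{maintheorem}, $r(s)\to 0$ as $s\to s_\infty(\ep)$ and $s_\infty(\ep)\simeq\ep^{-2}>\ep^{-1}$, so $\inf_{s\ge\ep^{-1}}r^2_\ep(s)=0$; since $r^2_{RN}(s)>r_-^2>0$ for every $s\in\RR$, this already forces $\|\tilde r^2_\ep-r^2_{RN}\|_{L^\infty(\RR)}\ge r_-^2$, so uniform convergence fails. For the $L^p$ lower bound the key point is the \emph{slow} collapse: integrating $-r\dot r\simeq\ep^2$ (from \eqref{dotr.intro}) backward from $s_\infty(\ep)$, where $r=0$, gives $r^2_\ep(s)\simeq\ep^2\bigl(s_\infty(\ep)-s\bigr)$ throughout $\mathcal{C}$; hence there is $s_*$ with $s_\infty(\ep)-s_*\simeq\ep^{-2}$ such that $r^2_\ep(s)\le r_-^2/2$ on $[s_*,s_\infty(\ep))\subset(\ep^{-1},s_\infty(\ep))$, and there $|r^2_\ep(s)-r^2_{RN}(s)|\ge r^2_{RN}(s)-r^2_\ep(s)\ge r_-^2/2$; therefore $\int_{\ep^{-1}}^{s_\infty}|r^2_\ep-r^2_{RN}|^p\,ds\ge(r_-^2/2)^p\bigl(s_\infty(\ep)-s_*\bigr)\gtrsim\ep^{-2}$. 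I expect this last step to be the only genuinely non-routine ingredient: it relies on combining the precise value $s_\infty(\ep)\simeq\ep^{-2}$ with the nonlinear estimate $-r\dot r\simeq\ep^2$ valid on \emph{all} of $\mathcal{C}$ (itself a remnant of the near-stability of the Reissner--Nordstr\"{o}m-(dS/AdS) Cauchy horizon, Section~\ref{violent.section}), which together say that $r$ needs an $s$-interval of macroscopic length $\simeq\ep^{-2}$ --- a fixed positive fraction of the whole spacetime --- to drop by an order-one amount, so that $r^2_\ep$ and $r^2_{RN}$ genuinely disagree on a set carrying $L^p$ mass $\gtrsim\ep^{-2}$ rather than $o(1)$.
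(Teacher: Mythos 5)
Your proposal is correct and follows essentially the same route as the paper: Section~\ref{convergence.section} proves Theorem~\ref{convergence.theorem} by assembling the region-by-region estimates of Section~\ref{main.section} (specifically \eqref{diffR2}, \eqref{diff.N.3}, \eqref{EB4}, \eqref{LB1}, \eqref{C3}, \eqref{rs}, \eqref{sinfty}, and the analogous $\Omega^2$ difference bounds), which is exactly what you do, just spelled out more explicitly. Two small quantitative remarks that do not affect the conclusion: your uniform bound $\sup|\tilde\Omega^2_\ep-\Omega^2_{RN}|\ls\ep^2\log(\ep^{-1})+\ep$ is correct but not sharp --- using that $s\cdot\Omega^2_{RN}(s)$ and $e^{2K_-\s}\simeq\ep^2$ are uniformly bounded yields the paper's $O(\ep^2)$ rate \eqref{OmegaLinfty}; and your $L^p$ lower bound for $r^2_\ep-r^2_{RN}$, based on $r^2_\ep(s)\simeq\ep^2(s_\infty-s)$ on $\mathcal{C}$ (i.e.\ \eqref{rs}) and $s_\infty\simeq\ep^{-2}$, is precisely the mechanism underlying \eqref{r-rRNL1blowup} and \eqref{not.unif} in the paper's proof, and you correctly identify it as the only non-routine step.
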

		\subsection{Uniform Kasner-like behavior}
		\begin{thm} \label{Kasner.thm}
			Under the assumptions of Theorem~\ref{maintheorem}, the metric can be expressed in the following form, defining the proper time variable $\tilde{\tau}>0$ as $g_{\tilde{\tau} \alpha}=-\delta_{\alpha}^{\tilde{\tau}}$ (i.e.\ null shift, unitary lapse gauge, also called synchronous gauge):
			\begin{equation} \label{metric.Kasner.unif}
				g_{\ep}= -d\tilde{\tau}^2 +  \left[ 1+  \mathfrak{E}^{rad}_{\ep}(\tilde{\tau})\right]\cdot  \tilde{\tau}^{2 [1-4b_-^2 \ep^2 \cdot (1+\color{black} \ep^2 \cdot \log(\ep^{-1}) \color{black} \cdot P(\tilde{\tau}))]}  d\rho^2  +    \left[1+\mathfrak{E}^{ang}_{\ep}(\tilde{\tau})\right]\cdot \tilde{\tau}^{4b_-^2\ep^2 (1+\ep \cdot P(\tilde{\tau}))} \cdot  (\color{black}\tilde{r}^2_{-}\color{black} \cdot   \color{black}  d\sigma_{\mathbb{S}^2}\color{black}),
			\end{equation}
			where \color{black} $\tilde{r}_-=  \eta^{\ep^2}\cdot r_-(M,e,\Lambda) $ for some $\eta(M,e,\Lambda,m^2)>0$ and \color{black} $P(\tilde{\tau})$, $\mathfrak{E}^{rad}_{\ep}(\tilde{\tau})$ and $\mathfrak{E}^{ang}_{\ep}(\tilde{\tau})$ obey the following estimates in the region \color{black} $  \mathcal{C}:=\{0< \tilde{\tau} < \tilde{\tau}_{\mathcal{C}}(\ep)\}$, where $\tilde{\tau}_{\mathcal{C}}(\ep)=\color{black}\Gamma \cdot \ep^4$  for some $0<\Gamma<\Gamma_0(M,e,\Lambda,m^2)$\color{black}: there exists $P_0(M,e,\Lambda,m^2)>0$, $C_0(M,e,\Lambda,m^2)>0$ such that \begin{align} \label{main.uniform.0}
				&\sup_{\tilde{\tau} \in \mathcal{C}} |P|(\tilde{\tau}) \leq P_0,\\  \label{main.uniform.1}&\sup_{\tilde{\tau} \in \mathcal{C} }|\mathfrak{E}^{rad}_{\ep}|(\tilde{\tau}) \leq C_0 \cdot\ep^2 \log^2(\ep^{-1})\color{black},\\ \label{main.uniform.2}
				& \sup_{\tilde{\tau} \in \mathcal{C} } |\mathfrak{E}^{ang}_{\ep}|(\tilde{\tau}) \leq C_0 \cdot \ep^4 \log^2(\ep^{-1}) .\end{align}
			In particular, $\mathfrak{E}^{ang}_{\ep}(\tilde{\tau})$ and $\mathfrak{E}^{rad}_{\ep}(\tilde{\tau})$ tend uniformly to $0$ in $ \mathcal{C}$  as $\ep$ tends to $0$.
			
			We also have the following estimate comparing the proper time $\tilde{\tau}$ and the re-normalized area-radius $X:=\frac{r}{r_-}$: \color{black} for some $\tau_{eff}(M,e,\Lambda,m^2)>0$ \color{black}\begin{equation} \label{Kasner.time}
				\sup_{ \tilde{\tau} \in \mathcal{C}}\ \bigl|\ \log(\frac{\tilde{\tau}  }{ \color{black} \tau_{eff}(M,e,\Lambda,m^2) \color{black}\cdot X(\tilde{\tau})^{\frac{  \ep^{-2} }{2 b_-^{2}}\cdot (1+ \color{black} \ep^2 \cdot \log(\ep^{-1}) \color{black} \cdot  P(\tilde{\tau}))^{-1}} }) \ \bigr| \ls \ep^2 \log^2(\ep^{-1}).
			\end{equation}  Additionally, the scalar field blows-up when $\tilde{\tau}$ approaches $0$: there exists $Q_0(M,e,\Lambda,m^2)>0$, $|Q(\tilde{\tau})|< Q_0$ such that  \begin{equation} \label{SF.thm}
				\sup_{\tilde{\tau} \in \mathcal{C}}\ \bigl|\ \phi(\tilde{\tau})- 2b_- \cdot \ep \cdot (1+\color{black} \ep^2 \cdot \log(\ep^{-1}) \color{black}\cdot  Q(\tilde{\tau}))\cdot \log (\tilde{\tau}^{-1})\ \bigr|\ \lesssim \color{black} \epsilon \cdot \log(\ep^{-1}). \color{black}
			\end{equation}

			Moreover, we have the following estimates in the  $L^p([0,\tilde{\tau}_{\mathcal{C}}])$ norm for $1 \leq p < \ep^{-2}$\color{black}:
			\begin{align} \label{metric.Kasner.BMO}
				& g_{\ep}= -d\tilde{\tau}^2 +  \tilde{\tau}^2 \cdot \left(1 + E^{rad}_{\ep}(\tilde{\tau})\right)  d\rho^2  +    \left( 1 +E^{ang}_{\ep}(\tilde{\tau})\right) \cdot  (r_-^2 \cdot   \color{black} d\sigma_{\mathbb{S}^2})\color{black},\\
				\label{main.BMO1}
				&  \tilde{\tau}^{-\frac{1}{p}}_{\mathcal{C}}\|   E^{rad}_{\ep}(\tilde{\tau}) \|_{L^p([0,\tilde{\tau}_{\mathcal{C}}])} \lesssim  \frac{\ep^2 \log(\ep^{-1})}{1-p\cdot \ep^2} ,\\
				\label{main.BMO2}
				& \tilde{\tau}^{-\frac{1}{p}}_{\mathcal{C}}\|   E^{ang}_{\ep}(\tilde{\tau}) \|_{L^p([0,\tilde{\tau}_{\mathcal{C}}])} \lesssim   \frac{\ep^2 \log(\ep^{-1})}{1-p\cdot \ep^2} 
			\end{align}
			In particular, $  \tilde{\tau}^{-\frac{1}{p}}_{\mathcal{C}}\color{black}E^{ang}_{\ep}$ and $ \tilde{\tau}^{-\frac{1}{p}}_{\mathcal{C}}\color{black}E_{\ep}^{rad}$ converge  to $0$ in $L^p([0,\tilde{\tau}_{\mathcal{C}}])$ \color{black} as $\ep$ tends to $0$.
		\end{thm}

		\begin{rmk}
			Up to the errors $\mathfrak{E}^{ang}_{\ep}$, $\mathfrak{E}^{rad}_{\ep}$ (which converge to $0$), $(g_{\ep},\phi)$ corresponds to a Kasner metric of the form \eqref{Kasner.exact} satisfying the relation \ref{p.Kasner} (the first one exactly, the second one to top order). Namely:  \begin{align}
				&p_{rad}=1-4  b_-^2 \ep^2 \cdot (1+\ep \cdot P(\tilde{\tau})), \hskip 3 mm p_x= p_y = 2 b_-^2 \ep^2 \cdot (1+\ep \cdot P(\tilde{\tau})), \hskip 3 mm p_{\phi}= 2 b_- \cdot \ep \cdot (1+\ep \cdot Q(\tilde{\tau})) ;\\ & \hskip 30 mm p_{rad}+p_x+ p_y =1, \hskip 10 mm p_{rad}^2+p_x^2+ p_y^2 + 2 p_{\phi}^2 =1 + O(\ep^3).
			\end{align}
			Note that $p_x$, $p_y$, $p_{rad}$, $p_{\phi}$ depend on $\tilde{\tau}$, although the dependence is lower order in $\ep$, so, even assuming $\mathfrak{E}^{ang}_{\ep}=\mathfrak{E}^{rad}_{\ep}=0$, $(g_{\ep},\phi)$ is not an exact Kasner metric. This is due to the non-linear back-reaction which already acts as a (non-trivial) perturbation in the early-time regions where the dynamics are mostly linear. 
		\end{rmk}

		\begin{rmk}	Because we work on a time-interval $[0,\tilde{\tau}_{\mathcal{C}}]$ where $\tilde{\tau}_{\mathcal{C}} \simeq \ep^4$, it is important to prove estimates in a \textit{scale-invariant} way, which is why  we considered  $\tilde{\tau}^{-\frac{1}{p}}_{\mathcal{C}}E^{rad}_{\ep}$ and $\tilde{\tau}^{-\frac{1}{p}}_{\mathcal{C}}E^{ang}_{\ep}$ instead of  $E^{rad}_{\ep}$ and $E^{ang}_{\ep}$. \end{rmk}

						\section{Proof of Theorem~\ref{maintheorem}} \label{main.section}

						\subsection{Preliminary estimates up to the no-shift region} \label{NS.section1}
						
						In this section, we prove that there exists a time $\Delta_0(\ep) \simeq \log(\epsilon^{-1})$ up to which we are in the linear regime and we derive sub-optimal estimates using a basic Gr\"{o}nwall argument. Nevertheless, although we prove the estimates up to $s\simeq \log(\epsilon^{-1})$, we will ultimately only apply them for $s \leq S''(M,e,\Lambda,m^2)$ where $S''(M,e,\Lambda,m^2)>0$ is a large constant independent of $\epsilon$. More refined estimates (building up on this section) will be derived in Section~\ref{difference.section} in the region $s \simeq \log(\epsilon^{-1})$. We consider two different regions for the moment: the red-shift region $\mathcal{R}:= \{ s \leq -\Delta\}$ where $\Delta(M,e,\Lambda)>0$ is a large number and the no-shift region $\mathcal{N}:= \{ -\Delta \leq s \leq \Delta_0(\ep)\}$ for $\Delta_0(\ep) = c(M,e,\Lambda,m^2) \cdot \log(\epsilon^{-1})>0$.
						
						\begin{lem} \label{lemma.RS1} There exist $D_R(M,e,\Lambda,m^2)>0$ and $\Delta(M,e,\Lambda)>0$  such that for all $ s \in \mathcal{R}:=\{ s \leq -\Delta\}$:
							\begin{equation} \label{RS1}
								|\phi|(s)+\frac{|\dot{\phi}|(s)}{\Omega^2(s)} \leq D_R \cdot \epsilon,
							\end{equation}
							\begin{equation} \label{RS2}
								|\kappa^{-1}(s)-1| \leq D_R \cdot \epsilon^2
							\end{equation}
							\begin{equation} \label{RS3.0}
								|r(s)-r_+(M,e,\Lambda)| +|2K(s)-2K_+|  \leq D_R \cdot \Omega^2(s),
							\end{equation}
							\begin{equation} \label{RS3}
								|\varpi(s)-M|  \leq D_R \cdot \Omega^2(s) \cdot \epsilon^2,
							\end{equation}
							\begin{equation} \label{RS4}
								|\frac{d\log(\Omega^2)}{ds}(s)-2K(s) | +	|\log(\Omega^2)(s)-2K_+\cdot s | \leq D_R \cdot[ \Omega^2(s)\color{black}+\epsilon^2] \color{black} .
							\end{equation}

						\end{lem}

						\begin{proof}
							The proof is a simpler version of Proposition 4.5 in \cite{Moi}. For the convenience of the reader we briefly sketch the argument. We make the following two bootstrap assumptions \begin{equation} \label{BSR1}
								|	\frac{d \log(\Omega^2)}{ds}(s)- 2K_+\color{black} | \leq K_+, \hskip 4 mm |\phi|(s) \leq 4 D_R \cdot \epsilon.
							\end{equation}  The key estimate is to use \eqref{FieldODE3} to derive for some $\frac{K_+}{8}<a<\frac{K_+}{4}$ $$ \frac{d}{ds} \left( \frac{r^2 \dot{\phi} e^{as}} {\Omega^2}\right)= (a -\frac{d}{ds} \log(\Omega^2)) \frac{r^2 \dot{\phi}\cdot e^{as}} {\Omega^2}- m^2 r^2 \phi \cdot e^{as},$$ then we integrate, using the fact that $(a -\frac{d}{ds} \log(\Omega^2))< -\frac{3K_+}{4}$ and Gr\"{o}nwall's inequality, together with \eqref{dotphi.data} and \eqref{BSR1}. This shows that for all $s \in \mathcal{R}$: $$ |\dot{\phi}|(s)  \lesssim D_R \cdot \Omega^2(s) \cdot \epsilon, $$ which closes the bootstrap of $\phi$ since $\int_{-\infty}^{s}\Omega^2(s') ds' \lesssim e^{-K_+ \Delta }< \frac{1}{4}$,  if  we choose $\Delta$ large enough. The other estimates follow from integration and improve the other bootstrap assumptions from \eqref{BSR1}, see \cite{Moi} for details.
						\end{proof}
						
						\begin{lem} \label{lemma.N1} There exist $C=C(M,e,\Lambda,m^2)>1$ and $D_N(M,e,\Lambda,m^2)>0$ such that the following estimates hold for all $ s \in \mathcal{N}:=\{ s \leq \Delta_0(\epsilon)\}$, where $\Delta_0(\epsilon) = \frac{\log(C(M,e,m^2))^{-1}}{2}\cdot \log(\epsilon^{-1})$ \color{black} is so that $C^{s} \cdot \epsilon = \sqrt{\epsilon}$\color{black}
							\begin{equation} \label{N1}
								|\phi|(s)+|\dot{\phi}|(s) \leq D_N \cdot C^{s} \cdot \epsilon,
							\end{equation}
							\begin{equation} \label{N2}
								|\kappa^{-1}-1|(s)\leq D_N \cdot (C^{s} \cdot \epsilon)^2
							\end{equation}
							\begin{equation} \label{N3}
								|\frac{d\log(\Omega^2)}{ds}-2K|(s) \leq D_N \cdot (C^{s} \cdot \epsilon)^2,
							\end{equation}
							\begin{equation} \label{N4}
								|\varpi-M|(s)   \leq D_N \cdot (C^{s} \cdot \epsilon)^2.
							\end{equation}
							
						\end{lem}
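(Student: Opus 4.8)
The plan is to run a single continuity (bootstrap) argument on the connected region $\mathcal{N}=\{-\Delta\le s\le\Delta_0(\epsilon)\}$, feeding in the conclusions of Lemma~\ref{lemma.RS1} as seed data at $s=-\Delta$; the region has length $O(\log(\epsilon^{-1}))$, and $\Delta_0=\tfrac{\log(C)^{-1}}{2}\log(\epsilon^{-1})$ is chosen so that the Gronwall amplification factor $C^s$ satisfies $C^s\epsilon\le C^{\Delta_0}\epsilon=\epsilon^{1/2}\to0$ throughout $\mathcal{N}$, keeping every dynamical quantity in the perturbative, Cauchy-stable regime $g_\epsilon\approx g_{RN}$. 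I would take as bootstrap assumptions $|\phi|(s)+|\dot\phi|(s)\le2D_NC^s\epsilon$ together with $|\kappa^{-1}(s)-1|$, $|r(s)-r_{RN}(s)|$, $|\log(\Omega^2(s)/\Omega^2_{RN}(s))|$, $|\varpi(s)-M|\le2D_N(C^s\epsilon)^2$; since $(C^s\epsilon)^2\to0$, these already force $r(s)\ge r_-/2$, $\Omega^2(s)\ls\Omega^2_{RN}(s)$ (bounded on $[-\Delta,+\infty)$), $\kappa^{-1}(s)\approx1$ and $|\dot r|(s)=\Omega^2(s)\kappa^{-1}(s)\ls1$ on $\mathcal{N}$.

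To improve the scalar-field bound I would integrate the conservation form \eqref{FieldODE2} of the Klein--Gordon equation, $\frac{d}{ds}(r^2\dot\phi)=-m^2r^2\Omega^2\phi$: using $|r^2\dot\phi|(-\Delta)\ls\epsilon$ and $|\phi|(-\Delta)\ls\epsilon$ from Lemma~\ref{lemma.RS1}, together with $r\simeq1$ and $\Omega^2\ls\Omega^2_{RN}$, this gives $|\dot\phi|(s)\ls\epsilon+\int_{-\Delta}^s\Omega^2_{RN}|\phi|\,ds'$, which combined with $|\phi|(s)\ls\epsilon+\int_{-\Delta}^s|\dot\phi|\,ds'$ and Gronwall yields $|\phi|(s)+|\dot\phi|(s)\ls\epsilon\,e^{\tilde C(s+\Delta)}$ for some $\tilde C(M,e,\Lambda,m^2)>0$. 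Setting $C:=e^{\tilde C}$ (after possibly enlarging it, see below) and $D_N$ large enough then improves the $\phi$-bootstrap and simultaneously fixes $C$ and $\Delta_0$. The mass estimate \eqref{N4} follows by integrating \eqref{mass.static}, $\dot\varpi=\tfrac{r^2}{2}\kappa^{-1}|\dot\phi|^2+\tfrac{m^2}{2}r^2\dot r|\phi|^2$, against the just-established $|\phi|,|\dot\phi|\ls C^s\epsilon$ and the seed $|\varpi-M|(-\Delta)\ls\epsilon^2$, using $\int_{-\Delta}^s(C^{s'}\epsilon)^2\,ds'\ls(C^s\epsilon)^2$.

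For the remaining metric quantities I would subtract from \eqref{req} and \eqref{Omegastat} their $\phi\equiv0$ Reissner--Nordstr\"{o}m counterparts (using $\kappa_{RN}\equiv1$ and that $\frac{d\log\Omega^2_{RN}}{ds}$ equals $2K$ from \eqref{Kdef} with $\varpi=M$, $r=r_{RN}$), obtaining equations for $r-r_{RN}$ and $\log(\Omega^2)-\log(\Omega^2_{RN})$ in which the genuinely nonlinear forcing ($-2\dot\phi^2$, $-m^2r^2\Omega^2|\phi|^2$) is $\ls(C^s\epsilon)^2$ and every remaining term is linear in the differences with a coefficient $\ls\Omega^2_{RN}$. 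Since $\int_{\RR}\Omega^2_{RN}<\infty$, one and two integrations from $-\Delta$ followed by Gronwall give $\big|\log(\Omega^2/\Omega^2_{RN})\big|(s)+\big|\tfrac{d}{ds}\log(\Omega^2/\Omega^2_{RN})\big|(s)\ls(C^s\epsilon)^2$ and $|\dot r-\dot r_{RN}|(s)\ls\Omega^2_{RN}(s)(C^s\epsilon)^2$, hence $|r-r_{RN}|(s)\ls(C^s\epsilon)^2$, and even $|r-r_{RN}|(s)\ls\Omega^2_{RN}(s)(C^s\epsilon)^2$ once $C$ is taken with $\log C\ge|K_-|$, so that the factor $e^{2|K_-|s}\simeq1/\Omega^2_{RN}(s)$ is absorbed into $C^{2s}$. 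Then \eqref{N3} follows from $\frac{d\log\Omega^2}{ds}-2K=\tfrac{d}{ds}\log(\Omega^2/\Omega^2_{RN})-(2K-2K_{RN})$ with $|2K-2K_{RN}|\ls|r-r_{RN}|+|\varpi-M|$, and \eqref{N2} follows from the algebraic Raychaudhuri constraint \eqref{mu.static}, $\kappa^{-2}\Omega^2=-(1-\tfrac{2\varpi}{r}+\tfrac{e^2}{r^2}-\tfrac{\Lambda r^2}{3})$: comparing with the same identity on $g_{RN}$ expresses $\kappa^{-2}-1$ through $\log(\Omega^2/\Omega^2_{RN})$, $\varpi-M$ and $(r-r_{RN})/\Omega^2_{RN}$, all now $\ls(C^s\epsilon)^2$. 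As $(C^s\epsilon)^2\le(C^{\Delta_0}\epsilon)^2=\epsilon\to0$, each improved bound strictly beats its bootstrap value for $\epsilon$ small, and the continuity argument closes on all of $\mathcal{N}$.

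The main obstacle is the degeneration $\Omega^2_{RN}(s)\simeq e^{2K_-s}\to0$ near the future end of $\mathcal{N}$ (where $2K_-<0$): this is exactly where the Reissner--Nordstr\"{o}m Cauchy-horizon blue-shift starts to act on $\kappa^{-1}$ (cf.\ Remark~\ref{blue-shift.instab}), and a direct use of the Raychaudhuri equation \eqref{Raychstat}, $\dot{(\kappa^{-1})}=\frac{r}{\Omega^2}|\dot\phi|^2$, would generate a factor $\Omega^{-2}$ that, fed only the crude Gronwall bound $|\dot\phi|\ls C^s\epsilon$, swamps the intended gain $(C^s\epsilon)^2$. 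The fix, used throughout, is to keep every error weighted by the \emph{integrable} quantity $\Omega^2_{RN}$ rather than by $\Omega^{-2}$, to read off $\kappa^{-1}$ from the algebraic constraint \eqref{mu.static} instead of from \eqref{Raychstat}, to exploit the hidden $\Omega^2_{RN}$-weight carried by $r-r_{RN}$, and to take the amplification base $C$ large enough (in particular $C\ge e^{|K_-|}$, and $C\ge e^{\tilde C}$ with $\tilde C$ the Gronwall constant of the $\phi$-estimate) while keeping $\Delta_0=\tfrac{\log(C)^{-1}}{2}\log(\epsilon^{-1})$ short enough that the blue-shift has not yet grown to $\epsilon^{-1}$-scale. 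The real work is in making this choice of $C$, $\Delta$, $D_N$ mutually consistent; dynamically $\mathcal{N}$ is still the Cauchy-stability regime, and the estimates are deliberately lossy (by the factor $C^s$), to be sharpened only later in $\mathcal{EB}$.
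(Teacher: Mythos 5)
Your proposal is correct in substance and its core mechanism is the same as the paper's: a bootstrap on $[-\Delta,\Delta_0]$ closed by Gronwall applied to the conservation form \eqref{FieldODE2} of the wave equation (the paper works with the pair $(\varphi,\phi)=(r^2\dot\phi,\phi)$ exactly as you do), with $\Delta_0$ calibrated so that $C^{\Delta_0}\ep=\ep^{1/2}$ keeps everything perturbative. Where you genuinely diverge is in how the metric bounds \eqref{N2}--\eqref{N4} are obtained. The paper bootstraps the four stated, gauge-intrinsic quantities directly with weights $C^{5s/4}$, plus the crude bound $\Omega^2\le 4D_N$ and $r^{-1}\ls 1$; it never compares with the background $(r_{RN},\Omega^2_{RN})$ at this stage, deferring all difference estimates to Proposition~\ref{prop.N.d}. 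You instead bootstrap $|r-r_{RN}|$ and $|\log(\Omega^2/\Omega^2_{RN})|$ and read \eqref{N2}, \eqref{N3} off the algebraic constraint \eqref{mu.static} and the identity $\tfrac{d}{ds}\log\Omega^2_{RN}=2K_{RN}$. This works, and in effect merges Lemma~\ref{lemma.N1} with a version of Proposition~\ref{prop.N.d}, but it creates one obligation the paper's route does not have: your seed data at $s=-\Delta$ must include $|r-r_{RN}|(-\Delta)\ls\ep^2$ and $|\log(\Omega^2/\Omega^2_{RN})|(-\Delta)\ls\ep^2$, and these are \emph{not} among the conclusions of Lemma~\ref{lemma.RS1} (which controls only $\phi$, $\kappa^{-1}-1$, $\varpi-M$ and $\tfrac{d}{ds}\log\Omega^2-2K$); they are exactly the content of Proposition~\ref{prop.RS.d}, obtained by the same red-shift argument seeded at $s=-\infty$ where the differences vanish identically. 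So you should either establish the $\mathcal{R}$-difference estimates first or extend your difference bootstrap to $(-\infty,\Delta_0]$. Finally, your stated ``main obstacle'' is milder than you suggest: integrating the Raychaudhuri equation \eqref{Raychstat} directly with $|\dot\phi|\ls C^s\ep$ gives $|\kappa^{-1}-1|\ls e^{2|K_-|s}(C^s\ep)^2=(\tilde{C}^s\ep)^2$ with $\tilde{C}=Ce^{|K_-|}$, which is already of the admissible form after the same enlargement of $C$ that you invoke elsewhere; the detour through \eqref{mu.static} is a legitimate alternative, not a necessity.
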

						\begin{proof}
							We bootstrap \eqref{N1}-\eqref{N4} replacing $C^{s}$ by $C^{\frac{5s}{4}}$. In particular, for $\epsilon$ small enough and given that $C^s \leq \epsilon^{-\frac{1}{2}}$ by definition of $\Delta_0$, the estimates \eqref{N1}--\eqref{N4} are satisfied with $C^s \ep$ replaced by $\epsilon^{\frac{3}{8}}$. We also bootstrap \begin{equation}\label{BT}
								\Omega^2(s) \leq 4D_N.
							\end{equation}  Then, coupling the bootstraps with \eqref{mu.static}, it is clear that for some $E(M,e,\Lambda)>0$ \begin{equation}\label{rN}
								r^{-1}(s) \leq E(M,e,\Lambda)
							\end{equation} for all $s \in \mathcal{N}$. Therefore, by \eqref{FieldODE2}, using the bootstraps and defining $\varphi:= r^2 \dot{\phi}$, there exists $C(M,e,\Lambda,m^2)>0$ such that $$ |\frac{d}{ds}(\varphi,\phi)| \leq C(M,e,\Lambda,m^2) \cdot |(\varphi,\phi)|. $$ Therefore, by Gr\"{o}nwall's inequality, the bootstrap corresponding to \eqref{N1} is retrieved. It is then not difficult to retrieve the other bootstraps using \eqref{N1} and improve the $C^{\frac{5s}{4}}$ weights to a smaller $C^{s}$ weight (recall $C>1$). \color{black} Note in particular that to retrieve the bootstrap assumption \eqref{BT}, we use the identity \eqref{mu.static} and the bootstrap $|\varpi-M| \leq D_N (C^{\frac{5s}{4}} \ep)^2$ \color{black} to get $$ \Omega^2 \leq 2 |1-\frac{2M}{r}+ \frac{e^2}{r^2}-\frac{\Lambda}{3} r^2| + \frac{ \sqrt{\epsilon}}{r},$$ which  with \eqref{rN} is sufficient to conclude (after taking $D_N$ large enough, say $D_N > 4 \underset{r_-< r<r_+} {\sup} |1-\frac{2M}{r}+ \frac{e^2}{r^2}-\frac{\Lambda}{3} r^2|$  and $\ep$ small enough so that $\sqrt{\ep}\cdot  E(M,e,\Lambda) \leq \frac{D_N}{2}$ \color{black}). \end{proof}

						\subsection{Estimates for the difference with  linearized quantities} \label{difference.section}
						
						In this section, we control the difference between the Reissner--Nordstr\"{o}m quantities $(r_{RN},\Omega^2_{RN}, \phil)$ and $(r,\Omega^2, \phi)$. We shall denote these differences $\dphi= \phi- \phil$, $ \domega = \Omega^2 -\Omega^2_{RN}$, $\dr=r-r_{RN}$. From now on, we will use the above notation throughout the paper.
						
						\begin{prop} \label{prop.RS.d}
							There exists $D'_{R}(M,e,\Lambda,m^2)>0$ such that the following estimates are true for all $s\in \mathcal{R}$:	\begin{equation} \label{diffR}
								|\domega|(s) \leq  D'_{R} \cdot \epsilon^2 \cdot \color{black}\Omega^2(s)\color{black}, \end{equation}\begin{equation}\label{diff.1.75}
								|\delta \log(\Omega^2)|(s) \leq  D'_{R} \cdot  \epsilon^2,
							\end{equation}\begin{equation} \label{diffR1.5}
								|\frac{d}{ds}(\delta \log(\Omega^2))|(s) \leq  D'_{R} \cdot  \epsilon^2 \cdot \Omega^2(s) \color{black},
							\end{equation}\begin{equation} \label{diffR2}
								|\dr|(s) \leq  D'_{R} \cdot \epsilon^2 \cdot \Omega^2(s),
							\end{equation} \begin{equation} \label{diffR3}
								|\ddr|(s) \leq  D'_{R} \cdot \epsilon^2 \cdot \Omega^2(s),
							\end{equation}\begin{equation} \label{diffR4}
								|\dphi|(s) \leq  D'_{R} \cdot \epsilon^3 \cdot \Omega^2(s),
							\end{equation} \begin{equation} \label{diffR5}
								|\ddphi|(s) \leq  D'_{R} \cdot \epsilon^3  \cdot \Omega^2(s).
							\end{equation}
						\end{prop}
						
						\begin{proof}   \color{black}$\delta r$,  $\delta \phi$, $\delta \dot{\phi}$,   and $  \frac{d }{ds} (\delta \log(\Omega^2))$ \color{black} are zero,  and  $   \delta \log(\Omega^2)=O(\ep^2)$ \color{black} for the initial data on $\mathcal{H}^+$. We make the following bootstrap assumptions \begin{equation} \label{BSR.d}
								\Omega^{-2} \cdot |\delta \Omega^2|,\ |\delta r | ,\ |\delta\dot{r} | ,\ \epsilon^{-1} \cdot |\delta \phi|  \leq 4 D'_R \cdot \epsilon^2.
							\end{equation}The key point is to notice $(r_{RN},\Omega^2_{RN})$ satisfy \eqref{Radiusstat}  and \eqref{Omegastat} with $\phi \equiv 0$ and $ \phil$ satisfies \eqref{FieldODE} with  $(r,\Omega^2)$ replaced by  $(r_{RN},\Omega^2_{RN})$. Then we can take the difference of the two equations: for instance for \eqref{FieldODE} $$\delta \ddot \phi= -\frac{2 (\delta \dot{\phi}) \dot{r}}{r}- 2 \dot{\phil} ( \frac{\delta \dot{r}}{r}- \frac{\dot{r_{RN}}\cdot \delta r}{r_{RN} }) - m^2 \Omega^2 \dphi -m^2 \phil\cdot  \delta \Omega^2.$$ We will keep using this technique repetitively without referring explicitly to it. Plugging \eqref{BSR.d} and the bounds of Lemma \ref{lemma.RS1} into the above we get for all $s$: $$| \delta \ddot \phi |(s) \ls D'_R \cdot  \epsilon^3  \cdot \Omega^2(s)+ |\dot{r}| \cdot | \delta \dot \phi |.$$ We can integrate this estimate using Gr\"{o}nwall's inequality, since $|\dot{r}|$ is integrable and we get \begin{equation}
								| \delta \dot \phi |(s) + | \delta \phi |(s) \ls D'_R \cdot  \epsilon^3  \cdot \Omega^2(s),
							\end{equation} where to obtain the $\delta \phi$ on the LHS we integrated a second time and took advantage of $\int_{-\infty}^{s} \Omega^2(s') ds' \ls \Omega^2(s)$. Since $\Omega^2(s) \ls e^{-2K_+ \Delta}$ in $\mathcal{R}$, the $\dphi$ part of \eqref{BSR.d} is improved, choosing $\Delta$ large enough. Then by \eqref{Omegastat} we get $$| \frac{d^2 \delta \log(\Omega^2)}{ds^2}| \ls D'_R \cdot \epsilon^2\cdot  \Omega^2,$$ which upon double integration (and noting $|\delta \Omega^2| \ls \Omega^2 \cdot |\delta \log(\Omega^2)| $ using the Taylor expansion of $\log$) gives $$   |\delta \log(\Omega^2)(s)|+ |\frac{d}{ds}(\delta \log(\Omega^2))|(s)+ \Omega^{-2}(s) \cdot |\delta \Omega^2|(s) \ls D'_R \cdot  \color{black} \epsilon^2\color{black}, $$ which improves the $\Omega^{-2}(s) \cdot |\delta \Omega^2|(s)$ part of \eqref{BSR.d} using the smallness of $\Omega^2$. The other two estimates can be improved similarly and the bootstraps close. Note that \eqref{BSR.d} is now closed with $D'_R \cdot \epsilon^2 \cdot \Omega^2$ replacing the RHS, which allows to re-do the argument with these improved weights and obtain all the claimed estimates.
							
						\end{proof}
						
						\begin{prop} \label{prop.N.d}
							
							There exists $D'_{N}(M,e,\Lambda,m^2)>0$ such that the following estimates are true for all $s\in \mathcal{N}$:
							\begin{equation} \label{diff.N.1}
								|\delta \log \Omega^2|(s) \leq D'_{N}\cdot \epsilon^2 \cdot C^s,
							\end{equation} 
							\begin{equation} \label{diff.N.2}
								|\frac{d}{ds}(\delta \log \Omega^2)|(s) \leq D'_{N}\cdot \epsilon^2 \cdot C^s,
							\end{equation}\begin{equation} \label{diff.N.3}
								|\dr|(s)\leq D'_{N}\cdot \epsilon^2 \cdot C^s ,
							\end{equation} \begin{equation} \label{diff.N.4}
								|\ddr|(s) \leq D'_{N}\cdot \epsilon^2 \cdot C^s ,
							\end{equation}\begin{equation} \label{diff.N.5}
								|\dphi|(s)\leq D'_{N}\cdot \epsilon^3 \cdot  C^s,
							\end{equation} \begin{equation} \label{diff.N.6}
								|\ddphi|(s) \leq D'_{N}\cdot \epsilon^3  \cdot C^s .
							\end{equation}
						\end{prop}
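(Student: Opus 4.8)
The plan is to run a bootstrap-and-Gronwall argument structurally parallel to that of Proposition~\ref{prop.RS.d}, the sole difference being that over $\mathcal{N}$ the background coefficient $\dot{r}/r$ (and the analogous quantities controlling the linear couplings, such as $m^2\Omega^2_{RN}$) is no longer integrable: the $s$-length of $\mathcal{N}$ is $\simeq\log(\ep^{-1})$ rather than infinite with an exponentially decaying tail, so $\int_{-\Delta}^{\Delta_0}|\dot{r}/r|\,ds \simeq \log(\ep^{-1})\to\infty$ as $\ep\to 0$. Consequently the exponentially small weight $\Omega^2(s)$ of Proposition~\ref{prop.RS.d} must be replaced by the Gronwall factor $C^s=e^{O(s)}$. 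Concretely, I would first fix the constant $C$ in the statement to be larger than the square of the constant $C$ of Lemma~\ref{lemma.N1}, than the Gronwall exponential rate produced below, and than the constant in the Taylor bound $|\domega|\ls\Omega^2\cdot|\delta\log\Omega^2|$; the definition $\Delta_0(\ep)$ is understood with respect to this final $C$, so that $C^s\le\ep^{-1/2}$ on $\mathcal{N}$ and hence every right-hand side of \eqref{diff.N.1}--\eqref{diff.N.6} is $\ls\ep^{3/2}$, a fact used freely. The initial data for Gronwall, posed at the past endpoint $s=-\Delta$ of $\mathcal{N}$, is supplied by Proposition~\ref{prop.RS.d} evaluated there: all the $\delta$-quantities and their first $s$-derivatives are then $\ls\ep^2\Omega^2(-\Delta)$, i.e.\ $\ls\ep^2 C^{-\Delta}$ up to a harmless $\Delta$-dependent constant.

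Next I would subtract \eqref{Radiusstat}, \eqref{Omegastat} and \eqref{FieldODE} from the corresponding Reissner--Nordstr\"{o}m-(dS/AdS) equations — recalling that $(r_{RN},\Omega^2_{RN})$ solve the first two with $\phi\equiv 0$ and that $\phil$ solves \eqref{FieldODE} with $(r,\Omega^2)$ replaced by $(r_{RN},\Omega^2_{RN})$ — exactly as in the proof of Proposition~\ref{prop.RS.d}. This produces a closed coupled first-order system for $\mathbf{X}:=(\dr,\ddr,\delta\log\Omega^2,\tfrac{d}{ds}\delta\log\Omega^2,\ep^{-1}\dphi,\ep^{-1}\ddphi)$ of schematic form $\tfrac{d}{ds}\mathbf{X}=\mathsf{M}(s)\mathbf{X}+\mathbf{S}(s)$. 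The matrix $\mathsf{M}(s)$ carries the linear couplings through $\dot{r}/r$, $\dot{r}_{RN}/r_{RN}$, $m^2\Omega^2$, $m^2\Omega^2_{RN}$, etc., and its entries are uniformly bounded on $\mathcal{N}$: this uses the pointwise bounds of Lemma~\ref{lemma.N1} together with \eqref{mu.static} to get $r^{-1}\ls 1$ and $\Omega^2\ls 1$ throughout $\mathcal{N}$, so that these non-integrable couplings have $\ep$-independent coefficients. The source $\mathbf{S}(s)$ collects the genuinely nonlinear/cross terms: $m^2 r\Omega^2|\phi|^2$ in the $r$-equation, $-2|\dot\phi|^2$ in the $\log\Omega^2$-equation, $m^2\phil\,\domega$ and a term of the form $\dot\phil\cdot(\ddr/r-\dot{r}_{RN}\dr/r_{RN}^2)$ in the $\phi$-equation, plus products of two or more $\delta$-quantities. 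Using $|\phi|+|\dot\phi|\ls\ep C^s$ from Lemma~\ref{lemma.N1} (so that the quadratic terms need no bootstrap) and the bootstrap for the pure-$\delta$ quadratic terms, one gets $|\mathbf{S}(s)|\ls\ep^2 C^{2s}+\ep^2 C^{s}$ in which the $C^{2s}$ carries the Lemma's constant; since the statement's $C$ was chosen $\ge(C_{\mathrm{Lemma}})^2$, this is absorbed into $\ep^2 C^{s}$. The extra power of $\ep$ in the $\phi$-source is exactly due to $\phil=O(\ep)$, which is why \eqref{diff.N.5}--\eqref{diff.N.6} carry $\ep^3$ rather than $\ep^2$.

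Finally, Gronwall's inequality on $[-\Delta,\Delta_0]$ — with $\int_{-\Delta}^{s}\|\mathsf{M}\|\ls(s+\Delta)$, $|\mathbf{X}(-\Delta)|\ls\ep^2$, and $|\mathbf{S}(s)|\ls\ep^2 C^s$ — yields $|\mathbf{X}(s)|\ls\ep^2\big(1+\int_{-\Delta}^{s}C^{s'}ds'\big)e^{O(s+\Delta)}\ls\ep^2 C^{s}$, the last inequality holding precisely because $C$ exceeds the Gronwall rate; double-integrating the $\ddr$ and $\tfrac{d}{ds}\delta\log\Omega^2$ rows then also controls $\dr$, $\delta\log\Omega^2$ and $\ep^{-1}\dphi$ with the same weight (using $\int_{-\Delta}^{s}C^{s'}ds'\ls C^{s}$). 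This closes the bootstrap; feeding the improved bounds back once, exactly as at the end of the proof of Proposition~\ref{prop.RS.d}, gives \eqref{diff.N.1}--\eqref{diff.N.6}. The only delicate point — and the closest thing here to an obstacle — is the bookkeeping of constants: one must verify that every non-integrable self-coupling in $\mathsf{M}$ and every quadratic source in $\mathbf{S}$ is absorbed into the single weight $C^s$ after a one-time enlargement of $C$; no analytic input beyond Lemma~\ref{lemma.N1} and Proposition~\ref{prop.RS.d} is needed. (In the applications this Proposition is only ever invoked for $s\le S''(M,e,\Lambda,m^2)$, a fixed constant, where $C^s\simeq 1$ and the Gronwall loss is irrelevant; the sharper estimates for $s\simeq\log(\ep^{-1})$ are obtained separately in the early blue-shift region $\mathcal{EB}$.)
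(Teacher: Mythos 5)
Your proposal is correct and is essentially the paper's own argument: the paper's proof of this proposition is a one-line remark that it is ``similar to that of Lemma~\ref{lemma.N1}, taking advantage of the bounds already proved'' there, i.e.\ precisely the difference-taking (as in Proposition~\ref{prop.RS.d}) combined with the bootstrap/Gronwall scheme of Lemma~\ref{lemma.N1}, seeded at $s=-\Delta$ by Proposition~\ref{prop.RS.d}. Your extra bookkeeping of the constant $C$ (enlarging it to absorb $C_{\mathrm{Lemma}}^{2s}$ from the quadratic sources and the Gronwall rate) is a legitimate reading of the paper's slightly loose use of the same letter $C$, and is harmless since the estimates are only ever invoked at the fixed time $s=S''$.
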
 \begin{proof}
							The proof is similar to that of Lemma \ref{lemma.N1}, taking advantage of the bounds already proved in Lemma \ref{lemma.N1}.
						\end{proof}
						
						\begin{cor} \label{cor.NS}There exists $S'(M,e,\Lambda)>0$, $D_-'(M,e,\Lambda,m^2)>0$, $B(M,e,\Lambda,m^2) \neq 0$ such that for all $s \in \mathcal{N}':= \{ S'(M,e,\Lambda) \leq s \leq \Delta_0(\epsilon)\}$: \begin{equation}
								|r(s)-r_-| \leq \epsilon^{\frac{3}{2}} + |r_{RN}(s)-r_-| \leq \epsilon^{\frac{3}{2}}+ D_-' \cdot e^{2K_- s},
							\end{equation}
							\begin{equation}
								D_-'^{-1} \cdot e^{2K_-s} \leq  \Omega^2(s) \leq D_-' \cdot e^{2K_-s},
							\end{equation}
							\begin{equation} \label{asymplin2}
								|\phi|(s) \leq D_-' \cdot( \epsilon \cdot s+  e^{2K_-s}),
							\end{equation}	\begin{equation} \label{asymplin3}
								|\dot{\phi}-B \cdot \epsilon|(s) \leq D_-' \cdot (\epsilon^2+ e^{2K_-s}).
							\end{equation}
							In particular, there exists $S''(M,e,\Lambda,m^2)\geq S'(M,e,\Lambda,m^2)> 0$, $D'_-(M,e,\Lambda,m^2)>0$ such that for all 	$s \in \mathcal{N}'':= \{ S''(M,e,\Lambda,m^2) \leq s \leq \Delta_0(\epsilon)\}$: 
							\begin{equation} \label{diff}
								|\frac{d\log(\Omega^2)}{ds}(s)-2K_-| \leq 2D'_- \cdot (\epsilon^{\frac{3}{2}}+   e^{2K_- s})< \frac{|K_-|}{100}. \end{equation}

						\end{cor}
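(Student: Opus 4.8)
The plan is to derive Corollary~\ref{cor.NS} by \emph{superposing} two facts already established: the difference estimates of Proposition~\ref{prop.N.d}, which control $\dphi=\phi-\phil$, $\delta\log\Omega^2$, $\ddphi$ and $\dr=r-r_{RN}$ on all of $\mathcal{N}$ with the weight $C^s$, and Corollary~\ref{linearcor}, which gives the asymptotics of the linearized quantities $\phil$, $\dphil$, $\Omega^2_{RN}$ near the Reissner--Nordstr\"om-(dS/AdS) Cauchy horizon, i.e.\ for $s\geq s_0$. First I would fix $S'(M,e,\Lambda)\geq\max(s_0,1)$ (to be enlarged later) so that on $\mathcal{N}'=\{S'\leq s\leq\Delta_0(\ep)\}$ both families of estimates are simultaneously available; $\mathcal{N}'$ is nonempty for $|\ep|$ small since $\Delta_0(\ep)\simeq\log(\ep^{-1})\to\infty$. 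The quantitative engine is that on $\mathcal{N}$ one has $C^s\leq C^{\Delta_0(\ep)}=\ep^{-1/2}$ by the very definition of $\Delta_0$, so the $C^s$-weighted bounds of Proposition~\ref{prop.N.d} degenerate, on $\mathcal{N}'$, to
\begin{equation*}
|\delta\log\Omega^2|(s)+\Bigl|\tfrac{d}{ds}\delta\log\Omega^2\Bigr|(s)+|\dr|(s)\ls\ep^{3/2},\qquad |\dphi|(s)+|\ddphi|(s)\ls\ep^{5/2}.
\end{equation*}

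Next I would record the Reissner--Nordstr\"om side for $s\geq S'$: \eqref{RNasymp1} gives $\Omega^2_{RN}(s)\ls e^{2K_-s}$, while integrating \eqref{RNasymp2} from $S'$ gives the matching lower bound $\Omega^2_{RN}(s)\gtrsim e^{2K_-s}$; and since the Reissner--Nordstr\"om gauge forces $\dot r_{RN}=-\Omega^2_{RN}$ (because $\kappa^{-1}_{RN}\equiv 1$ by \eqref{Raychstat} with $\phi\equiv 0$ and \eqref{gauge}), integrating $|\dot r_{RN}|=\Omega^2_{RN}$ from $+\infty$ yields $|r_{RN}(s)-r_-|\ls e^{2K_-s}$; finally \eqref{asymplin} gives $|\phil(s)-A\ep-B\ep s|+|\dphil(s)-B\ep|\ls\ep\,e^{2K_-s}$, with $B\neq0$ inherited from Theorem~\ref{KSR.thm}. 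The four displayed estimates of the Corollary then follow by the triangle inequality, \emph{absorbing constants into $D_-'$}: $|r(s)-r_-|\leq|\dr|(s)+|r_{RN}(s)-r_-|$; $\Omega^2=\Omega^2_{RN}\cdot e^{\delta\log\Omega^2}$ with $e^{\delta\log\Omega^2}(s)\in[\tfrac12,2]$ since $|\delta\log\Omega^2|\ls\ep^{3/2}\ll1$; $|\phi|\leq|\phil|+|\dphi|\ls\ep s+e^{2K_-s}+\ep^{5/2}\ls\ep s+e^{2K_-s}$ for $s\geq S'\geq1$; and $|\dot\phi-B\ep|\leq|\dphil-B\ep|+|\ddphi|\ls\ep\,e^{2K_-s}+\ep^{5/2}\ls\ep^{2}+e^{2K_-s}$. (To match the literal ``$\ep^{3/2}$'' in the first line one either slightly shrinks $\Delta_0(\ep)$ or restricts to $|\ep|$ small so the implicit constant in $|\dr|\ls\ep^{3/2}$ is $\leq1$; purely cosmetic.)

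For the ``In particular'' clause I would split $\frac{d\log\Omega^2}{ds}=\frac{d\log\Omega^2_{RN}}{ds}+\frac{d}{ds}(\delta\log\Omega^2)$, bound the first term via \eqref{RNasymp2} and the second by $\ls\ep^{3/2}$ from the display above, so that $|\frac{d\log\Omega^2}{ds}(s)-2K_-|\leq 2D_-'(\ep^{3/2}+e^{2K_-s})$ with $D_-'$ the maximum of the two constants involved; then, with $D_-'$ now fixed, I would choose $S''(M,e,\Lambda)\geq S'$ large enough that $2D_-'e^{2K_-S''}<|K_-|/200$ and $\ep_0$ small enough that $2D_-'\ep_0^{3/2}<|K_-|/200$, giving $<|K_-|/100$ on $\mathcal{N}''$. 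There is no genuine obstacle: this is a gluing/bookkeeping statement, and the only point needing care is the order of quantifiers — $S'$ and $S''$ depend on $(M,e,\Lambda,m^2)$ but not on $\ep$ (so that the past boundary of $\mathcal{EB}$ is $\ep$-independent), which is consistent because $\Delta_0(\ep)\to\infty$ keeps $\mathcal{N}''$ nonempty for all small $\ep$ and because every constant $D_R,D_N,D_R',D_N'$ feeding in is already $\ep$-independent.
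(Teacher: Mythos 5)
Your proposal is correct and follows essentially the same route as the paper: the paper's proof is simply ``immediate from Proposition~\ref{prop.N.d} and \eqref{asymplin}'' plus the elementary Reissner--Nordstr\"om asymptotics $|r_{RN}(s)-r_-|\simeq\Omega^2_{RN}(s)\simeq e^{2K_-s}$ and $|\frac{d\log\Omega^2_{RN}}{ds}-2K_-|\ls e^{2K_-s}$ for $s\geq S'$, which you recover (equivalently) by integrating \eqref{RNasymp2} and $\dot r_{RN}=-\Omega^2_{RN}$. Your handling of the $C^{s}\leq\ep^{-1/2}$ degeneration of the $\mathcal{N}$-weights, the triangle inequalities, and the $\ep$-independent choice of $S',S''$ matches the paper's intent, and the cosmetic issue with the literal constant in front of $\ep^{3/2}$ is one the paper itself absorbs (it only ever applies the corollary at $s=S''$).
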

						
						\begin{proof}
							Immediate from Proposition \ref{prop.N.d} and \eqref{asymplin}. Note that the existence of a $D_-'(M,e,\Lambda)>0$ and  $S'(M,e,\Lambda)>0$ such that $   D_-'^{-1} \cdot e^{2K_- s} \leq |r_{RN}(s)-r_-| \simeq \Omega^2_{RN} \leq D_-' \cdot e^{2K_- s}$ for $s \geq S'$ is an easy computation on \eqref{RN}. For \eqref{diff}, similarly there exists $s \geq S''(M,e,\Lambda,m^2)$, $D_-''(M,e,\Lambda)>0$ such that for all $s\geq S''$,  $$ |\frac{d\log(\Omega^2_{RN})}{ds}(s)-2K_-| \leq D_-'' \cdot e^{2K_- s}< \frac{|K_-|}{400}.$$
						\end{proof}
						
						Note that we have the freedom to choose $S''(M,e,\Lambda,m^2)$ large enough and Corollary \ref{cor.NS} still applies with slightly modified constant. We will make use of this fact in Section~\ref{EB.section}.
						
						\color{black}

						\subsection{Estimates on the early blue-shift region}  \label{EB.section}
						
						We work on the early blue-shift region $\mathcal{EB}:=\{ S''\leq s \leq s_{lin}(\ep)\}$ for $s_{lin}(\epsilon) := (2 |K_-|)^{-1} \log(\nu(M,e,\Lambda,m^2) \cdot \epsilon^{-2})$ with $\nu(M,e,\Lambda,m^2)>0$ a small constant to be fixed later. Note that $\mathcal{EB}$ and $\mathcal{N}$ \textit{overlap} on the region $\{S'' \leq s \leq \Delta_0\}$, but from Section~\ref{NS.section1} and \ref{difference.section}, we will only apply the estimates on $\{s=S''\}$ the past boundary of $\mathcal{EB}$.

						\begin{prop} \label{EB.Prop} There exists $D_E(M,e,\Lambda,m^2)>0$ such that for all $s \in \mathcal{EB}$, we have \begin{equation} \label{EB0}
								|	\Omega^{0.1} \cdot \delta \phi| \leq  D_E \cdot \epsilon^3, 
							\end{equation}
							\begin{equation} \label{EB1}
								|\delta (r^2\dot{\phi})|(s) \leq   D_E \cdot \epsilon^3,
							\end{equation}\begin{equation} \label{EB2}
								s^{-1}	|\delta\phi|(s)|+  |\delta\dot{\phi}|(s)| \leq D_E \cdot \epsilon^3 \cdot \color{black}s\color{black},
							\end{equation}\begin{equation} \label{EB3}
								s^{-1} \cdot |\delta \log(\Omega^2)|(s)+ |\frac{d}{ds}(\delta \log(\Omega^2))|(s) \leq D_E \cdot \epsilon^2 \color{black}\cdot s\color{black}
							\end{equation}\begin{equation} \label{EB4}
								s^{-1} \cdot |\delta r|(s)+ |\delta\dot{r}|(s) \leq D_E \cdot \epsilon^2,
							\end{equation}\begin{equation} \label{EB5}
								|\kappa^{-1}-1|(s)\leq  0.1,
							\end{equation}
							\begin{equation} \label{EB6}
								|\varpi-M|(s)   \leq  D_E \cdot \epsilon^2 \cdot s.
							\end{equation}

							Moreover, there exists $C_-(M,e,\Lambda)>0$ such that we have the following estimates at $s=\s(\epsilon)$:  \begin{equation} \label{EB7}
								|\dot{\phi}(\s)- B \cdot \epsilon| \leq  D_E \cdot \epsilon^3 \cdot \color{black}\log(\epsilon^{-1})\color{black},
							\end{equation} \begin{equation} \label{EB8}
								|\Omega^2(\s) - \epsilon^2 \cdot C_-\cdot   \color{black}\nu^{-1}\color{black}(M,e,\Lambda)|= |\Omega^2(\s) -  C_-  \cdot e^{2K_- \s}| \leq D_E \cdot \epsilon^4 \cdot \color{black}\log(\epsilon^{-1})^2\color{black},
							\end{equation} 	\begin{equation}\label{EBslin2}
								\color{black} \bigl|\ -\dot{r}(\s)-  \frac{B^2 r_-}{2|K_-|} \epsilon^2\  -\Omega^2(\s)\bigr| \leq D_E \cdot \epsilon^4 \cdot \log(\epsilon^{-1})^2 ,\color{black}
							\end{equation} \begin{equation} \label{r-r_-EB}
								|r(\s)-\color{black}[1-\frac{B^2}{2|K_-|}\epsilon^2 \s(\epsilon)] \color{black}r_-(M,e,\Lambda)| \leq D_E \cdot \epsilon^2,
							\end{equation}\begin{equation} \label{EB11} |\frac{d \log(\Omega^2)}{ds}(\s)-2K_-| \leq  D_E \cdot \epsilon^2 \log(\ep^{-1}) \color{black},\end{equation}
							\color{black}
							\begin{equation}\label{EB12}
								\bigl|  \int_{S''}^{\s} \frac{r(s)}{\Omega^2(s)} ds\ - \frac{ r_-}{2|K_-| \cdot \Omega^2(\s)}  \bigr| \leq D_E \cdot \ \log(\ep^{-1}).
							\end{equation}
							\color{black}
							By \eqref{r-r_-EB}, note in particular that for $\ep>0$ small enough, \begin{equation}\label{rEBs}
								r(\s)<r_-(M,e,\Lambda).
							\end{equation} 
							
							In view of \eqref{r-r_-EB}, we define $C_{EB}$ as follows, and it obeys the following inequality \begin{equation} \label{CEB.def}
								C_{EB}:=  \ep^{-2} \cdot \log(\frac{r(\s)}{\color{black}[1-\frac{B^2}{2|K_-|}\epsilon^2 \s] \color{black}r_-}), \hskip 5 mm |C_{EB}| \leq \frac{2 D_E(M,e,\Lambda,m^2)}{r_-(M,e,\Lambda)}
							\end{equation}

						\end{prop}
						\begin{proof} Our strategy involves a bootstrap method and follows three steps: \begin{enumerate}
								\item  we first close the bootstrap estimates on the geometric quantities $(r,\Omega^2)$.
								\item Then, we close the bootstrap estimates on the scalar field $\phi$.
								\item Finally, we obtained improved estimates on $(r,\Omega^2)$.
							\end{enumerate} \color{black}
							We bootstrap the estimates, with $B_1(M,e,\Lambda,m^2)>0$ to be determined in the proof \color{black} and $B$ defined in Theorem~\ref{KSR.thm}   \begin{equation} \label{B1EB}	|\delta \log(\Omega^2)| \leq 4B_1 \cdot \epsilon^2 \cdot \color{black}s^2\color{black}.\end{equation}
							\begin{equation}\label{B2EB}|\dot{\phi}| \leq 4B \cdot \epsilon,\end{equation}  
							\begin{equation} \label{B4EB}
								r^{-1}(s) \leq 10 r_-^{-1}(M,e,\Lambda),
							\end{equation}  where we have assumed $B_1$ large enough so that \eqref{B1EB} holds true at $s=S''$. \color{black}
							Note that taking $S''(M,e,\Lambda,m^2)$ large enough (recall that $S''$ only depends on $M$, $e$, $\Lambda$, $m^2$\color{black}) ensures that  \eqref{B4EB} is satisfied  at $s=S''$\color{black}. 
							
							For $\epsilon$ small enough, using $\s \ls \log(\epsilon^{-1})$ and \eqref{B1EB}, we have for some $C_-(M,e,\Lambda)>0$: $$  \frac{ C_-}{4} \cdot e^{2K_- s}\leq \frac{\Omega^2_{RN}(s)}{2} \leq \Omega^2(s) \leq 2 \Omega^2_{RN}(s) \leq 4 C_- \cdot e^{2K_- s}.$$
							
							Integrating \eqref{Raychstat} and using also \eqref{N2} for $s=S''$ we get for some $D''_N(M,e,\Lambda,m^2)>0$: \begin{equation} \label{kappa.proof.EB}
								|\kappa^{-1}-1| \leq D''_N \cdot \epsilon^2 + D''_N \cdot B^2 \cdot \epsilon^2 \cdot e^{2|K_-| s}\leq 0.1,
							\end{equation}  after taking $\nu$ small enough so that $D''_N \cdot B^2 \cdot \epsilon^2 \cdot e^{2|K_-| \s}= D''_N \cdot B^2 \cdot \nu <0.05$. 
							
							From this and \eqref{B1EB} we get $$ |\delta{ \dot{r}}| \leq 8 C_- \epsilon^2 \color{black}\cdot ( B_1 \cdot e^{2K_- s} \cdot \color{black}s^2\color{black}+ 2 D_N'' \cdot B^2 ).$$  where we used $\delta \Omega^2 \leq 2C_- e^{2K_- s}\cdot  \delta \log(\Omega^2)$ and chose $S''$ large enough. Integrating and using \eqref{diff.N.3} we get $$ |\delta{ r}| \leq    \frac{8C_- \cdot \epsilon^2}{|K_-|}  B_1 \cdot e^{2K_- S''} \color{black}[S'']^2\color{black}+3\color{black} D_N'' \cdot B^2 \cdot \epsilon^2 \cdot  s \leq   \frac{8C_- \cdot \epsilon^2}{|K_-|}  B_1 +3 D_N'' \cdot B^2 \cdot \epsilon^2 \cdot  s,$$ where we used the integrability\footnote{Note indeed that for any $N>0$, we have $\int_{S''}^{s} e^{2K_- x} x^N dx \leq\frac{e^{2K_- S''}( S'')^N}{|K_-|}  $, if $S''$ is large enough.} of $e^{2K_- s} \cdot \color{black}s^2\color{black}$,  and chose $D_N''$ large enough so that $D_N' C^{S''} \leq D_N'' B^2 S'' \leq   D_N'' B^2 s$\color{black}.
							
							Plugging these bounds into \eqref{Omegastat} and using \color{black}\eqref{B2EB}\color{black}, \eqref{B4EB}, the smallness of $\epsilon$ and $\Omega^2 \leq 1$,\color{black} we get $$ |\frac{d^2}{ds^2} ( \delta \log(\Omega^2))|(s) \leq P(M,e,\Lambda) 
						 B_1\color{black}   \cdot\epsilon^2   \cdot
							e^{2K_- s} \cdot \color{black}s^2\color{black}+ D_N''' \cdot \color{black} B^2 \epsilon^2\color{black},$$ where $ P(M,e,\Lambda,B)>0$, $D_N'''(M,e,\Lambda,m^2)>0$\color{black}. Integrating the above using \eqref{diff.N.2} for $s=S''$ gives,  $$ |\frac{d}{ds}( \delta \log(\Omega^2))| \leq \frac{P(M,e,\Lambda)}{|K_-|}   B_1\color{black} \cdot \ep^2 \cdot   e^{2K_- S''} \cdot S''^2\color{black} + D_N''' \cdot \color{black}  B^2 \epsilon^2 s 
							\leq  (B_1+D_N''' \cdot  B^2 )\epsilon^2 s  ,$$ where in the last line we have chosen $S''(M,e,\Lambda,m^2)$ large enough so that  $\frac{P(M,e,\Lambda)}{|K_-|}   \cdot   e^{2K_- S''} \cdot S''^2<1$ (note that $P(M,e,\Lambda)$ does not have any implicit dependence on $S''$). \color{black} Integrating another time \color{black} improves bootstrap \eqref{B1EB} \color{black} if we choose $B_1> D_N'''\color{black}B^2$ and $\epsilon$ small enough. In what follows, we fix $B_1(M,e,\Lambda,m^2)$ in such a way, and thus we need not worry about  the dependence on $B_1$ anymore\color{black}. Bootstrap \eqref{B4EB}   is improved immediately after multiplying \eqref{kappa.proof.EB} and integrating. \eqref{B2EB} remains: for this, note that the following identity for all $f$ follows from $\frac{d\log(\Omega^2)}{ds}<0$: \begin{equation}\label{int.ID.EB}
								\frac{d}{ds}(\Omega\cdot  |f|) \leq \Omega \cdot |\dot{f}|,
							\end{equation} To prove it, consider the derivative of $\Omega^2\cdot  |f|^2$, see Proposition 6.5 in \cite{Moi3Christoph} for details. Then, upon integration with the usual rules  gives \begin{equation}\label{int.ID.EB2}
								\Omega(s)\cdot  |f|(s) \leq \Omega(S'')\cdot  |f|(S'') +\frac{2\Omega(S'')}{|K_-| \cdot r_-^2} \cdot \sup_{S'' \leq t \leq s}|r^2\dot{f}|(t).
							\end{equation}
							
							Now come back to \eqref{FieldODE2} and take differences: we get  $$ |\frac{d}{ds}( \delta(r^2 \dot{\phi}))| \ls \epsilon^3 \cdot e^{2K_- s} \color{black}s^2 \color{black} + e^{K_- s} \cdot \Omega \cdot|\dphi|. $$ To address $e^{K_- s} \cdot \Omega \cdot|\dphi|$,  apply \eqref{int.ID.EB2} to $f= \dphi$: we get using also \eqref{diff.N.5} :  $$ |\frac{d}{ds}( \delta(r^2 \dot{\phi}))|(s) \ls  \epsilon^3 \cdot e^{2K_- s} \color{black}s^2 \color{black} + e^{K_- s} \cdot  \Omega(S'') \cdot \left(  \epsilon^3  + \sup_{S'' \leq t \leq s}|r^2\cdot  \delta \dot{\phi}|(t)\right).$$
							Writing $r^2 \delta \dot{\phi}= \delta(r^2 \dot{\phi})- (r+r_{RN}) \cdot \delta r \cdot \dot{\phil}$, and using the previously proven bounds this implies \begin{equation} \label{EB.wave.hard.est}
								|\frac{d}{ds}( \delta(r^2 \dot{\phi}))|(s) \ls \epsilon^3 \cdot e^{2K_- s} \color{black}s^2 \color{black} + e^{K_- s} \cdot  \Omega(S'') \cdot \left(   (1 +s)\cdot \epsilon^3  + \sup_{S'' \leq t \leq s}|\delta(r^2\cdot   \dot{\phi})|(t)\right) .
							\end{equation} Integrating \eqref{EB.wave.hard.est} using \eqref{diff.N.3}, \eqref{diff.N.6} for $s=S''$ gives, using the Gronwall inequality\color{black}: $$  \sup_{S'' \leq t \leq s}|\delta(r^2 \dot{\phi})|(t) \ls \exp(\int_{S''}^{s}  e^{K_- s} \cdot  \Omega(S'') ds)  \epsilon^3 \ls \ep^3,$$ thus, using the bounds of  \eqref{asymplin} is more than sufficient to improve the bootstrap \eqref{B2EB}.

						Now that all bootstraps are closed, we turn to improved estimates.	\color{black}
							For \eqref{EBslin2} and \eqref{r-r_-EB},	note in particular the following formula: $$   \delta \dot{r}= -\Omega^2 [\kappa^{-1}-1] - \delta \Omega^2,$$ and we can integrate \eqref{Raychstat} to obtain \begin{equation*}
								|- \delta \dot{r}(s) - \Omega^2(s)\int_{S''}^{s} \frac{r |\dot{\phi}|^2}{\Omega^2} ds| \leq | \delta \Omega^2|(s) +  \Omega^2(s)\ | \kappa^{-1}(S'')-1|.
							\end{equation*} So by \eqref{N2} applied at $s=S''$ and \eqref{EB3} we get  \begin{equation}\label{drEB}
								|- \delta \dot{r}(s) - \Omega^2(s)\int_{S''}^{s} \frac{r |\dot{\phi}|^2}{\Omega^2} ds| \lesssim \epsilon^2\cdot s^2 e^{2 K_- s}.
							\end{equation} 
							
							Now from \eqref{drEB},\eqref{EB2} and the inequality $$ |\dot{\phil}(s) - B \cdot \epsilon|\lesssim \epsilon \cdot  e^{2K_- s}$$ we get \begin{equation*}
								|- \delta \dot{r}(s) - B^2 \epsilon^2\cdot  \Omega^2(s)\int_{S''}^{s} \frac{r }{\Omega^2} ds| \lesssim \epsilon^2\cdot s^2 e^{2 K_-s} + \epsilon^4 s.
							\end{equation*}
							Note also that by \eqref{EB4}, \eqref{EB3} and using the fact that $r_{RN}-r_-= O (\Omega^2_{RN})$ we have  \begin{equation*}
								|- \delta \dot{r}(s) - B^2 r_- \cdot  \epsilon^2\cdot  \Omega^2(s)\int_{S''}^{s} \frac{ds}{\Omega^2} | \lesssim \epsilon^2\cdot s^2 e^{2 K_-s} + \epsilon^4 s.
							\end{equation*} Now we use identity $\Omega^{-2} = \frac{\frac{d}{ds} (\Omega^{-2})}{-\frac{d}{ds} \log(\Omega^{2})}$; again by \eqref{EB3} and the fact that   $\frac{1}{\frac{d}{ds} \log(\Omega^{2}_{RN})}-\frac{1}{2K_-}= O (\Omega^2_{RN})$ we get  \begin{equation}
								|- \delta \dot{r}(s) - \frac{B^2 r_- }{2|K_-|} \ \epsilon^2  | \lesssim \epsilon^2\cdot s^2 e^{2 K_-s} + \epsilon^4 s,
							\end{equation} which gives \eqref{EBslin2}. Integrating again using \eqref{diff.N.3} at $s=S''$ finally gives \eqref{r-r_-EB}, given that $|r_{RN}(\s)-r_-| \lesssim \epsilon^2$.

							\color{black}
							From there on, all the other claimed estimates can be retrieved quite easily. 
						\end{proof}

						\subsection{Estimates on the late blue-shift  region}\label{LBsection}
						
						We define the late blue-shift region $\mathcal{LB}:=\{ \s \leq s \leq  \color{black} \delta_{\mathcal{C}}\color{black}(M,e,\Lambda,m^2) \cdot \epsilon^{-1} \color{black}\}$ \color{black} with $\delta_{\mathcal{C}}\color{black}(M,e,\Lambda,m^2)>0$ small  to be determined later. We have the following estimates:

						\begin{prop} \label{LB.prop}
							Recalling the notations from Proposition \ref{EB.Prop}, \color{black} defining $b_-= \frac{B}{2|K_-|} \neq 0$ and \color{black}   \\$C_{eff}= \color{black}e^{\color{black}b_-^{-2}\cdot [\frac{C_- \cdot \nu^{-1}}{|K_-| \cdot r_-}  +C_{EB}]\color{black}}\cdot  C_-   \color{black}$, there exists $D_{LB}(M,e,\Lambda,m^2)>0$ such that for all $ s \in \mathcal{LB}$: 
							\begin{equation} \label{LB1}
								|r(s)-[1-\frac{B^2}{|K_-|} \ep^2 s]r_-(M,e,\Lambda)| \leq D_{LB} \cdot \epsilon^2 ,
							\end{equation}	\color{black}  \begin{equation}\label{LB/C.radius}	r(s) < r_-(M,e,\Lambda,m^2),\end{equation}\color{black}
							\begin{equation} \label{LB2}
								|-\dot{r}(s)- \frac{B^2 \cdot r_-(M,e,\Lambda)\color{black}}{2|K_-|} \cdot \epsilon^2 -\color{black} \Omega^2(s)  \color{black}| \leq  D_{LB} \cdot \color{black} \epsilon^4 \cdot s \color{black},
							\end{equation} 
							\begin{equation} \label{LB2.5}
								|	\frac{-\dot{r}(s)}{r(s) }- \frac{B^2}{2|K_-|} \cdot \epsilon^2 -\color{black} \frac{\Omega^2(s) }{r_-} \color{black}| +
								|	-r(s)  \cdot \dot{r}(s)  - \frac{B^2 \cdot r_-^2}{2|K_-|} \cdot \epsilon^2- \color{black} r_- \cdot \Omega^2(s)  \color{black} | \leq  D_{LB} \cdot \color{black} \epsilon^4 \cdot s \color{black},
							\end{equation} 
							\begin{equation} \label{LB3}
								|\frac{d}{ds} \log(\Omega^2)(s)-2K_-| \leq  D_{LB} \cdot    \color{black} \epsilon^2\cdot  s \color{black},
							\end{equation}
							\begin{equation} \label{LB3.5}
								\bigl |\frac{d}{ds}\log \left( r(s)^{-b_-^{-2} \cdot \ep^{-2}} \cdot \Omega^2(s)\right)\color{black}-\frac{b_-^{-2}\cdot \ep^{-2}}{ r_-} \cdot \Omega^2(s) \color{black} \bigr| \leq D_{LB} \cdot  \color{black} \epsilon^2 \cdot s,
							\end{equation}
							\begin{equation} \label{LB4}
								\bigl |\log(\frac{\Omega^2(s)}{C_- \cdot  e^{2K_- s}} ) \bigr| \leq D_{LB} \cdot\color{black} \ep^2 \cdot s^2\color{black},
							\end{equation}
							\begin{equation} \label{LB4.5}
								\bigl |\log\left([\frac{r(s)}{r_-}]^{-b_-^{-2} \cdot \ep^{-2}}\frac{\Omega^2(s)}{\color{black}C_{eff} \color{black}} \right) \bigr| \leq D_{LB}\left[ \color{black} \epsilon^2 \cdot s^2 \color{black} + \ep^{-2} \cdot \Omega^2(s)\right], \color{black}
							\end{equation}\begin{equation}\label{LB5}
								\Omega^{0.01}|\phi|(s)+|\dot{ \phi}|(s) \leq  D_{LB} \cdot \epsilon,
							\end{equation}
							\begin{equation}\label{LB6}
								|r^2(s)\cdot \dot{\phi}(s)-r_-^2 \cdot B \cdot \epsilon| \leq D_{LB} \cdot  \epsilon^3,
							\end{equation}	\begin{equation}\label{LB7}
								| \phi(s) -A \cdot \epsilon - B \cdot \epsilon \cdot s| \leq D_{LB} \cdot  \epsilon^3 \cdot s^2.	\end{equation}

						\end{prop}

						\begin{proof}

							We make the following bootstrap assumptions, introducing $B_3=B_3(M,e,\Lambda,m^2)$, $B_4=B_4(M,e,\Lambda,m^2)$ (to be chosen later) and recalling $D_E=D_E(M,e,\Lambda,m^2)>0$: \begin{equation} \label{BLB1}
								|\frac{d \log(\Omega^2)}{ds}-2K_-| \leq |K_-|,
							\end{equation} \begin{equation} \label{BLB2}
								|\phi|(s) \leq   10 D_E\color{black}\cdot  s,
							\end{equation}  \begin{equation} \label{BLB3}
								e^{\frac{K_-}{10}(s-\s)}\cdot	|\delta\phi|(s) \leq 10 D_E\color{black} \cdot \epsilon^2 \cdot s,
							\end{equation}  \begin{equation} \label{BLB4}
								|\dot{r}|(s) \leq B_3 \cdot \epsilon^2,
							\end{equation}  \begin{equation} \label{BLB5}
								|\delta \Omega^2|(s) \leq B_4 \cdot \Omega^2 \cdot \color{black}\epsilon^2 \cdot s^2 \color{black}.
							\end{equation}  Note that by the estimates of Proposition \ref{EB.Prop} and \eqref{BLB1}, \eqref{BLB2}, \eqref{BLB3}, \eqref{BLB4}, \eqref{BLB5} are satisfied in a neighborhood of  $s=\s$ (for $B_3>2  C_- \cdot \nu$). As a consequence of \eqref{BLB1}, we have $$ \Omega^2(s) \lesssim \epsilon^2 \cdot e^{K_- \cdot (s-\s)},$$ and integrating \eqref{BLB4} using \eqref{r-r_-EB} at $s=\s$ we have $$ |r-r_-(M,e,\Lambda)| \leq D_E\color{black} \cdot  \epsilon^2 \cdot \s(\ep) + B_3 \cdot \epsilon^2 \cdot s \ls \color{black} \epsilon \color{black} < \frac{r_-}{100}. $$ Thus, integrating \eqref{FieldODE2} using \eqref{EB7} we get : $$ |\dot{\phi}|(s) \ls r^2 |\dot{\phi}|(s)   \lesssim \epsilon + \epsilon^2 \cdot \s \ls  \epsilon + \epsilon^2 \cdot \log(\epsilon^{-1}) \ls \ep,$$ which improves bootstrap \eqref{BLB2} for small enough $\epsilon$, after a second integration \color{black} and gives $$ |\phi|(s) \lesssim \ep \cdot s .$$ \color{black}
							
							From \eqref{BLB4}, notice that $|\delta \dot{r}| \lesssim \epsilon^2$ (trivial estimate) hence using \eqref{EB4} at $s=\s$ and since $s\geq \s \gtrsim \log(\epsilon^{-1})$: $$ |\delta r| \lesssim \epsilon^2 \cdot[ s+ \log(\epsilon^{-1})]\ls \ep^2 \cdot s.$$Plugging these estimates and the bootstraps (notably \eqref{BLB5}) in \eqref{FieldODE2} gives: $$|\frac{d}{ds}( \delta(r^2 \dot{\phi}))| \lesssim \epsilon^5 \cdot e^{K_- \cdot (s-\s)} \cdot s^2+  \color{black}\epsilon^5 \cdot e^{K_- \cdot (s-\s)} \cdot s^3\color{black} +\epsilon^4 \cdot e^{\frac{9K_-}{10} \cdot (s-\s)} \cdot s,$$ which we integrate to get using \eqref{EB1} at $s=\s$:  \begin{equation} \label{LB.est-1}
								|\delta(r^2 \dot{\phi})|(s) \lesssim \ep^3.
							\end{equation}   which also implies $$ |\delta( \dot{\phi})|(s) \lesssim \ep^3 \cdot s.$$  Integrating, we get, also using \eqref{EB1} at $s=\s$ $$|\delta\phi|(s) \lesssim \ep^3 \cdot  s^2  ,$$ which is sufficient to improve boostrap \eqref{BLB3} for small enough $\epsilon$, in view of the bound $e^{\frac{K_-}{10}(s-\s)} s^2 \ls \log^2(\epsilon^{-1})$.

							For  bootstrap \eqref{BLB1}, we take difference into \eqref{Omegastat2} and get, using the previously proven estimates \begin{equation} \label{LB.est0}
								|\frac{d^2}{ds^2} (\delta \log(r\Omega^2))| \ls \Omega^2(s) \epsilon^2 s^2 + \epsilon^2 \color{black},
							\end{equation} thus, integrating and using \eqref{EB11} and
							the fact that $\int_{\s}^{s} (s')^a \cdot \Omega^2(s') ds' \ls \epsilon^2 \cdot \s^a$ for all $a>0$ and $s \geq \s \gtrsim \log(\epsilon^{-1})$: $$ |\frac{d}{ds} (\delta \log(r\Omega^2))| \ls \epsilon^2 \cdot s \color{black} \ls \color{black}\ep \color{black}< \frac{|K_-|}{100} ,$$
							Using (trivially) the bound on $\dot{r}$ and \eqref{RNasymp2}, we get \begin{equation} \label{LB.est1}
								|\frac{d}{ds} \log(\Omega^2)(s)-2K_- | \ls  \epsilon^2 \cdot \color{black} s \color{black} < \frac{|K_-|}{50},
							\end{equation} and clearly \eqref{LB.est1} improves \eqref{BLB1}. Another integration of   \eqref{LB.est1} \color{black} also improves \eqref{BLB5} using $|\delta \Omega^2| \ls \Omega^2 \cdot |\delta \log(\Omega^2)|$ \color{black}(note that we used the fact that $\epsilon \cdot s \leq \delta_{\mathcal{C}}\color{black}(M,e,\Lambda,m^2)$ is bounded, hence $|\delta \log(\Omega^2)|$ is bounded by a constant depending only on $M$, $e$, $\Lambda$ and $m^2$)\color{black}.

							For bootstrap \eqref{BLB4}, notice that by \eqref{Raychstat} and \color{black} \eqref{N2} at $s=S''$ \color{black} we have for all $s$: $$ |\kappa^{-1}(s) - \int^{s}_{\color{black}S''\color{black}} \frac{r |\dot{\phi}|^2(s')}{\Omega^2(s')}ds'-1| \color{black} \ls \ep^2\color{black}.$$
							
							Now evaluate, \color{black} using also the estimates of Proposition~\ref{EB.Prop} \color{black} $$   |\int^{s}_{\color{black}S''\color{black}} \frac{r |\dot{\phi}|^2(s')}{\Omega^2(s')}ds'- \int^{s}_{\color{black}S''\color{black}} \frac{r |\dot{\phil}|^2(s')}{\Omega^2(s')}ds' | \lesssim \int^{s}_{\color{black}S''\color{black}} \frac{r |\delta\dot{ \phi}|(|\dot{\phil}|+ |\dot{\phi}|)(s')}{\Omega^2(s')}ds' \lesssim \epsilon^4 \int^{s}_{\color{black}S''\color{black}}  s' \Omega^{-2}(s') ds' \lesssim \epsilon^4  \cdot s\cdot  \Omega^{-2}(s),$$ whence for some $D'_{LB}(M,e,\Lambda,m^2)>0$: $$ | |\dot{r}| - \Omega^2(s) \cdot \int^{s}_{\color{black} S'' \color{black}} \frac{r |\dot{\phil}|^2(s')}{\Omega^2(s')}ds'-\Omega^2(s)| \leq  D'_{LB}\cdot \left[ \epsilon^4 \cdot s + \color{black} \ep^2 \cdot \Omega^2(s) \color{black} \right].$$
							Therefore, using \eqref{asymplin} and choosing $\ep$ small enough we get \begin{equation} \label{LB.est2}
								| |\dot{r}|(s) - B^2 \cdot \epsilon^2 \cdot \Omega^2(s) \cdot \int^{s}_{\color{black} S'' \color{black}} \frac{r(s')}{\Omega^2(s')}ds'- \color{black} \Omega^2(s) \color{black}| \leq   D'_{LB}\cdot \left[ \epsilon^4 \cdot s + \color{black} \ep^2 \cdot \Omega^2(s) \color{black} \right].
							\end{equation}
							
							Now write the following identity \begin{align}  \label{LB.est3}
								&\int_{\s}^{s} \frac{r(s')}{\Omega^2(s')}ds'=  \int_{\s}^{s} \frac{\frac{d}{ds}(\frac{r(s')}{\Omega^2(s')})}{-\frac{d}{ds}(\log(r^{-1} \Omega^2))}ds'\\ =   &\frac{r(s)}{2|K_-| \cdot\Omega^2(s)}- \frac{r(\s)}{2|K_-| \cdot\Omega^2(\s)} +\int_{\s}^{s}  \frac{d}{ds}(\frac{r(s')}{\Omega^2(s')}) \cdot (\frac{1}{-\frac{d}{ds}(\log(r^{-1} \Omega^2))}- \frac{1}{|2K_-|})ds'. \nonumber
							\end{align}
							
							Next we use \eqref{LB.est1} to  estimate the last term \begin{equation} \label{LB.est4}
								\bigl|\int_{\s}^{s} \frac{\frac{d}{ds}(\frac{r(s')}{\Omega^2(s')})}{-\frac{d}{ds}(\log(r^{-1} \Omega^2))+2K_-}ds' \bigr| \ls \epsilon^2 \cdot (\color{black}s\color{black}+\epsilon^2 \cdot s^2) \cdot \frac{r(s)}{\Omega^2(s)} \ls \ep^2  \cdot \color{black}s\color{black} \cdot  \Omega^{-2}(s) .
							\end{equation} 
							Combining \color{black} \eqref{EB12}, \color{black} \eqref{LB.est2}, \eqref{LB.est3} and \eqref{LB.est4} shows that for some  $D''_{LB}(M,e,\Lambda,m^2)>0$ and assuming   small enough $\epsilon$: \begin{equation}
								| |\dot{r}|(s) - B^2 \cdot \epsilon^2 \frac{r(s)}{2|K_-|}  - \color{black} \Omega^2(s) \color{black}| \lesssim [ \color{black}\ep^2 \cdot \color{black}\Omega^2(s) +   \epsilon^4 s] \color{black} \lesssim \ep^4 \cdot s \color{black}.
							\end{equation}
							
							Thus, \eqref{BLB4} is improved if we choose say $B_3 > \max \{2 C_- \cdot \nu, 2 B^2 \cdot  \frac{r_-}{2|K_-|} \}$ large enough. \color{black} Note in particular
							\begin{equation}\label{LB.int}
								|  (-2K_-) - b_-^{-2} \cdot \ep^{-2}\ \frac{|\dot{r}(s)|}{r(s)}+ \frac{ b_-^{-2} \cdot \ep^{-2}}{r_-}\ \Omega^2(s) | \lesssim \ep^2 \cdot s.
							\end{equation}		 \color{black} To obtain \eqref{LB4.5}, note in particular that combining \eqref{LB.est1} and \eqref{LB.int}  and integrating gives \begin{equation*}
								\bigl |\log\left(\left[\frac{r(s)}{\color{black}[1-\frac{B^2}{2|K_-|} \epsilon^2 \s(\epsilon)]r_-}\right]^{-b_-^{-2} \cdot \ep^{-2}}\frac{\Omega^2(s)}{e^{b_-^{-2} \cdot C_{EB}}\cdot  C_-   \cdot \nu^{-1}\cdot \ep^2} \right) -  \color{black} \frac{ b_-^{-2} \cdot \ep^{-2}}{r_-} \int_{\s}^{s} \Omega^2(s) ds \color{black} \bigr| \lesssim   \epsilon^2\cdot  s^2,
							\end{equation*} \color{black}which then gives  \begin{equation}\label{extraLB}
								\bigl |\log\left([\frac{r(s)}{\color{black}[1-\frac{B^2}{2|K_-|} \epsilon^2 \s(\epsilon)]r_-}]^{-b_-^{-2} \cdot \ep^{-2}}\frac{\Omega^2(s)}{e^{b_-^{-2} \cdot C_{EB}}\cdot  C_-    \cdot \nu^{-1}\cdot \ep^2} \right) -  \color{black} \frac{ b_-^{-2} \cdot \ep^{-2}}{r_- \cdot 2 |K_-|}\ \Omega^2(\s)  \color{black} \bigr| \lesssim   \epsilon^2\cdot  s^2 \color{black}+ \ep^{-2} \cdot \Omega^2(s).
							\end{equation}
							
							\color{black}
							Note also that, using the Taylor expansion of $\log$, we obtain (recalling the definition $\s(\epsilon)= 2|K_-|^{-1}\log(\nu \epsilon^{-2})$): \begin{equation} \label{extraLB2}
								\bigl| b_{-}^{-2} \epsilon^{-2} \log(1-\frac{B^2}{2|K_-|} \epsilon^2 \s(\epsilon)) + 2|K_-| \cdot \s(\epsilon) \bigr| \lesssim  \epsilon^2 [\log(\epsilon^{-1})]^2,
							\end{equation} from which we get \eqref{LB4.5}\color{black}, noting that $\frac{ b_-^{-2} \cdot \ep^{-2}}{r_- \cdot 2 |K_-|}\ \Omega^2(\s) =  \frac{ C_- \cdot b_-^{-2} \cdot \nu^{-1}}{r_- \cdot 2 |K_-|} + O (\ep^2 \cdot \log(\ep^{-1}))$ \color{black}.

							\color{black}All the other claimed bounds follow (or were already proven in the course of the proof).
							
							\color{black}

						\end{proof}

						\color{black}
						
						\subsection{Estimates on the crushing region}\label{crushing.section}
						
						We define the crushing region as $ \mathcal{C}:= \{  \color{black} 4\s\color{black} \leq s < s_{\infty}\}$ \color{black} region. \color{black}  Note that for $\epsilon$ small enough, $ \delta_{\mathcal{C}}\color{black}\cdot\epsilon^{-1}> 4\s(\ep) \approx \log(\ep^{-1})$ hence $\mathcal{C}$ and $\mathcal{LB}$ overlap. We will thus only apply Proposition \ref{LB.prop} to obtain estimates for  $s=4\s$, and prove new estimates for $s> 4\s$. We proceed in this way to obtain ``unified'' estimates in $\mathcal{C}$ which we will use in Section~\ref{Kasner.section} to show the uniform proximity of $g_{\ep}$ to a Kasner metric in $\mathcal{C}$.\color{black}

						\begin{prop} \label{C.prop} There exists $D_{C}(M,e,\Lambda,m^2)>0$ such that for all $s\in \mathcal{C}$: \begin{equation} \label{C1} 
								\bigl| \phi(s) - b_-^{-1} \cdot \ep^{-1} \cdot \log(\frac{r_-}{r})\bigr| \leq D_C \cdot \ep \cdot \log(\ep^{-1})\color{black} [1+  \cdot \color{black}|\log(\frac{r_-}{r(s)})|],
							\end{equation} \begin{equation} \label{C2}
								\bigl|r^2 \dot{\phi}(s) - B \cdot r_-^2 \cdot \ep \bigr| \leq D_C \cdot \ep^3,
							\end{equation}
							\begin{equation} \label{C3}
								\bigl|r^2(s) -r_-^2 + \frac{B^2 \cdot r_-^2}{|K_-|}\cdot \ep^2 \cdot s \bigr| \leq D_C \cdot \color{black}\ep^2 \cdot \log(\ep^{-1})\color{black} ,
							\end{equation} \begin{equation} \label{C4}
								\bigl| r(s)\cdot |\dot{r}|(s) -  \frac{B^2\cdot r_-^2 \cdot \epsilon^2}{2|K_-|}\bigr|= \bigl| r(s)\cdot |\dot{r}|(s) - 2|K_-| \cdot b_-^2  \cdot r_-^2 \cdot \epsilon^2\bigr| \leq D_C \cdot \color{black}\ep^4 \cdot \log(\ep^{-1})\color{black} ,
							\end{equation} \begin{equation} \label{C5}
								\bigl|\frac{d}{ds}    \log(r^{ -b_-^{-2} \cdot \ep^{-2}} \Omega^2(s)) \bigr| \leq    \frac{  D_C \cdot \color{black}\ep^2 \cdot \log(\ep^{-1}) \color{black}} {r^2},
							\end{equation}
							\begin{equation} \label{C7}
								|\color{black}\frac{r^2}{r_-^2}\color{black}\frac{d}{ds} \log(\Omega^2(s)) - 2K_-| \color{black}\leq D_C \cdot \color{black}\ep^2 \cdot \log(\ep^{-1}),\color{black}
							\end{equation}

							Moreover, recalling the notations from Proposition \ref{EB.Prop} and recalling $\color{black}C_{eff}:= e^{-b_-^{-2} \cdot C_{EB}} \cdot  C_-  \color{black}>0$, we have	\begin{equation} \label{C6}
								\bigl |\log\left([\frac{r(s)}{r_-}]^{-b_-^{-2} \cdot \ep^{-2}}\frac{\Omega^2(s)}{\color{black}C_{eff} \color{black}} \right) \bigr|  \leq    D_C \cdot \left[\color{black} \ep^2 \cdot \log(\epsilon^{-1})^2\color{black}+\color{black}\log(\ep^{-1})\color{black} \cdot 	|\log(\frac{r_-}{r(s)})|\right].	\end{equation}

						\end{prop}

						\begin{proof}
						The key point (and the main difference with $\mathcal{LB}$), which we will use repetitively when integrating, is that $\Omega^2(4\s) \sim \epsilon^8$ (by \eqref{LB4}) \color{black} so that the term $\Omega^2(s)$ on the RHS of \eqref{LB2}, \eqref{LB2.5} \color{black} or the $\ep^{-2} \cdot \Omega^{-2}$ in \eqref{LB3.5}, \eqref{LB4.5}  can be ignored for $s=4\s$.

							We bootstrap the following estimates for $C(M,e,\Lambda,m^2)>0$,  $B_5(M,e,\Lambda,m^2)>0$ to be determined later: \begin{equation} \label{B1TR}
								\color{black}	|\phi|(s) \leq \epsilon^{-1.1} \cdot\left[ |\log(\frac{r_-}{r})|+1 \right],
							\end{equation} \begin{equation} \label{B2TR}
								\color{black}		\Omega^2(s) \leq   e^{1.9K_- s}\cdot (\frac{r}{r_-})^{\ep^{-\frac{1}{2}}} .\color{black}
							\end{equation} 
							
						Note that \eqref{B1TR} and \eqref{B2TR} hold initially in a neighborhood of $4\s$ by the previous propositions. Note also that $e^{1.9K_- s} \ls \ep^{7.8}$ in this region hence \eqref{B2TR} also provides smallness in the subsequent estimates. \color{black}	
							Integrating \eqref{FieldODE2} using \eqref{B1TR} and \eqref{B2TR} with \eqref{LB6} for \color{black}$s= 4\s$:
							
							\begin{equation}\label{C.est1}
								|r^2 \dot{\phi }(s) - B \cdot r_-^2 \cdot \epsilon| \ls  \ep^3+ \int_{4\s}^{s} r^2 \Omega^2 |\phi| ds' \ls \ep^3+\ep^{-1.1} \int_{4\s}^{s} e^{2K_- s'} ds'   \color{black}\ls \ep^3.
							\end{equation} 
							
							Then integrate \eqref{req} using \eqref{B1TR} and \eqref{B2TR} with \eqref{LB2.5} for \color{black}$s=  4\s$ \color{black} gives:
							
							\begin{equation}\label{C.est2}
								| -r(s)\cdot \dot{r}(s) -  \frac{B^2\cdot r_-^2 \cdot \epsilon^2}{2|K_-|}| \lesssim  \color{black} \ep^4 \cdot \log(\ep^{-1})\color{black}.
							\end{equation} 
							
							Integrating \eqref{C.est2} and using \eqref{LB1} for \color{black}$s=  4\s$ \color{black} we obtain the following 	\begin{equation}\label{C.est2.5}
								\bigl| \frac{B^2}{|K_-|} \cdot \ep^2 \cdot s +(\frac{r(s)}{r_-})^2-1 \bigr| \ls \color{black} \ep^4 \cdot \log(\ep^{-1})\color{black}\cdot s,
							\end{equation}  in particular for $\ep$ small enough 	\begin{equation}\label{C.est2.75}
								\frac{B^2}{|K_-|} \cdot s\leq 1.1 \cdot \ep^{-2}. \end{equation}

							Moreover, as a consequence of \eqref{C.est1} and \eqref{C.est2} we obtain (recalling that $b_-:= \frac{B}{2|K_-|}$): \begin{equation} \label{C.est3}
								\bigl| \dot{\phi} - b_-^{-1} \cdot \epsilon^{-1} \cdot \frac{|\dot{r}|}{r} \bigr | \ls  \ep\log(\ep^{-1})\color{black}\frac{|\dot{r}|}{r} \ls \frac{\color{black}\ep^3\log(\ep^{-1})\color{black}}{r^2}.
							\end{equation} 
							
							\color{black} Now we can integrate it on $[4\s,s]$ using \eqref{LB1} and \eqref{LB7}, which improves bootstrap \eqref{B1TR} and gives \begin{equation} \label{C.est4}
								\bigl| \phi(s) - b_-^{-1} \cdot \epsilon^{-1}\cdot  \log(\frac{r_-}{r(s)}) \bigr| \ls \ep \cdot \log(\ep^{-1})+\ep \log(\ep^{-1})\color{black} \cdot |\log(\frac{r_-}{r(s)})|.
							\end{equation}

							Then from \eqref{Omegastat}, \eqref{C.est1} and using \eqref{B2TR} we get $$ | \frac{d^2 \log(r \Omega^2)}{ds^2} +2 B^2 \cdot \ep^2 \cdot \frac{ r_-^4}{r^4}| \lesssim  \frac{\ep^4}{r^4} +  \ep^{7.5}\color{black} \cdot (\frac{r}{r_-})\color{black}^{-5 + \ep^{-\frac{1}{2}}} \ls \frac{\ep^4}{r^4}, $$
							which we re-write using \eqref{C.est2} as \begin{equation} \label{C.est5}
								\bigl|\frac{d^2 \log(r \Omega^2)}{ds^2} +4 |K_-|\cdot r_-^2 \cdot\frac{ |\dot{r}|}{r^3}|   \ls \frac{\ep^2  \log(\ep^{-1})\color{black}  \cdot |\dot{r}|}{r^3}.
							\end{equation}
							
							We then \color{black}	integrate using \eqref{LB1}, \eqref{LB2}, \eqref{LB2.5} and \eqref{LB3}  for $s=  4\s$ \color{black} (recall that $\dot{r}<0$) we obtain  
							\begin{equation}  \label{C.est6}\bigl|\frac{d \log( \Omega^2)}{ds} -(2K_-) \cdot \frac{r_-^2}{r^2}\bigr|   \ls \bigl|\frac{d \log(r \Omega^2)}{ds} -(2K_-) \cdot \frac{r_-^2}{r^2}\bigr|  + \frac{\ep^2  \log(\ep^{-1})\color{black}  }{r^2} \ls   \frac{\color{black}\ep^2 \cdot \log(\ep^{-1}) \color{black}}{r^2}. \end{equation}
							
							In particular, this shows that $\frac{d\log( \Omega^2)}{ds}\leq 1.99\color{black}K_- $ so by \eqref{LB4}
							\begin{equation} \label{C.est.int.new}
								\color{black}\Omega^{2}(s) \leq  \Omega^{2}(4\s) \cdot e^{1.99\color{black}K_- \cdot (s-4\s)} \leq 2 C_- \cdot \nu^{-2} \cdot \ep^8.
							\end{equation}
							
							Now we can re-write \eqref{C.est6} using \eqref{C.est2} as: there exists a constant $E_C(M,e,\Lambda,m^2)>0$ such that for all $s\in \mathcal{C}$:
							\begin{equation}  \label{C.est6.5}
								\bigl|\frac{d \log( \Omega^2)}{ds} + (b_-)^{-2}\cdot \ep^{-2}\cdot \frac{ |\dot{r}|}{ r} \bigr|= \bigl|\frac{d}{ds}  \left(  \log(r^{- b_-^{-2}\ep^{-2}} \Omega^2 )\right) \bigr| \leq   \frac{E_C \cdot \color{black}\log(\ep^{-1}) \color{black} \cdot|\dot{r}|}{r}.
							\end{equation}
							
							Therefore, integrating \eqref{C.est6.5} on $[4\s,s]$ using \eqref{LB4.5} at $s= 4\s$ and also \eqref{r-r_-EB}, we obtain 	\begin{equation}  \label{C.est7}
								|\log\left((\frac{r(s)}{r_-})^{- b_-^{-2}\ep^{-2}} \frac{\Omega^2(s)}{\color{black}C_{eff} \color{black}} \right)|\ls  \color{black} \ep^2 \cdot \log(\epsilon^{-1})^2\color{black}+  \color{black}\log(\ep^{-1}) \color{black}  \cdot | \log(\frac{r_-}{r}) |.
							\end{equation}
							
							In particular, for $\ep$ small enough and taking the exponential (recall \eqref{CEB.def} and the definition of $C_{eff}$): \begin{equation} \label{C.est.int.new2}
								\Omega^{2}(s) \color{black}\ls \color{black} (\frac{r(s)}{r_-})^{ \frac{b_-^{-2}}{2}   \cdot \ep^{-2}}.
							\end{equation}

							Combining \eqref{C.est.int.new} and \eqref{C.est.int.new2} (for instance, take  \eqref{C.est.int.new} \color{black} to the power $0.999$ and \eqref{C.est.int.new2} to the power $0.001$ \color{black} and multiply them) improves bootstrap \eqref{B2TR} \color{black} for $\ep$ small enough \color{black}; all the other claimed estimates follow from the proof.

						\end{proof}
						
						\subsection{Spacelike singularity and blow-up estimates}

						\begin{prop} \label{spacelike.prop}There exists $s_{\infty}(\ep)>0$ such that $$ \lim_{s \rightarrow s_{\infty}} r(s)=0,$$ and moreover there exists  $D_C'(M,e,\Lambda,m^2)>0$, $D_C''(M,e,\Lambda,m^2)>0$ such that \begin{equation}\label{sinfty}
								|s_{\infty}-\epsilon^{-2} \cdot \frac{|K_-|}{B^2}| \leq D_C \cdot \color{black}\log(\ep^{-1}) \color{black} ,
							\end{equation}  \begin{equation}\label{rs}
								| (s_{\infty}-s) - \epsilon^{-2}\cdot\frac{|K_-|}{B^2}\cdot  \frac{r^2(s)}{r_-^2} |	\leq D'_C \cdot \color{black}\log(\ep^{-1}) \color{black} \cdot r^2(s). \color{black}
							\end{equation}  
							
							Additionally, $\{s=s_{\infty}\}$ is a spacelike singularity in the sense that for all $ p \in \{s=s_{\infty}\}$,  $J^{-}(p) \cap \mathcal{H}^+$ is compact, where $J^{-}(p)$ is the domain of dependence of $p$.

							Moreover, there exists $\tilde{D}_C(M,e,\Lambda,m^2)>0$ such  that the following blow-up estimates hold: for all $ s\in \mathcal{C}$  \begin{equation} \label{rho.blowup.equation}
								\tilde{D}_C^{-1} \cdot \ep^4\cdot  (\frac{r(s)}{r_-})^{-b_-^{-2} \cdot \ep^{-2}\cdot (1+\color{black}\ep^2 \cdot \log(\ep^{-1}) \color{black} \cdot  \tilde{D}_C)} \leq \rho(s) \leq \tilde{D}_C \cdot \ep^4 \cdot  (\frac{r(s)}{r_-})^{-b_-^{-2}\cdot \ep^{-2} \cdot (1-\color{black}\ep^2 \cdot\log(\ep^{-1}) \color{black} \cdot  \tilde{D}_C) },
							\end{equation}
							\begin{equation} \label{K.blowup.equation}
								\tilde{D}_C^{-2} \cdot \ep^4 \cdot  (\frac{r(s)}{r_-})^{-2b_-^{-2}\cdot \ep^{-2}\cdot (1+2  \color{black} \ep^2 \cdot \log(\ep^{-1}) \color{black} \cdot  \tilde{D}_C)} \leq \mathfrak{K}(s) \leq \tilde{D}_C^2 \cdot \ep^4 \cdot  (\frac{r(s)}{r_-})^{-2b_-^{-2}\cdot \ep^{-2} \cdot (1-2 \color{black} \ep^2 \cdot \log(\ep^{-1}) \color{black} \cdot  \tilde{D}_C) },
							\end{equation}
							
							where $\rho$ is the Hawking mass and $\mathfrak{K}(s):=  R_{\alpha \beta \mu \nu} R^{\alpha \beta \mu \nu}(s)$ the Kretchsmann scalar. In particular, for all $p>1$: \begin{equation} \label{rho.blow.up}
								\sup_{s\in \mathcal{C}} \rho(s)= +\infty, \hskip 6 mm \int_{\mathcal{C} \cap  \{t\in [0,1]\}} \rho^{p}   dvol_g:= 4\pi  \int_{\ep^{-1}}^{s_{\infty}} \rho^{p}(s')  \cdot  r(s') \cdot \Omega^2(s') ds' = +\infty.
							\end{equation}
							
						\end{prop}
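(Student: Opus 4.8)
The plan is to extract Proposition~\ref{spacelike.prop} entirely from the quantitative estimates of Proposition~\ref{C.prop}, whose bootstrap was closed on the whole interval $[\Delta_{\mathcal{C}}\epsilon^{-1},s_\infty)$, where $s_\infty\in(0,+\infty]$ is the supremum of the maximal interval of existence of the smooth solution of the ODE system \eqref{Radiusstat}--\eqref{FieldODE2}; by spatial homogeneity this interval is exactly the $s$-range of $\mathcal{M}$. First I would integrate \eqref{C4}: it gives $-r\dot r(s)\geq\tfrac12\cdot\tfrac{B^2 r_-^2}{2|K_-|}\epsilon^2>0$ for $\epsilon$ small, hence $\tfrac{d}{ds}(r^2)=2r\dot r\leq -c(M,e,\Lambda,m^2)\,\epsilon^2<0$ throughout; since $r^2\geq0$ this forces $s_\infty<+\infty$ and $\lim_{s\to s_\infty}r(s)=0$, because were the limit positive then $r,\Omega^2,\dot r,\tfrac{d}{ds}\log\Omega^2,\phi,\dot\phi$ would all remain bounded and bounded away from degeneracy on $[\Delta_{\mathcal{C}}\epsilon^{-1},s_\infty)$ by Proposition~\ref{C.prop} and \eqref{LB/C.radius}, so the solution would extend beyond $s_\infty$, contradicting maximality. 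Evaluating \eqref{C3} at $s=s_\infty$ (where $r=0$) yields \eqref{sinfty}. For \eqref{rs} I would integrate $\tfrac{d}{ds}(r^2)=2r\dot r$ backward from $r(s_\infty)=0$: by \eqref{req} and the bound \eqref{C6} on $\Omega^2$ the source driving $-r\dot r$ is $\lesssim\Omega^2 r^{-2}\lesssim\epsilon^2 r^{-2}(r/r_-)^{b_-^{-2}\epsilon^{-2}}$, super-exponentially small as $r\to0$, so $-r\dot r$ converges to $\tfrac{B^2 r_-^2}{2|K_-|}\epsilon^2$ with a negligible remainder, and converting the $(s_\infty-s)$-weighted error into an $r^2$-weighted one via the leading relation $s_\infty-s\simeq\epsilon^{-2}r^2$ produces \eqref{rs}.

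\textbf{Spacelike character.} I would pass to the double-null coordinates of Section~\ref{doublenull}, or rather to the regular coordinates $(U,V)$ of \eqref{defU}--\eqref{defV}, in which $\mathcal{H}^+=\{U=0\}\cup\{V=0\}$ with $U,V\geq0$. A point $p$ on the terminal boundary $\{s=s_\infty\}$ has $u(p)=\tfrac12(s_\infty-t(p))$ and $v(p)=\tfrac12(s_\infty+t(p))$ finite, hence, integrating \eqref{defU}--\eqref{defV}, finite positive $U(p),V(p)$. Since the causal past of $p$ is contained in the coordinate rectangle $\{0\leq U'\leq U(p)\}\times\{0\leq V'\leq V(p)\}$, the set $J^-(p)\cap\mathcal{H}^+$ is the union of the two closed intervals $[0,U(p)]\times\{0\}$ and $\{0\}\times[0,V(p)]$, which is compact; this is exactly the assertion that $\{s=s_\infty\}$ is a spacelike singularity.

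\textbf{Blow-up of the Hawking mass and the Kretschmann scalar.} From \eqref{guv} and the definition of $\rho$ one has $\rho=\tfrac r2\bigl(1-g_{\mathcal{Q}}(\nabla r,\nabla r)\bigr)=\tfrac r2\bigl(1+\tfrac{\dot r^2}{\Omega^2}\bigr)$; near $r=0$ the second term dominates because $\dot r^2\simeq\epsilon^4 r^{-2}$ by \eqref{C4} while $\Omega^2\to0$ super-exponentially by \eqref{C6}, so $\rho\simeq\tfrac{r\dot r^2}{2\Omega^2}\simeq\epsilon^4 r^{-1}\Omega^{-2}$; inserting the two-sided bound \eqref{C6} on $\Omega^2$ --- whose $\epsilon^{-1}|\log(r_-/r)|$ error is precisely what generates the $(1\pm\epsilon\tilde D_C)$ correction in the exponent, the residual powers of $r$ and $\epsilon$ being absorbed there --- yields \eqref{rho.blowup.equation}. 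For \eqref{K.blowup.equation} I would compute $\mathfrak{K}$ directly from the spatially-homogeneous metric \eqref{gts}, whose sectional curvatures are explicit functions of $\Omega^2,r$ and their $s$-derivatives: the $(s,t)$-plane Gauss curvature is $\simeq\Omega^{-2}\partial_s^2\log\Omega^2$, and by \eqref{C1}, \eqref{C2}, \eqref{C4}, \eqref{C6} together with \eqref{Radiusstat}, \eqref{Omegastat} the dominant curvature terms near $r=0$ are all of size $\simeq\epsilon^2 r^{-4}\Omega^{-2}$ --- a factor $\epsilon^{-2}$ above the $\Sigma_k$-sectional curvature $r^{-2}(k+\dot r^2\Omega^{-2})\simeq\rho r^{-3}\simeq\epsilon^4 r^{-4}\Omega^{-2}$, reflecting the Kasner anisotropy --- so $\mathfrak{K}\simeq\epsilon^{-4}\rho^2 r^{-6}$, which with \eqref{rho.blowup.equation} gives \eqref{K.blowup.equation} and, as a byproduct, the identity $r^6\mathfrak{K}/\rho^2\simeq\epsilon^{-4}$ feeding \eqref{blow.up.main.theorem.3} of Theorem~\ref{maintheorem}.

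\textbf{Unbounded suprema and integrals, and the main obstacle.} Since $r(s)\to0$ as $s\to s_\infty$, \eqref{rho.blowup.equation} and \eqref{K.blowup.equation} immediately give $\sup_{\mathcal{C}}\rho=\sup_{\mathcal{C}}\mathfrak{K}=+\infty$ and $\lim_{\epsilon\to0}\sup_{\mathcal{C}}r^6\mathfrak{K}/\rho^2=+\infty$. For the integral in \eqref{rho.blow.up} with $p>1$ I would change variables from $s$ to $r$ via $ds=\dot r^{-1}dr$ with $|\dot r|\simeq\epsilon^2 r^{-1}$ from \eqref{C4}: then $\int_{\epsilon^{-1}}^{s_\infty}\rho^p\,r\,\Omega^2\,ds\simeq\epsilon^{-2}\int_0^{r_0}\rho^p\,r^2\,\Omega^2\,dr$ for a fixed $r_0>0$, and inserting the lower bounds of \eqref{rho.blowup.equation} and \eqref{C6} the integrand is $\gtrsim(\text{power of }\epsilon)\cdot(r/r_-)^{-(p-1)b_-^{-2}\epsilon^{-2}+O(1)}r^2$, whose exponent of $r$ near $0$ is $<-1$ for $\epsilon$ small because $p>1$; hence the integral diverges (the Kretschmann integral in Theorem~\ref{maintheorem} is identical, since $r^3\sqrt{\mathfrak{K}}\simeq\epsilon^{-2}\rho$). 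The genuinely delicate points are (a) the continuation argument pinning $s_\infty$ to $\{r=0\}$ rather than to an earlier degeneration of the metric, and (b) securing the \emph{sharp} exponents in \eqref{rs}, \eqref{rho.blowup.equation}, \eqref{K.blowup.equation}: this forces a systematic use of the super-exponential smallness of $\Omega^2$ near $r=0$ to kill the error terms when integrating \eqref{req} and \eqref{Omegastat}, a careful accounting of which powers of $\epsilon$ and $r$ enter the prefactor versus the $(1\pm O(\epsilon))$-modification of the exponent, and --- for $\mathfrak{K}$ --- the correct identification of the dominant anisotropic (radial) curvature, which is the origin of the otherwise surprising $\epsilon^{-4}$.
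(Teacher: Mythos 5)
Your proposal is correct and follows essentially the same route as the paper's (very terse) proof: everything is read off from Proposition~\ref{C.prop}, with \eqref{C3}--\eqref{C4} giving $s_\infty$, \eqref{sinfty} and \eqref{rs}, the identity \eqref{mu.static} (your $\rho=\tfrac r2(1+\dot r^2/\Omega^2)$) together with \eqref{C6} giving the Hawking-mass blow-up, and the Kretschmann scalar handled by identifying the dominant term $\simeq\Omega^{-4}|\dot\phi|^4\simeq\ep^4 r^{-8}\Omega^{-4}$ — the paper quotes the explicit formula for $\mathfrak{K}$ from Kommemi and singles out exactly this term, which by \eqref{Omegastat} coincides with the squared $(s,t)$-plane curvature you isolate. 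The only substantive addition on your side is the explicit continuation/compactness bookkeeping, which the paper leaves implicit.
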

						\begin{proof}
							The existence of $s_{\infty}$, \eqref{sinfty}, \eqref{rs} follows directly from \eqref{C3}, \eqref{C4}. The spacelike character of $\mathcal{S}$ follows immediately. \eqref{rho.blowup.equation} (and by extension \eqref{rho.blow.up}, using also \eqref{C4}) follow from \eqref{C6} and \eqref{mu.static}. For a metric of the form \eqref{guv}, $\mathfrak{K}$ is given by (c.f.\ \cite{Kommemi}, Section 5.8): $$\mathfrak{K} = 16\Omega^{-4} [\frac{d^2 \log(\Omega^2)}{ds^2}]^2+ \frac{16 \rho^2}{r^6} + 24 \left( \frac{\rho}{r^3} + 2 \Omega^{-2} r  |\dot{\phi}|^2-m^2 |\phi|^2\color{black}\right)^2 + 48 \Omega^{-4} |\dot{\phi}|^4 $$

							In view of \eqref{mu.static} and \eqref{Omegastat} and Proposition \ref{C.prop}, we see that the $\Omega^{-4} |\dot{\phi}|^4$ term dominates, therefore for all $s\in \mathcal{C}$: $$ \mathfrak{K} (s) \simeq \frac{\ep^4}{r^8 (s) \cdot \Omega^4(s)} \simeq \frac{ \ep^{-4}\cdot\rho^2(s)}{r^6(s)},$$ and \eqref{K.blowup.equation} then follows from \eqref{rho.blowup.equation}.

						\end{proof}

						\section{Convergence to Reissner--Nordstr\"{o}m-(dS/AdS) in a weak topology} \label{convergence.section}
						
						In this section, we state, for the convenience of the reader, the convergence results in the $s$ coordinate defined by \eqref{gauge}, which follow immediately from the estimates of Section~\ref{main.section}. We refer the reader to Definition \ref{distrib.conv} for notations.
						
						\begin{prop}
							Recall the definition of $\tilde{\Omega^2}_{\ep}(s)$ from Definition \ref{distrib.conv}. Then  $\tilde{\Omega^2}_{\ep}(s)$ converges uniformly to $\Omega^2_{RN}(s)$. More precisely, we have the following estimate \begin{equation} \label{OmegaLinfty}
								\sup_{-\infty< s< +\infty} |\tilde{\Omega^2}_{\ep}(s)- \Omega^2_{RN}(s)| =\sup_{-\infty< s< s_{\infty}(\ep)} |\Omega^2(s)- \Omega^2_{RN}(s)| \ls \ep^2.
							\end{equation}
							
							Moreover,  $\tilde{\Omega^2}_{\ep}(s)$ converges in $L^1(\RR_s)$ to $\Omega^2_{RN}(s)$. More precisely, we have the following estimate \begin{equation} \label{OmegaL1}
								\int_{-\infty}^{+\infty} |\tilde{\Omega^2}_{\ep}(s)- \Omega^2_{RN}(s)| ds=\int_{-\infty}^{s_{\infty}(\ep)} |\Omega^2(s)- \Omega^2_{RN}(s)| ds  \ls \ep^2.
							\end{equation}
						\end{prop}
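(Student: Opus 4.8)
The plan is to read off both statements directly from the quantitative estimates established in Section~\ref{main.section}, since the function $\tilde\Omega^2_\ep$ is patched together from $\Omega^2$ (for $s<s_\infty(\ep)$) and from $\Omega^2_{RN}$ itself (for $s\geq s_\infty(\ep)$), so the difference $\tilde\Omega^2_\ep-\Omega^2_{RN}$ is supported on $(-\infty,s_\infty(\ep))$ and equals $\domega=\Omega^2-\Omega^2_{RN}$ there. Hence it suffices to control $|\domega|(s)$ region by region: $\mathcal{R}$, $\mathcal{N}$ up to $s=S''$, $\mathcal{EB}$, $\mathcal{LB}$, and $\mathcal{C}$. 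For the uniform bound \eqref{OmegaLinfty} I would argue as follows. On $\mathcal{R}\cup\{s\le S''\}$, Propositions~\ref{prop.RS.d} and \ref{prop.N.d} give $|\domega|\ls \ep^2$ (in $\mathcal{R}$ one even has the stronger $\ep^2\Omega^4$; on the compact $s$-range $[-\Delta,S'']$ the factor $C^s$ is bounded by a constant depending only on the parameters). On $\mathcal{EB}$, estimate \eqref{EB3} combined with $|\domega|\ls\Omega^2\cdot|\delta\log\Omega^2|$ (Taylor expansion of $\log$) and $\Omega^2(s)\ls e^{2K_-s}$ yields $|\domega|(s)\ls \ep^2\,s\,e^{2K_-s}\ls \ep^2$ since $s\,e^{2K_-s}$ is bounded on $[S'',\infty)$. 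On $\mathcal{LB}$, bootstrap \eqref{BLB5} (closed in Proposition~\ref{LB.prop}) gives $|\domega|(s)\le B_4\,\Omega^2\,\ep^3 s^3$, and since $\Omega^2(s)\ls \ep^2 e^{K_-(s-\s)}$ while $s\le\ep^{-5/4}$, the product $\Omega^2 s^3$ stays bounded, so $|\domega|\ls\ep^3\cdot(\text{bounded})\ls\ep^2$. Finally on $\mathcal{C}$ we need only that $\Omega^2$ is small: by \eqref{C.est.int.new}, $\Omega^2(s)\ls e^{2K_-\Delta_{\mathcal{C}}\ep^{-1}}$, which is super-polynomially small in $\ep$, and $\Omega^2_{RN}(s)$ on this $s$-range (with $s\ge\Delta_{\mathcal{C}}\ep^{-1}$) is likewise $\ls e^{2K_-\Delta_{\mathcal{C}}\ep^{-1}}$ by \eqref{RNasymp1}; hence $|\domega|\ls e^{2K_-\Delta_{\mathcal{C}}\ep^{-1}}\ll\ep^2$. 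Taking the supremum over all regions gives \eqref{OmegaLinfty}.

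For the $L^1$ bound \eqref{OmegaL1} the strategy is to exploit the exponential weights that are already present in the region-by-region estimates, so that the $s$-integral converges with a net gain. On $\mathcal{R}=\{s\le-\Delta\}$, \eqref{diffR} gives $|\domega|\ls\ep^2\Omega^4\ls\ep^2 e^{8K_+s}$, integrable over $(-\infty,-\Delta]$ with integral $\ls\ep^2$. On $[-\Delta,S'']$ the range is a fixed compact interval and the integrand is $\ls\ep^2$, contributing $\ls\ep^2$. On $\mathcal{EB}\cup\mathcal{LB}\cup\mathcal{C}$, we have from the above that $|\domega|(s)\ls\ep^2\,s\,e^{2K_-s}$ on $\mathcal{EB}$, $|\domega|(s)\ls\ep^3 s^3 e^{K_-(s-\s)}$ on $\mathcal{LB}$ (up to the factor $\ep^2$ from $\Omega^2$, so really $\ls\ep^5 s^3 e^{K_-(s-\s)}$), and $|\domega|(s)\ls e^{2K_-\Delta_{\mathcal{C}}\ep^{-1}}$ on $\mathcal{C}$ whose $s$-length is $\ls\ep^{-2}$. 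The first integrates to $\ls\ep^2$ (the integral of $s e^{2K_-s}$ over $[S'',\infty)$ is an absolute constant); the second integrates to $\ls\ep^5\cdot(\text{const})\ls\ep^2$; the third is $\ls\ep^{-2}e^{2K_-\Delta_{\mathcal{C}}\ep^{-1}}\ll\ep^2$. Summing the contributions yields $\int|\tilde\Omega^2_\ep-\Omega^2_{RN}|\ds\ls\ep^2$, which is \eqref{OmegaL1}; note that since $\Omega^2\in L^1\cap L^\infty$ implies $\Omega^2\in L^p$ for all $p\ge1$, the $L^p(\RR_s)$ convergence asserted in Theorem~\ref{convergence.theorem} follows by interpolation, though that is a separate statement.

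The only genuinely delicate point is bookkeeping the weights correctly in $\mathcal{LB}$ and $\mathcal{C}$: one must remember that the bootstrap \eqref{BLB5} controls $|\domega|$ with a factor $\Omega^2$ which itself is $\ls\ep^2 e^{K_-(s-\s)}$ with $\s\simeq\log(\ep^{-1})$, so that $\Omega^2\ls\ep^2\cdot\ep^{|K_-|/(2|K_-|)}\cdot e^{K_-(s-S'')}$-type factors combine with polynomial-in-$s$ losses without ever producing a positive power of $\ep^{-1}$ that would survive over the $O(\ep^{-5/4})$ or $O(\ep^{-2})$ length of the region. The crushing region is where one might worry, because its $s$-length is the largest ($\simeq\ep^{-2}$), but there $\Omega^2$ decays like $e^{2K_-\Delta_{\mathcal{C}}\ep^{-1}}$, a quantity that beats any power of $\ep$; so the product with the $\ep^{-2}$ length is still negligible. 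Everything else is direct substitution into estimates already proved, together with the elementary observation that $\int_{\RR}|g(s)|\ds$ for a $g$ supported where we have the stated pointwise bounds splits as a finite sum of convergent integrals each $\ls\ep^2$. Thus the proposition requires no new ideas beyond assembling Propositions~\ref{prop.RS.d}, \ref{prop.N.d}, \ref{EB.Prop}, \ref{LB.prop}, \ref{C.prop}.
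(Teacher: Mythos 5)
Your proposal is correct and takes essentially the same approach as the paper, which gives no explicit proof for this proposition but states that it ``follow[s] immediately from the estimates of Section~\ref{main.section}'': namely, assembling the region-by-region bounds from Propositions~\ref{prop.RS.d}, \ref{prop.N.d}, \ref{EB.Prop}, \ref{LB.prop}, \ref{C.prop} on $\domega$, using that the exponential factors ($e^{4K_+s}$ on $\mathcal{R}$ --- you wrote $e^{8K_+s}$ but $\Omega^4\sim e^{4K_+s}$; harmless --- and $e^{2K_-s}$ for $s$ large) both control the sup and make the $s$-integral converge with a net gain of $\ep^2$. Note that your appeal to the bootstrap bound \eqref{BLB5} and to \eqref{C.est.int.new} uses quantities proven inside the proofs rather than listed among the propositions' conclusions, which is a cosmetic rather than substantive deviation since those bounds are indeed closed there.
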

						
						\begin{prop}
							
							Recall the definition of $\tilde{r}_{\ep}(s)$ from Definition \ref{distrib.conv}.  Then for all $s \in \RR$, we have the point-wise convergence: $$  \tilde{r}_{\ep}(s) \underset{\ep \rightarrow 0}{\rightarrow} r_{RN}(s).$$
							
							Additionally, we have the following convergence in distribution: $$  \tilde{r}_{\ep} \overset{\mathcal{D}'(\RR_{s})}  \rightharpoonup r_{RN}.$$	
							Moreover, we have the following convergence in $L^{\infty}_{loc}$ in the following sense: for all $S \in \RR$, there exists $\epsilon_0(S)>0$ such that for all $0<|\ep| < \ep_0$:
							\begin{equation} \label{r-rRNLinfty}
								\sup_{-\infty<s \leq S} |r^2(s)-r_{RN}^2(s)| \leq \sup_{-\infty<s \leq \color{black}\ep^{-1}\color{black}} |r^2(s)-r_{RN}^2(s)| \leq C \cdot \color{black}\ep\color{black}.
							\end{equation}
							
							Nevertheless,  $\tilde{r}_{\ep}^2(s)$ does not converge to $r_{RN}(s)$ in $L^1(\RR_s)$ (or a forciori in any $L^p(\RR_s)$, $1\leq p<\infty$) in particular	\begin{equation} \label{r-rRNL1blowup}
								\int_{ s\in \RR } |r^2(s)-r_{RN}^2(s)| ds\geq 	\int_{\ep^{-1} \leq s < s_{\infty}(\ep) } |r^2(s)-r_{RN}^2(s)| ds \geq C \cdot \ep^{-2}.
							\end{equation}
							Also, $\tilde{r}_{\ep}(s)$ does not converge uniformly to $r_{RN}(s)$ either: for all $0\leq \alpha < 1$, there exists $s_{\alpha}(\ep) \in (\delta_{\mathcal{C}}\color{black} \cdot\ep^{-1},s_{\infty})$ with \begin{equation} \label{not.unif}
								\lim_{\ep \rightarrow 0} r(	s_{\alpha}(\ep))= \alpha \cdot r_-(M,e,\Lambda).
							\end{equation}
							
							Lastly, $\tilde{r}_{\ep}(s)$ does not converge to $r_{RN}(s)$ in $\dot{W}^{1,p}$ for any $p\geq1$ : we have the following estimate for all $s < s_{\infty}$:
							\begin{equation} \label{W11blowup}
								\int_{s}^{+\infty} \bigl| \frac{d\tilde{r}_{\ep}}{ds}(s')- \frac{dr_{RN}}{ds}(s') \bigr|  ds' \geq 	\int_{\ep^{-1}}^{s_{\infty}} \bigl| |\dot{r}|(s')-\Omega^2_{RN}(s') \bigr|  ds' \geq \frac{r_-}{2},
							\end{equation} \begin{equation} \label{H1blowup}
								\int_s^{s_{\infty}} (\dot{r})^2(s') ds' = +\infty.
							\end{equation}
						\end{prop}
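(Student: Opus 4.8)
The plan is to deduce every assertion as a direct corollary of the quantitative estimates of Section~\ref{main.section}, together only with the elementary facts that, on \eqref{RN}, $\Omega^2_{RN}(s)\ls e^{2K_-s}$ and $|r_{RN}(s)-r_-|\ls e^{2K_-s}$ as $s\to+\infty$ (cf.\ Corollary~\ref{linearcor} and the proof of Corollary~\ref{cor.NS}), and $\dot{r}_{RN}=-\Omega^2_{RN}$ (since $\kappa^{-1}\equiv1$ on \eqref{RN}). The guiding dichotomy is: on the stable range $s\ls\ep^{-1-a}$ ($a<1$) the difference and $C^0$ estimates force $r$ to be close to $r_{RN}$, whereas on the late portion of $\mathcal{C}$, where $r\to0$, the monotone collapse quantified by \eqref{C3}, \eqref{C4}, \eqref{rs} destroys every form of strong convergence. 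Concretely, for fixed $s$, since $s_\infty(\ep)\simeq\ep^{-2}\to\infty$ we have $\tilde{r}_\ep(s)=r(s)$ for $\ep$ small and \eqref{RN.stab.main2} gives $|r(s)-r_{RN}(s)|\ls\ep^2 s\to0$, which is the pointwise convergence; pairing with a test function supported in $[-R,R]$ and using $\sup_{|s|\le R}|r-r_{RN}|\ls\ep^2 R$ yields the distributional convergence. For \eqref{r-rRNLinfty} one picks $\ep_0(S)$ with $\ep_0^{-1}\ge S$ so the first inequality is automatic, then splits $\{s\le\ep^{-1-a}\}$ at $s=\s(\ep)\simeq\log(\ep^{-1})$: on $\{s\le\s\}$, \eqref{RN.stab.main2} gives $|r^2-r_{RN}^2|\ls\ep^2\log(\ep^{-1})\ll\ep^{1-a}$; on $\{\s\le s\le\ep^{-1-a}\}$, the bounds \eqref{LB1} on $\mathcal{LB}$ and \eqref{C3} on $\mathcal{C}$ (which jointly cover this range for small $\ep$) give $|r^2(s)-r_-^2|\ls\ep^2 s+\ep\ls\ep^{1-a}$, while $|r_{RN}^2(s)-r_-^2|\ls e^{2K_-\s}\ls\ep^2$, and the triangle inequality closes it.

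For the failure of uniform and $L^p$ convergence: by \eqref{C3}, $s\mapsto r(s)$ is continuous and strictly decreasing on $(\Delta_{\mathcal{C}}\ep^{-1},s_\infty)$ with $r(\Delta_{\mathcal{C}}\ep^{-1})=r_-(1+O(\ep))$ and $r(s)\to0$ as $s\to s_\infty$, so for $\ep$ small (depending on $\alpha\in[0,1)$) the intermediate value theorem yields $s_\alpha(\ep)\in(\Delta_{\mathcal{C}}\ep^{-1},s_\infty)$ with $r(s_\alpha(\ep))=\alpha r_-$ (for $\alpha=0$, take instead $r(s_0(\ep))=\ep$), which is \eqref{not.unif}; since $s_\alpha(\ep)>\Delta_{\mathcal{C}}\ep^{-1}\to\infty$ we also have $r_{RN}(s_\alpha(\ep))\to r_-$, hence $|r(s_\alpha(\ep))-r_{RN}(s_\alpha(\ep))|\to(1-\alpha)r_->0$ and uniform convergence fails. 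For \eqref{r-rRNL1blowup}, set $I_\ep:=\{s:\tfrac{r_-}{3}\le r(s)\le\tfrac{2r_-}{3}\}$; by \eqref{rs} this is an interval contained in $(\ep^{-1},s_\infty)$ of length $|I_\ep|\gtrsim\ep^{-2}$, and on it $|r^2-r_{RN}^2|=r_{RN}^2-r^2\ge\tfrac{9}{10}r_-^2-\tfrac49 r_-^2\ge\tfrac{r_-^2}{3}$ for $\ep$ small, so $\int_{\ep^{-1}}^{s_\infty}|r^2-r_{RN}^2|\,ds\ge\tfrac{r_-^2}{3}\,|I_\ep|\gtrsim\ep^{-2}$ (the identical computation with $p$-th powers gives the $L^p$ divergence rate of Theorem~\ref{convergence.theorem}).

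For the failure of $\dot{W}^{1,p}$ convergence, note $\tfrac{d\tilde{r}_\ep}{ds}=\dot{r}$ on $\{s<s_\infty\}$ with $\dot{r}<0$ and $\tfrac{dr_{RN}}{ds}=-\Omega^2_{RN}$, so $\bigl|\tfrac{d\tilde{r}_\ep}{ds}-\tfrac{dr_{RN}}{ds}\bigr|=\bigl||\dot{r}|-\Omega^2_{RN}\bigr|$ there and $0$ beyond $s_\infty$. For $p=1$ and $s\le\ep^{-1}$,
\[
\int_{\ep^{-1}}^{s_\infty}\bigl||\dot{r}|-\Omega^2_{RN}\bigr|\,ds'\ \ge\ \Bigl|\int_{\ep^{-1}}^{s_\infty}(-\dot{r})\,ds'-\int_{\ep^{-1}}^{s_\infty}\Omega^2_{RN}\,ds'\Bigr|\ =\ \Bigl|r(\ep^{-1})-\int_{\ep^{-1}}^{s_\infty}\Omega^2_{RN}\,ds'\Bigr|,
\]
and since $r(\ep^{-1})\to r_-$ by \eqref{C3} while $\int_{\ep^{-1}}^{\infty}\Omega^2_{RN}\,ds'\ls e^{2K_-\ep^{-1}}\to0$, the right-hand side is $\ge r_-/2$ for $\ep$ small, proving \eqref{W11blowup}. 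For \eqref{H1blowup}, combine \eqref{C4} and \eqref{rs} near $s_\infty$: $(\dot{r})^2(s)=(r\dot{r})^2/r^2\simeq\ep^4/r^2(s)$ and $r^2(s)\simeq\ep^2(s_\infty-s)$, hence $(\dot{r})^2(s)\simeq\ep^2/(s_\infty-s)$, which is not integrable at $s_\infty$; thus $\tfrac{d\tilde{r}_\ep}{ds}\notin L^2((s,s_\infty))$ and, the interval being bounded, $\notin L^p((s,s_\infty))$ for every $p\ge2$, whereas $\tfrac{dr_{RN}}{ds}\in L^p(\RR)$, so $\|\tilde{r}_\ep-r_{RN}\|_{\dot{W}^{1,p}}=+\infty$ for $p\ge2$ (the case $p=1$ being \eqref{W11blowup}).

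The only real bookkeeping issue is in the first step: one must splice the difference estimates (sharp only up to $s=\s(\ep)\simeq\log(\ep^{-1})$) to the coarser $C^0$ bounds \eqref{LB1} and \eqref{C3} on $\mathcal{LB}\cup\mathcal{C}$ so as to control $|r^2-r_{RN}^2|$ uniformly on $\{s\le\ep^{-1-a}\}$ across the various $\ep$-dependent region boundaries. Everything else is a purely formal consequence of Section~\ref{main.section}, since all the dynamical content (the collapse law $r\dot{r}\simeq\ep^2$, the asymptotics $s_\infty\simeq\ep^{-2}$, the power law for $\Omega^2$) has already been established there.
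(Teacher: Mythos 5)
Your proposal is correct and follows the same route as the paper: the paper's own proof is a one-line citation of the quantitative estimates \eqref{diffR2}, \eqref{diff.N.3}, \eqref{EB4}, \eqref{LB1}, \eqref{C3} (together with \eqref{rs}, \eqref{C4}, \eqref{sinfty}), and your write-up simply carries out those deductions, including the splicing of the difference estimates to the coarser $C^0$ bounds across the $\ep$-dependent region boundaries. The only cosmetic difference is that you produce $s_{\alpha}(\ep)$ via the intermediate value theorem, whereas the paper takes the explicit value $s_{\alpha}(\ep)=\ep^{-2}\tfrac{|K_-|}{B^2}(1-\alpha)$ and invokes \eqref{rs}; both work.
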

						\begin{proof}
							All the estimates follow from \eqref{diffR2}, \eqref{diff.N.3} (for $s \leq \s$), \eqref{EB4}, \eqref{LB1} and \eqref{C3}. \\Note that 
							$s_{0}(\ep):= s_{\infty}-\ep^{-1}> \ep^{-1}$ for small enough $\ep$, by \eqref{sinfty}.
						\end{proof}
						
						\section{Convergence to a Kasner-type metric in re-normalized coordinates} \label{Kasner.section}

						In this section, we will define new variables to show that the metric is uniformly close to a Kasner metric of exponents $$
						(1-4b_-^2 \epsilon^2,\ 2b_-^2 \epsilon^2,\ 2b_-^2 \epsilon^2)+O_{s}(\epsilon^3)$$ in the crushing region $\mathcal{C}=\{\color{black}4\s \color{black}\leq s < s_{\infty}(\ep)\}$, and close in $BMO$ (in Kasner-type coordinates)  to Minkowski.

						\color{black}  To express the metric in Kasner-type coordinates, we define (a multiple of) the proper time $\tau(s)>0$: 
						\begin{equation}\label{taudef}
							\frac{d\tau}{ds} = - \tau_0(M,e,\Lambda,m^2,\ep) \cdot 
							\Omega(s),
						\end{equation}
						\begin{equation*} \label{Tdef2}
							\tau(s_{\infty})=0,
						\end{equation*} where $\tau_0(\ep)>0$ (to be estimated later, see already Lemma \ref{tau0.lemma}) is chosen so that on the past boundary on $\mathcal{C}$ (i.e.\ $s=4\s$), \color{black} as per \eqref{LB4} \color{black} \begin{equation}\label{tau0def}
							\tau(4\s) = \Omega(4\s)= \color{black} \sqrt{C_-} \color{black} \cdot \nu^{-2} \cdot \epsilon^4 + O(\epsilon^6 [\log(\epsilon^{-1})]^2).
						\end{equation}\color{black}
						We have the following immediate formulae for $\tau$:
						\begin{equation}\label{tau.formula1}
							\tau(s)= \tau_0 \cdot
							\int_{s}^{s_{\infty}} \Omega(s') ds',
						\end{equation}
						\begin{equation}\label{tau.formula2}
							\frac{d}{ds} \log(\tau^{-1})(s) =\frac{\Omega(s)}{\int_{s}^{s_{\infty}} \Omega(s') ds'}.
						\end{equation}

						It will also be convenient to introduce the  the re-scaled area-radius \begin{equation} \label{X.def}
							X(s):= \frac{r(s)}{r_-}. 
						\end{equation}\color{black}
						Note that for all $s \in \mathcal{C}$, $X(s) \in (0,1)$ (This is a direct consequence of the monotonicity of $r$ and \eqref{LB/C.radius}).
						
						Substituting $s$ for $\tau$ in \eqref{guv} gives a metric \color{black} of the following form: \begin{equation} \label{gTX}
							g = - \tau_0^{-2}
							d\tau ^2 + \Omega^2(s) dt^2 + r_-^2 \cdot X^2(s) \color{black} d\sigma_{\mathbb{S}^2} \color{black}.
						\end{equation}
						\color{black}
						Now the goal will be to show that for  a bounded function $P(\tau)$ we have for all $s \in \mathcal{C}$ \begin{align}
							&   \Omega^2(s) \approx  \tau^{2(1-2b_- \epsilon^2\cdot(1+\epsilon^2 \cdot \log(\ep^{-1}) \cdot P(\tau)))}, \label{crucial1}\\ & X^2(s) \approx (\frac{\tau}{\sqrt{C_{eff}}})^{4b_- \epsilon^2\cdot(1+\epsilon^2 \cdot \log(\ep^{-1}) \cdot P(\tau))}\label{crucial2}. 
						\end{align} 
						
						Once \eqref{crucial1} and \eqref{crucial2}, we (trivially) re-scale by constants the coordinates in the following fashion \begin{align}
							&\tilde{\tau}= \frac{\tau}{\tau_0}, \label{scale1}\\ & \rho = \tau_0^{1-4b_- \epsilon^2}\ t,  \label{scale2}\\ &   d\tilde{\sigma}_{\mathbb{S}^2} =  r_-^2 \cdot  (\frac{\tau_0}{\sqrt{C_{eff}}})^{4b_- \epsilon^2} d\sigma_{\mathbb{S}^2} \label{scale3},
						\end{align} and thus \eqref{gTX} becomes schematically a Kasner metric of exponents $(1-4b_- \epsilon^2,\ 2b_- \epsilon^2,\ 2b_- \epsilon^2)$ (up to $O(\epsilon^2 \cdot \log(\ep^{-1}))$ errors)  \begin{equation*}
							g \approx  -d \tilde{\tau}^2 + \tilde{\tau}^{2(1-4b_- \epsilon^2)} d\rho^2 +  \tau^{4b_- \epsilon^2} d\tilde{\sigma}_{\mathbb{S}^2}.
						\end{equation*}

						The key will be to estimate $X^2(s)$ in terms of $\Omega^2(s)$ in the following manner \begin{equation}\label{X.rough.estimate}
							|\frac{|K_-| \int_s^{s_{\infty}} \Omega(s')ds'}{\Omega(s) \cdot X^2(s)}-1|=| \frac{|K_-|\cdot \tau(s)}{\tau_0 \cdot \Omega(s) \cdot X^2(s)}-1| \lesssim \epsilon^2 \cdot \log(\ep^{-1}).
						\end{equation} 
						\color{black}
						\subsection{Preliminary estimates}\label{prel}
						In this section, we will give two types of estimates that are basically translation in the $\tau$, $X$ variables of the estimates of Section~\ref{LBsection} and Section~\ref{crushing.section}. 
						
						\subsubsection{Estimates on $\tau_0$ }\label{LBtau.section}
						
						In this section, we estimate the constant $\tau_0$ from \eqref{tau0def}. This will be useful both for the $L^{\infty}$ and $BMO$ estimates. Note that by our definition of $\tau$ in \eqref{taudef},\eqref{tau0def} we have $\log(\frac{\Omega(4\s)}{\tau(4\s)})=0$ on the past boundary of $\mathcal{C}$.

						\begin{lem} \label{tau0.lemma}
							\begin{equation} \label{tau0.est}
								\bigl|\tau_0(\ep) - |K_-|(M,e,\Lambda) \bigr| \ls \color{black}  \ep^2\cdot \log(\epsilon^{-1})^2\color{black}.
							\end{equation} In particular, $\underset{\ep \rightarrow 0}{\lim} \ \tau_0(\ep)= |K_-|$.
						\end{lem}
						
						\begin{proof}

							We start with the following estimate which follows immediately from \eqref{LB4}: \begin{equation}\label{int1}
								\Omega(4\s) = \sqrt{C_- }\cdot \nu^{-2} \cdot \ep^4 +O(\epsilon^{6} \log(\epsilon^{-1})^2)
							\end{equation}
							Now, by \eqref{LB4} and \eqref{C7} [which guarantees that $\Omega^2$ is decreasing in $\mathcal{C}$], it is easy to see that \begin{equation}\label{int2}
								\int_{4\s}^{s_{\infty}} \Omega(s) ds =  \int_{4\s}^{100\s} \Omega(s) ds + O( \epsilon^{99}) = \sqrt{C_- } \int_{4\s}^{100\s} e^{K_-s}ds + O(\epsilon^{4} \log(\epsilon^{-1})^2)= \frac{\sqrt{C_- }\cdot \nu^{-2}}{|K_-|}\cdot \epsilon^4+ O(\epsilon^{6} \log(\epsilon^{-1})^2).
							\end{equation}  \eqref{tau0.est} then follows immediately from \eqref{tau0def}, \eqref{int1} and \eqref{int2}. 
						\end{proof}
						\color{black}
						\subsubsection{Preliminary estimates on $\Omega^2$ and $X^2$ in $\mathcal{C}$}
						The purpose of this section is to translate the estimates of Section~\ref{crushing.section} into the new $\tau$, $X$ functions. This will be very useful in the proof of \eqref{X.rough.estimate}. One of the key quantities we introduce is $\log(\Delta_X)$, which is very small $O(\epsilon^3 \log(\ep^{-1}))$.
						\color{black}
						\begin{lem} \label{computation.Kasner.lemma}

							We define the variable $D(s):= \color{black}\ep^{-2} \color{black} \cdot  ( \frac{b_-^2 \ep^2\log(\frac{\Omega^2(s)}{\color{black}C_{eff} \color{black} })-\log(X(s))}{2 b_-^2 \color{black}\ep^2 \color{black} \cdot \log(\ep^{-1})^2 \color{black}+ \color{black}  \ep^2 \cdot \log(\ep^{-1}) \color{black}\log(X(s))})$ or equivalently \begin{equation} \begin{split} \label{Omega.formula}
									\Omega(s)= C_{eff}^{\frac{1}{2}}   \cdot e^{ \color{black}   \ep^2 \cdot \log(\ep^{-1})^2\cdot \color{black} D(s)} \cdot  X^{   \frac{b_-^{-2}\cdot \ep^{-2} }{2} \cdot(1+ \color{black} \ep^2 \cdot \log(\ep^{-1}) \color{black}\cdot D(s))}
								\end{split}
							\end{equation}
							Then there exists $D_0(M,e,\Lambda,m^2)>0$, $D'_0(M,e,\Lambda,m^2)>0$ such that for all $s\in \mathcal{C}$:
							\begin{equation} \label{D0bound}
								|D|(s) \leq D_0,
							\end{equation}
							\begin{equation} \label{D0'bound}
								\bigl|\frac{dD(X)}{dX}\bigr| \leq \frac{D'_0 }{X \cdot \log(X^{-1})}.
							\end{equation}
							
							Then, defining $\Delta_X(s):= \frac{\Omega^{2 b_-^2\ep^2}(s) }{\color{black} C_{eff}^{b_-^2 \epsilon^2}\cdot\color{black}[X(s)]^{1+\color{black}\ep^2 \cdot \log(\ep^{-1}) \color{black} \cdot D(s)} }$, we have  for all $s \in \mathcal{C}$: \begin{equation} \label{DeltaX.bound}
								\bigl|\log(\Delta_X(s))\bigr| \ls \color{black}\ep^4 \cdot \log(\ep^{-1})^2.\color{black} 
							\end{equation}
							
							Now, defining $F(s):= \frac{1}{\color{black}\ep^2 \cdot \log(\ep^{-1}) \color{black} }\cdot \left(\frac{ 2 |K_-| \cdot X(s)\cdot |\dot{X}(s)| }{B^2\cdot \ep^2}-1\right)$ or equivalently \begin{equation} \label{F0def}
								\frac{ 2 |K_-| \cdot X(s)\cdot |\dot{X}(s)| }{B^2 \cdot \ep^2} = 1+ F(s) \cdot \color{black}\ep^2 \cdot \log(\ep^{-1}) \color{black}.
							\end{equation} Then there exists $F_0(M,e,\Lambda,m^2)>0$ such that for all $s \in \mathcal{C}$: \begin{equation} \label{F0bound}
								| F|(s) \leq F_0.
							\end{equation}

						\end{lem}
						
						\begin{proof}
							\eqref{D0bound} is a mere-writing of \eqref{C6}. For \eqref{D0'bound}, we note that \eqref{C5} gives the equivalent (in view also of \eqref{C4}) $$ |D'(s)| \ls \frac{\ep^2}{X^2(s) \cdot |\log(X(s))|} \ls \frac{|\dot{X}(s)|}{ X(s) \cdot |\log(X(s))|}, $$ and \eqref{D0'bound} follows immediately by the chain rule. For \eqref{DeltaX.bound}, pass \eqref{Omega.formula} to the power $ 2 b_-^2 \cdot \ep^2$. \eqref{F0bound} is a re-writing of \eqref{C4}. 
						\end{proof}

						\color{black}

						\subsection{The proof of the key estimate \eqref{X.rough.estimate}}
						
						Now we turn to the proof of \eqref{X.rough.estimate}, which is probably the most delicate estimate linking $X$, $\Omega^2$ and $\tau$. We do the computation in two steps: in the first one we reduce the computation of $\frac{\tau} {\Omega}$ to a complicated integral $\mathcal{M}(s)$. 
						
						\color{black}
						
						\begin{lem} Recall the definition of the constant $F$ from \eqref{F0bound}, we have the following identities for all $s\in \mathcal{C}$:
							\begin{equation} \label{M.def.decomp}
								\frac{\int_{s}^{s_{\infty}}\Omega(s')ds'}{\Omega(s)} = \frac{2|K_-|}{B^2} \cdot \mathcal{M}(s)+ \mathfrak{E}_1(s)+ \mathfrak{E}_2(s),
							\end{equation}\begin{equation} \label{M.def}
								\mathcal{M}(s):= \ep^{-2} X^2(s) \int_{0}^1 [X(s)]^{\frac{b_-^{-2} \color{black}\log(\ep^{-1})\color{black}}{2}\cdot [D (Z X(s))- D ( X(s))]} Z^{1+\frac{b_-^{-2} \ep^{-2}}{2}\cdot (1+\color{black}\ep^2 \cdot \log(\ep^{-1}) \color{black} \cdot D(X(s) Z))}dZ,
							\end{equation}\begin{equation}\label{E1bound}
								| \mathfrak{E}_1|(s) \leq F_0 \cdot \color{black}\ep^2 \cdot \log(\ep^{-1}) \color{black} \cdot \frac{\int_{s}^{s_{\infty}}\Omega(s')ds'}{\Omega(s)},
							\end{equation}\begin{equation}\label{E2bound}
								| \mathfrak{E}_2|(s) \leq 4 D_0 \cdot \color{black}\ep^2 \cdot \log(\epsilon^{-1})^2\cdot  \color{black} \frac{\int_{s}^{s_{\infty}}\Omega(s')ds'}{\Omega(s)}.
							\end{equation}
						\end{lem}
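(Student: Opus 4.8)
The plan is to start from the defining relation \eqref{F0def}, namely $|\dot X|(s) = \tfrac{B^2\ep^2}{2|K_-|}\cdot X^{-1}(s)\cdot(1+\ep F(s))$ with $|F|\le F_0$, and use it to change variables in the integral $\int_s^{s_\infty}\Omega(s')\,ds'$ from $s'$ to $X' = X(s')$. Since $r$ (hence $X$) is strictly decreasing on $\mathcal C$ with $X(s_\infty)=0$ (by Proposition~\ref{spacelike.prop} and \eqref{LB/C.radius}), this is a legitimate substitution: $ds' = \tfrac{dX'}{\dot X(s')} = -\tfrac{2|K_-|}{B^2\ep^2}\cdot\tfrac{X'\,dX'}{1+\ep F(s')}$. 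Next I would substitute the formula \eqref{Omega.formula} for $\Omega$, i.e. $\Omega(s') = C_{eff}^{1/2}\ep\, e^{\ep D(s')}X'^{\,\frac{b_-^{-2}\ep^{-2}}{2}(1+\ep D(s'))}$, and likewise for $\Omega(s)$ in the denominator; the overall $C_{eff}^{1/2}\ep$ prefactors cancel in the ratio. After rescaling $X' = Z\cdot X(s)$ with $Z\in(0,1)$ (so $dX' = X(s)\,dZ$, and $Z=1$ at $s'=s$, $Z=0$ at $s'=s_\infty$), the exponent $X'^{\frac{b_-^{-2}\ep^{-2}}{2}(1+\ep D(s'))}$ divided by $X(s)^{\frac{b_-^{-2}\ep^{-2}}{2}(1+\ep D(s))}$ reorganizes exactly into the integrand of \eqref{M.def}: the factor $Z^{1+\frac{b_-^{-2}\ep^{-2}}{2}(1+\ep D(X(s)Z))}$ collects the extra $X(s)$ from $dX'$ (giving the leading ``$1$'' in the exponent of $Z$), the power of $Z$ itself, and the cross term, while the factor $X(s)^{\frac{b_-^{-2}\ep^{-2}}{2}[D(ZX(s))-D(X(s))]}$ is precisely what is needed to reconcile the $D(s')$-dependence at $X'$ against the $D(s)$-dependence at $X(s)$. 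Carrying the $\tfrac{2|K_-|}{B^2\ep^2}$ prefactor and the $X^2(s)$ arising from $X'\,dX' = X^2(s) Z\,dZ$ out front yields the main term $\tfrac{2|K_-|}{B^2}\mathcal M(s)$ with $\mathcal M$ as in \eqref{M.def}.

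The remaining task is to collect the error terms. The substitution introduced the factor $(1+\ep F(s'))^{-1}$ rather than $1$; writing $(1+\ep F)^{-1} = 1 - \ep F(1+\ep F)^{-1}$ and using $|F|\le F_0$, $\ep$ small, the correction it produces is at most $F_0\ep$ times the same (positive) integral, which is $\le F_0\ep\cdot\tfrac{\int_s^{s_\infty}\Omega\,ds'}{\Omega(s)}$ after recombining — this is $\mathfrak E_1(s)$, bounded as in \eqref{E1bound}. Separately, one should make sure the $D$-dependence was handled consistently: if in forming $\mathcal M$ one replaced, say, $e^{\ep D(s')}$ by $e^{\ep D(X(s)Z)}$ in the cross-exponent but kept a stray $e^{\ep(D(s')-D(X(s)Z))}$-type discrepancy, one uses $|D|\le D_0$ (from \eqref{D0bound}) together with $|e^x - 1|\le 2|x|$ for $|x|$ small to bound that multiplicative discrepancy by $4D_0\ep$ times the positive integral, giving $\mathfrak E_2(s)$ with the bound \eqref{E2bound}. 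Actually, cleanest is: define $\mathcal M$ exactly by \eqref{M.def}, then $\tfrac{2|K_-|}{B^2}\mathcal M(s)$ equals the integral with $(1+\ep F(s'))$ and any residual $D$-mismatch replaced by $1$; the two replacements each cost a multiplicative $(1+O(\ep))$ factor, and writing $\text{(true)} = \tfrac{2|K_-|}{B^2}\mathcal M + \mathfrak E_1 + \mathfrak E_2$ with $\mathfrak E_1, \mathfrak E_2$ absorbing the two corrections gives \eqref{M.def.decomp}.

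The main obstacle — really the only subtle point — is bookkeeping of the exponents in the change of variables: one must verify that the combination of (i) the $X(s)$ picked up from $dX'$, (ii) the power $X'^{\frac{b_-^{-2}\ep^{-2}}{2}}$, and (iii) the $\ep D$ corrections evaluated at the running point $X'=ZX(s)$ versus at the base point $X(s)$, reassemble into exactly the integrand displayed in \eqref{M.def}, with the ``$1+$'' in the exponent of $Z$ accounted for and the $D$-difference term $X(s)^{\frac{b_-^{-2}\ep^{-2}}{2}[D(ZX(s))-D(X(s))]}$ appearing with the correct sign. Once that algebraic identity is checked line by line, the error bounds \eqref{E1bound}–\eqref{E2bound} follow mechanically from the pointwise bounds $|F|\le F_0$ and $|D|\le D_0$ already established in the preceding lemma, and positivity of $\Omega$ guarantees that factoring the $O(\ep)$ corrections out of the (sign-definite) integral is valid.
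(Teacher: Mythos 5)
Your plan is correct and is essentially the paper's own argument: the paper likewise uses \eqref{F0def} to trade $ds'$ for $X'\,dX'$ (it multiplies the integrand by $1=\tfrac{2|K_-|}{B^2\ep^2}X|\dot X|-\ep F$ rather than changing variables first, which is the same computation and yields the same $\mathfrak{E}_1$ with the exact constant $F_0$), strips the $e^{\ep D}$ factors into $\mathfrak{E}_2$ via $|D|\leq D_0$, and then rescales $Z=X(s')/X(s)$ to produce \eqref{M.def}. The exponent bookkeeping you flag as the remaining task does reassemble exactly as displayed (the factor from $X'\,dX'=X^2(s)Z\,dZ$ supplies both the prefactor $X^2(s)$ and the leading $1$ in the exponent of $Z$), so nothing further is needed.
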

						
						\begin{proof}
							First, we use \eqref{Omega.formula}  to write \begin{align*}
								\frac{\int_{s}^{s_{\infty}}\Omega(s')ds'}{\Omega(s)}  = \frac{ \int_{s}^{s_{\infty}}e^{ \color{black}  \ep^2 \cdot \log(\epsilon^{-1})^2 \cdot  \color{black}  D(s')} \cdot  [X(s')]^{   \frac{b_-^{-2}\cdot \ep^{-2} }{2} \cdot(1+\color{black}\ep^2 \cdot \log(\ep^{-1}) \color{black}  \cdot D(s'))}ds'}{e^{  \color{black}  \ep^2 \cdot \log(\epsilon^{-1})^2 \color{black}\cdot D(s)} \cdot  [X(s)]^{   \frac{b_-^{-2}\cdot \ep^{-2} }{2} \cdot(1+\color{black}\ep^2 \cdot \log(\ep^{-1}) \color{black} \cdot D(s))}}\end{align*} Then, in the integral in the numerator, we write  \eqref{F0def} as $1= \frac{2|K_-| X(s') |\dot{X}|(s')}{B^2 \ep^2}- F(s) \cdot \ep^2 \log(\ep^{-1})$, and $\frac{e^{   \ep^2 \cdot \log(\epsilon^{-1})^2 \cdot  D(s')}}{e^{   \ep^2 \cdot \log(\epsilon^{-1})^2 \cdot  D(s)}}= 1+ [e^{  \ep^2 \cdot \log(\epsilon^{-1})^2 \cdot  (D(s')-D(s))}-1]$. The first transformation will generate an error $\mathfrak{E}_1$, while the second transformation generates an error  $\mathfrak{E}_2 $, and we have \color{black} 
							
							\begin{align*}	  &\ 	 \frac{\int_{s}^{s_{\infty}}\Omega(s')ds'}{\Omega(s)}  =  \frac{2|K_-| \cdot  \ep^{-2}}{B^2}  \underbrace{\frac{ \int_{s}^{s_{\infty}}|\dot{X}|(s') [X(s')]^{ 1+  \frac{b_-^{-2}\cdot \ep^{-2} }{2} \cdot(1+\color{black}\ep^2 \cdot \log(\ep^{-1}) \color{black} \cdot D(s'))}ds'}{  [X(s)]^{   \frac{b_-^{-2}\cdot \ep^{-2} }{2} \cdot(1+\color{black}\ep^2 \cdot \log(\ep^{-1}) \color{black} \cdot D(s))}}}_{\mathcal{M}(s)} + \mathfrak{E}_1 + \mathfrak{E}_2, \\
								&  \mathfrak{E}_1(s):=-\color{black}\ep^2 \cdot \log(\ep^{-1}) \color{black} \cdot  \frac{ \int_{s}^{s_{\infty}} F(s')e^{ \color{black} \ep^2 \cdot \log(\epsilon^{-1})^2 \cdot  \color{black} D(s')} \cdot  [X(s')]^{   \frac{b_-^{-2}\cdot \ep^{-2} }{2} \cdot(1+\color{black}\ep^2 \cdot \log(\ep^{-1}) \color{black}   \color{black} \cdot D(s'))}ds'}{e^{   \color{black}  \ep^2 \cdot \log(\epsilon^{-1})^2   \color{black}\cdot D(s)} \cdot  [X(s)]^{   \frac{b_-^{-2}\cdot \ep^{-2} }{2} \cdot(1+\color{black}\ep^2 \cdot \log(\ep^{-1}) \color{black} \cdot D(s))}}; \\
								& \mathfrak{E}_2(s)= \frac{ \int_{s}^{s_{\infty}}[e^{ \color{black}  \ep^2 \log(\epsilon^{-1})^2   \color{black} \cdot [D(s')-D(s)]}-1] \cdot  [X(s')]^{   \frac{b_-^{-2}\cdot \ep^{-2} }{2} \cdot(1+\color{black}\ep^2 \cdot \log(\ep^{-1}) \color{black} \cdot D(s'))}ds'}{  [X(s)]^{   \frac{b_-^{-2}\cdot \ep^{-2} }{2} \cdot(1+\color{black}\ep^2 \cdot \log(\ep^{-1}) \color{black} \cdot D(s))}}.
							\end{align*}
							Then	\eqref{E1bound} follows immediately from \eqref{F0bound} \color{black} and \eqref{Omega.formula}\color{black}. \eqref{M.def} is obtained by the substitution $Z(s'):=\frac{X(s')}{X(s)}$ for all fixed $s$.
							
							By \eqref{D0bound} we also obtain \eqref{E2bound} as such \begin{align*}
								\mathfrak{E}_2(s) \leq\ & 2 D_0 \cdot \color{black} \ep^2 \cdot \log(\epsilon^{-1})^2  \color{black} \cdot \frac{ \int_{s}^{s_{\infty}} [X(s')]^{   \frac{b_-^{-2}\cdot \ep^{-2} }{2} \cdot(1+\color{black}\ep^2 \cdot \log(\ep^{-1}) \color{black} \cdot D(s'))}ds'}{  [X(s)]^{   \frac{b_-^{-2}\cdot \ep^{-2} }{2} \cdot(1+\ep \cdot D(s))}}  \\ \leq\ &  4 D_0 \cdot  \color{black} \ep^2 \cdot \log(\epsilon^{-1})^2 \color{black} \cdot \frac{ \int_{s}^{s_{\infty}} e^{ \color{black} \ep^2 \cdot \log(\epsilon^{-1})^2   \color{black}\cdot [D(s')-D(s)]}[X(s')]^{   \frac{b_-^{-2}\cdot \ep^{-2} }{2} \cdot(1+\color{black}\ep^2 \cdot \log(\ep^{-1}) \color{black} \cdot D(s'))}ds'}{  [X(s)]^{   \frac{b_-^{-2}\cdot \ep^{-2} }{2} \cdot(1+\color{black}\ep^2 \cdot \log(\ep^{-1}) \color{black} \cdot D(s))}}.
							\end{align*} 
						\end{proof}
						
						\color{black} Now we turn to estimating the main term $\mathcal{M}(s)$ in the earlier proposition. To quantify the realization of \eqref{X.rough.estimate}, we will obtain as a result control over a quantity $\log(\Delta_{\Omega^2})$ which we show is small of order $O(\epsilon^2 \cdot \log(\ep^{-1}) )$. \color{black}
						\begin{prop} The following uniform estimate holds true:
							\begin{equation}\label{tildeM.main.bound}\sup_{s \in \mathcal{C}}\bigl| \frac{\mathcal{M}(s)}{X^2(s)}- 2 b_-^2 \bigr| \ls \color{black} \ep^2 \cdot \log(\ep^{-1}) \color{black} .\end{equation} 
							Defining $ \Delta_{\Omega}:=\frac{|K_-| \cdot \int_s^{s_\infty}\Omega(s')ds'}{\Omega(s) \cdot X^2(s)} $, the above bound implies that
							\begin{align} \label{DeltaOmegabound}
								&  \sup_{s\in\mathcal{C}}\bigl| \Delta_{\Omega}(s)-1\bigr| ,\ \sup_{s\in\mathcal{C}}\bigl| \log(\Delta_{\Omega}(s))\bigr| 
								\ls \color{black} \ep^2 \cdot  \log^2(\ep^{-1}).
							\end{align}
							
						\end{prop}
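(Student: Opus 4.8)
The plan is to estimate the quantity $\mathcal{M}(s)/X^2(s)$ directly from the integral representation (\ref{M.def}). First I would rewrite
\[
\frac{\mathcal{M}(s)}{X^2(s)} = \ep^{-2}\int_0^1 [X(s)]^{\frac{b_-^{-2}\ep^{-2}}{2}\,[D(ZX(s))-D(X(s))]}\, Z^{1+\frac{b_-^{-2}\ep^{-2}}{2}(1+\ep D(X(s)Z))}\, dZ,
\]
and observe that the dominant contribution comes from $Z$ close to $1$, since the exponent of $Z$ is a huge power $\sim \tfrac{b_-^{-2}\ep^{-2}}{2}$. The heuristic is that $\int_0^1 Z^{N}dZ = \tfrac{1}{N+1} \approx \tfrac{2}{b_-^{-2}\ep^{-2}} = 2b_-^2 \ep^2$, which after the $\ep^{-2}$ prefactor gives exactly $2b_-^2$. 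The whole point is to show the two correction factors — the one coming from the $Z$-dependent Kasner exponent $D(X(s)Z)$ in the exponent of $Z$, and the one coming from the prefactor $[X(s)]^{\frac{b_-^{-2}\ep^{-2}}{2}[D(ZX(s))-D(X(s))]}$ — only perturb this by a relative $O(\ep)$ error.

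The key steps, in order, are: (i) For the prefactor, use (\ref{D0'bound}) and the chain rule to control $D(ZX(s))-D(X(s))$. Writing $Y = X(s)$, we have $|D(ZY)-D(Y)| \le \int_{ZY}^{Y} \frac{D'_0}{x\log(x^{-1})}dx = D'_0\,\bigl|\log\frac{\log(Y^{-1})}{\log((ZY)^{-1})}\bigr|$; since $\log((ZY)^{-1}) = \log(Y^{-1}) + \log(Z^{-1})$, this is $D'_0\log\bigl(1 + \tfrac{\log(Z^{-1})}{\log(Y^{-1})}\bigr)$. Then $[X(s)]^{\frac{b_-^{-2}\ep^{-2}}{2}[D(ZY)-D(Y)]} = \exp\bigl(-\tfrac{b_-^{-2}\ep^{-2}}{2}\log(Y^{-1})[D(ZY)-D(Y)]\bigr)$, and since $\log(Y^{-1})\cdot\log\bigl(1+\tfrac{\log Z^{-1}}{\log Y^{-1}}\bigr) \le \log(Z^{-1})$, the prefactor is bounded above by $Z^{-\frac{b_-^{-2}\ep^{-2}}{2}D'_0}$ (and similarly bounded below), i.e. it shifts the effective $Z$-exponent by only $O(\ep^{-2})$ — which is the same order as the leading exponent, so this needs to be tracked, not discarded. (ii) For the exponent-of-$Z$ correction, use (\ref{D0bound}): $|D(X(s)Z)|\le D_0$, so $Z^{\ep D(X(s)Z)}$ lies between $Z^{-\ep D_0}$ and $Z^{\ep D_0}$; combined with (i) this sandwiches the integrand between $Z^{1+\frac{b_-^{-2}\ep^{-2}}{2}(1\pm C\ep)}$ for a suitable $C(M,e,\Lambda,m^2)$. (iii) Integrate these explicit bounds: $\ep^{-2}\int_0^1 Z^{1+\frac{b_-^{-2}\ep^{-2}}{2}(1\pm C\ep)}dZ = \ep^{-2}\cdot\frac{1}{2+\frac{b_-^{-2}\ep^{-2}}{2}(1\pm C\ep)} = \frac{2b_-^2}{1\pm C\ep + O(\ep^2)} = 2b_-^2(1 + O(\ep))$, which is precisely (\ref{tildeM.main.bound}). (iv) Finally, deduce (\ref{DeltaOmegabound}): by definition $\Delta_\Omega(s) = \frac{|K_-|\int_s^{s_\infty}\Omega ds'}{\Omega(s) X^2(s)}$, and from (\ref{M.def.decomp}) together with (\ref{E1bound}), (\ref{E2bound}) we have $\frac{\int_s^{s_\infty}\Omega ds'}{\Omega(s)}\bigl(1 - (F_0 + 4D_0)\ep\bigr) \le \frac{2|K_-|}{B^2}\mathcal{M}(s) \le \frac{\int_s^{s_\infty}\Omega ds'}{\Omega(s)}\bigl(1 + (F_0+4D_0)\ep\bigr)$; dividing by $X^2(s)$, using $\frac{2|K_-|}{B^2} = \frac{1}{2|K_-|b_-^2}$ and (\ref{tildeM.main.bound}), and taking $\ep$ small so that the $(F_0+4D_0)\ep$ factor is invertible, gives $|\Delta_\Omega(s)-1|\ls\ep$ uniformly in $s\in\mathcal{C}$; the logarithmic version follows since $\log(1+x) \le |x|$ for $|x|$ small.

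The main obstacle I expect is step (i): keeping careful track of the fact that the prefactor correction $[X(s)]^{\frac{b_-^{-2}\ep^{-2}}{2}[D(ZX(s))-D(X(s))]}$ modifies the $Z$-exponent at the \emph{same} order $\ep^{-2}$ as the leading term, so it cannot simply be bounded by a constant — one must exploit the precise logarithmic structure $|D(ZX)-D(X)| \lesssim \log\bigl(1+\tfrac{\log Z^{-1}}{\log X^{-1}}\bigr)$ from (\ref{D0'bound}) and the elementary inequality $t\log(1+u/t)\le u$ to see that the net effect on the $Z$-integral is only a relative $O(\ep)$ perturbation of $2b_-^2$. Once that concavity estimate is in place, everything else is a routine explicit integration of power functions on $[0,1]$, and the passage to (\ref{DeltaOmegabound}) is immediate bookkeeping with (\ref{M.def.decomp})--(\ref{E2bound}).
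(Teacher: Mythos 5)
Your overall strategy---sandwiching the integrand of \eqref{M.def} between two explicit powers of $Z$ by means of \eqref{D0bound}, \eqref{D0'bound} and the elementary inequality $t\log(1+u/t)\le u$, and then integrating the powers exactly---is essentially the computation the paper performs (the paper merely substitutes $Z=1-\ep^2u$ first and isolates the error as a term $\mathfrak{E}_3$), and your step (iv), deducing \eqref{DeltaOmegabound} from \eqref{tildeM.main.bound} together with \eqref{M.def.decomp}, \eqref{E1bound}, \eqref{E2bound}, is exactly the paper's. However, there is a genuine inconsistency between your steps (i) and (ii). In (i) you correctly show that, with the prefactor exponent $\frac{b_-^{-2}\ep^{-2}}{2}[D(ZX)-D(X)]$ as printed in \eqref{M.def}, the prefactor is controlled by $Z^{\mp\frac{b_-^{-2}\ep^{-2}}{2}D_0'}$, i.e.\ it shifts the effective $Z$-exponent by $\frac{b_-^{-2}\ep^{-2}}{2}D_0'$---a quantity of the \emph{same} order as the leading exponent, as you yourself observe. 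In (ii) you then assert that the integrand is sandwiched between $Z^{1+\frac{b_-^{-2}\ep^{-2}}{2}(1\pm C\ep)}$; that requires the shift to be an $\ep$-fraction of the leading exponent, i.e.\ $O(\ep^{-1})$, and does not follow from what you proved. With the shift you actually derived, the integration in (iii) only yields $\frac{2b_-^2}{1\pm D_0'}+O(\ep)$, hence $\bigl|\mathcal{M}/X^2-2b_-^2\bigr|\ls 1$ rather than $\ls\ep$, and \eqref{tildeM.main.bound} would not be obtained.

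The resolution is that the displayed formula \eqref{M.def} carries a typo: re-deriving it from \eqref{Omega.formula}, the difference of exponents $\frac{b_-^{-2}\ep^{-2}}{2}(1+\ep D(ZX))-\frac{b_-^{-2}\ep^{-2}}{2}(1+\ep D(X))$ comes with a factor of $\ep$, so the prefactor exponent is $\frac{b_-^{-2}\ep^{-1}}{2}[D(ZX)-D(X)]$; this is precisely the exponent the paper's own proof uses inside its error factor $\Gamma(u)$. With that correction your step (i) produces a shift $\frac{b_-^{-2}\ep^{-1}}{2}D_0'=\frac{b_-^{-2}\ep^{-2}}{2}\cdot(D_0'\,\ep)$, the sandwich in (ii) holds with a suitable constant $C$ depending on $D_0$ and $D_0'$, and the remainder of your argument closes and reproduces the paper's conclusion. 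So the gap is real in the proposal as written---the passage from (i) to (ii) is unjustified and you never reconcile it with your own remark that the shift is of leading order---but it is inherited from the printed formula rather than from a flaw in your method; you should either re-derive \eqref{M.def} or explicitly supply the missing factor of $\ep$ before asserting the $(1\pm C\ep)$ sandwich.
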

						
						\begin{proof}
							
							We have, defining: \begin{align}
								&\tilde{M}(s):= \frac{ \mathcal{M}(s)}{X^2(s)}= \int_0^{1} 
								\Gamma(Z) Z^{1+\frac{b_-^{-2} \ep^{-2}}{2} }dZ\\ &\  \Gamma(Z)=  [X(s)]^{\frac{b_-^{-2} \log(\ep^{-1})}{2}\cdot [D (Z X(s))- D ( X(s))]}Z^{ \frac{b_-^{-2}}{2}  \cdot \log(\ep^{-1}) \cdot D(X(s) Z)}.
							\end{align}

							First, note by \eqref{D0'bound} that \begin{align*}
								|D (Z X)- D(X)| \leq \int_{Z\cdot X}^{X} |\frac{dD(Y)}{dY}| dY \leq D'_0\int_{Z\cdot X}^{X} \frac{dY}{Y \ln(Y^{-1})} = D'_0 \cdot  \ln(1+ \frac{\ln(Z)}{\ln(X)})= D'_0\cdot  \ln(1+ \frac{\ln(Z)}{\ln(X)}).
							\end{align*} Therefore, using the inequality $\ln(X^{-1}) \ln(1+\frac{a}{\ln(X^{-1})})\leq a$ for all $a \geq 0$ and $X \in (0,1)$ we have \begin{align*}
								|\ln(X) \cdot [D (Z X)- D(X)]| \leq D'_0\cdot  \ln(X^{-1}) \cdot \ln(1+ \frac{\ln(Z)}{\ln(X)}) \leq D'_0\cdot    \ln(\frac{1}{Z}).
							\end{align*}
							
							Therefore \color{black} $|\frac{\ln(X) [D(ZX)-D(X) ]}{\ln(Z)}|\leq D'_0$ is bounded, and \color{black}  also using \eqref{D0bound},   we see that \color{black} for all $0<Z<1$: \begin{align*}
								\Gamma(Z) &\ = e^{\ln(X(s))\cdot [\frac{b_-^{-2}\cdot \color{black}\log(\ep^{-1}) \color{black} }{2}  \cdot[ D(X(s) Z )- D(X(s))]]+  \ln(Z) \cdot  \frac{b_-^{-2}\cdot \color{black}  \log(\ep^{-1}) \color{black} }{2}  \cdot D(X(s)Z))} \\  &\ \color{black}= e^{ \ln(Z) \cdot  \frac{b_-^{-2}\cdot \color{black} \log(\ep^{-1}) \color{black} }{2} \cdot \left[ D(X(s)Z)+ \frac{\ln(X(s))  \cdot[ D(X(s) Z )- D(X(s))]}{\ln(Z)} \right] } = e^{\ln(Z) \cdot O(\log(\ep^{-1}))},
							\end{align*} from which we deduce $$   \tilde{M}(s) = \frac{1}{2+ \frac{b_-^{-2} \ep^{-2}}{2} + O(\log(\ep^{-1}))} = 2b_-^2 +O (\ep^2 \log(\ep^{-1})).$$

						\end{proof}\color{black}
						\subsection{The $L^{\infty}$ estimates}

						In this section, we show that the metric is uniformly close ($L^{\infty}$) to a Kasner metric. We start with two immediate identities which will be of convenience in the proof.
						\color{black}
						
						\begin{lem} \label{hep.lemma1}
							We have the following identities
							
							\begin{equation} \label{X.formula}
								(1+  \color{black} \ep^2 \cdot \log(\ep^{-1}) \color{black}\cdot D(\tau)+ 4 b_-^2 \ep^2) \cdot \log(X)= 2 b_-^2 \cdot \ep^2 \log(\tau)-  2 b_-^2 \cdot \ep^2 \log(\frac{\tau_0}{|K_-|})-2 b_-^2\cdot \ep^2 \log(\Delta_{\Omega}(\tau))- \log(\Delta_X(\tau))\color{black}-b_-^2 \ep^2 \log(C_{eff})\color{black},
							\end{equation}
							\begin{equation} \label{Omega.new.formula}
								\log( \frac{\tau}{\Omega^{1+\frac{4b_-^2 \ep^2}{1+\color{black} \ep^2 \cdot \log(\ep^{-1}) \color{black}\cdot  D(\tau)}}})=-2\ \color{black} \frac{ \log(\Delta_X(\tau))}{1+\color{black} \ep^2 \cdot \log(\ep^{-1}) \color{black} \cdot D(\tau)}+ \log(\Delta_{\Omega}(\tau))+\log(\frac{\tau_0}{|K_-|})\color{black} - \frac{2b_-^2 \ep^2 \cdot \log(C_{eff})}{1+\ep^2 \cdot \log(\ep^{-1}) \cdot D(\tau)}\color{black}.
							\end{equation}
						\end{lem}
						\begin{proof}
							\eqref{X.formula} and \eqref{Omega.new.formula} are immediate re-writings of  the identities given \color{black} in Section~\ref{prel}\color{black}.
						\end{proof}
						
						\color{black} Next, we give the $L^{\infty}$ estimates for $X$ (they immediately follow from the estimates from the earlier sections on $\Delta_X$ and $\Delta_{\Omega}$). \color{black}
						
						\begin{prop} \label{Prop.X}   Let $P(\tau):= \frac{1}{\color{black} \ep^2 \cdot \log(\ep^{-1}) \color{black}} \cdot \left[(1+ \color{black} \ep^2 \cdot \log(\ep^{-1}) \color{black} \cdot D(X(\tau))+ 4 b_-^2 \ep^2)^{-1}-1\right]$. \\There exists $P_0(M,e,\Lambda,m^2)>0$ such that  \begin{equation} \label{P0.bound}
								\sup_{ \tau \in \mathcal{C}}|P(\tau)| \leq P_0,	\end{equation}	\begin{equation} \label{XKasner1}
								\sup_{ \tau \in \mathcal{C}}\	\bigl| \ \log(\frac{X(\tau)^{1+\color{black} \ep^2 \cdot \log(\ep^{-1}) \color{black} \cdot D(\tau)+4b_-^2 \ep^2}}{\color{black}(\frac{\tau}{\sqrt{C_{eff}}})\color{black}^{ 2b_-^2\ep^2 }})\  \bigr|   \ls \color{black}\ep^{4}\log(\epsilon^{-1})^2\color{black},
							\end{equation}	\begin{equation} \label{XKasner1.5}
								\sup_{ \tau \in \mathcal{C}}\ \bigl|\ \log(\frac{X^2(\tau) }{\color{black}(\frac{\tau}{\sqrt{C_{eff}}})\color{black}^{4 b_-^2\ep^2 \cdot (1+ \color{black} \ep^2 \cdot \log(\ep^{-1}) \color{black}\cdot P(\tau))}})\bigr|=\sup_{ \tau \in \mathcal{C}}\ \bigl|\ \log(\frac{X(\tau) }{\color{black}(\frac{\tau}{\sqrt{C_{eff}}})\color{black}^{2b_-^2\ep^2 \cdot (1+ \color{black} \ep^2 \cdot \log(\ep^{-1}) \color{black} P(\tau))}})\bigr| \ls\color{black}\ep^{4}\log(\epsilon^{-1})^2\color{black},
							\end{equation}
							\begin{equation} \label{XKasner2} \begin{split}
									\sup_{ \tau \in \mathcal{C}}	\bigl| X^2(\tau)-\color{black}(\frac{\tau}{\sqrt{C_{eff}}})\color{black}^{4b_-^2\ep^2 \cdot (1+ \color{black} \ep^2 \cdot \log(\ep^{-1}) \color{black} P(\tau))}
									\bigr| 		\ls \color{black}\ep^{4}\log(\epsilon^{-1})^2\color{black} \cdot \tau^{4b_-^2\ep^2 \cdot (1+ \color{black} \ep^2 \cdot \log(\ep^{-1}) \color{black} P(\tau))} \ls   \color{black}\ep^{4}\log(\epsilon^{-1})^2\color{black}  ,\end{split} 	\end{equation}
						\end{prop}

						\begin{proof}
							\eqref{P0.bound} follows immediately from \eqref{D0bound} and \eqref{DeltaX.bound}\color{black}. \eqref{XKasner1} is immediate combining  \eqref{DeltaOmegabound}, \eqref{X.formula} and \eqref{tau0.est}. \eqref{XKasner1.5}, \eqref{XKasner2} are just re-writings of \eqref{XKasner1}. 
							
						\end{proof}
						\color{black} Next, we turn to the uniform $\Omega$ estimates. \color{black}
						
						\begin{prop} \label{Prop.Omega}   Note the identity $(1+\frac{4b_-^2 \ep^2}{1+\color{black} \ep^2 \cdot \log(\ep^{-1}) \color{black} D(\tau)})^{-1}=1-4b_-^2 \ep^2 \cdot (1+\color{black} \ep^2 \cdot \log(\ep^{-1}) \color{black} \cdot P(\tau))$, where $P$ is defined in Proposition \ref{Prop.X}. Then, for all such that for all $\tau \in \mathcal{C}$:
							
							\begin{equation} \label{OmegaKasner1.5}
								\sup_{ \tau \in \mathcal{C}}\ \bigl|\ \log(\frac{\Omega^2(\tau) }{\tau^{2[1-4b_-^2 \ep^2 \cdot (1+\color{black} \ep^2 \cdot \log(\ep^{-1}) \color{black} \cdot P(\tau))]}})\ \bigr| =2	\sup_{ \tau \in \mathcal{C}}\ \bigl|\ \log(\frac{\Omega(\tau) }{\tau^{1-4b_-^2 \ep^2 \cdot (1+\color{black} \ep^2 \cdot \log(\ep^{-1}) \color{black} \cdot P(\tau))}} )\ \bigr| \ls  \ep^2  \log^2(\ep^{-1})\color{black},
							\end{equation}
							\begin{equation} \label{OmegaKasner2} \begin{split}
									\sup_{ \tau \in \mathcal{C}}\	\bigl|\ \Omega^2(\tau)-\tau^{2[1-4b_-^2 \ep^2 \cdot (1+\color{black} \ep^2 \cdot \log(\ep^{-1})  \cdot P(\tau))]}\
									\bigr| 	 \ls  \ep^2 \color{black}\cdot \tau^{2[1-4b_-^2 \ep^2 \cdot (1+ \ep^2 \cdot \log(\ep^{-1})  \cdot P(\tau))]} 	\ls  \epsilon^{10}  \log^2(\ep^{-1}).\end{split} 	\end{equation}
							
						\end{prop}
						
						\begin{proof}

							\eqref{OmegaKasner1.5} and \eqref{OmegaKasner2} follow directly from \eqref{Omega.new.formula}, \eqref{DeltaOmegabound}, \eqref{DeltaX.bound}, \eqref{D0bound} and \eqref{tau0.est}.

						\end{proof}
						
						\color{black} Finally, for completeness we also give the $\phi$ estimates from Section~\ref{crushing.section} in terms of the new $\tau$ coordinate.
						\color{black}
						\begin{prop}
							For all $\tau\in \mathcal{C}$, we have \begin{equation} \label{phi.Kasner}
								\bigl| \phi(\tau) - 2 b_- \cdot \ep \cdot \log(\tau^{-1}) \bigr| \ls \color{black} \ep \cdot \log(\ep^{-1}) \color{black}+\color{black} \ep^3 \cdot \log(\ep^{-1}) \cdot \log(\tau^{-1}).\color{black}
						\end{equation}   \end{prop}
						\begin{proof}
							This follows immediately from \eqref{XKasner2} and \eqref{C1}.
						\end{proof}
						
						\color{black}

							\color{black}
							\subsection{Concluding the proof of Theorem~\ref{Kasner.thm}}
							Finally we gather all the estimates we proved to finish the proof of Theorem~\ref{Kasner.thm}.
							\color{black}
							
							\begin{cor}
								\color{black}	Define the variables $\rho$ and $\tilde{\tau}$ by \eqref{scale1}, \eqref{scale2}\color{black}. Then the metric satisfies \eqref{metric.Kasner.unif} and \eqref{metric.Kasner.BMO} with $P$ satisfying \eqref{main.uniform.0} and  $\mathfrak{E}^{ang}_{\ep}(\tau)$, $\mathfrak{E}^{rad}_{\ep}(\tau)$, $E^{ang}_{\ep}(\tau)$ and $E^{rad}_{\ep}(\tau)$ satisfying \eqref{main.uniform.1}, \eqref{main.uniform.2}, \eqref{main.BMO1}, \eqref{main.BMO2} and \color{black}\eqref{SF.thm}\color{black}.
							\end{cor}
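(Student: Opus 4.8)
The plan is to read this Corollary purely as the packaging of estimates already obtained: the substance lives in Propositions~\ref{Prop.X} and \ref{Prop.Omega}, in the scalar-field bound \eqref{phi.Kasner}, and in Lemmas~\ref{computation.Kasner.lemma} and \ref{tau0.lemma}, so what is left is a change of variables and a relabelling of constants. First I would start from \eqref{gTX}, $g=-\tau_0^{-2}\,d\tau^2+\Omega^2(s)\,dt^2+r_-^2X^2(s)\,d\Sigma_k$, and introduce $\tilde{\tau}=\tau/\tau_0$, which converts $-\tau_0^{-2}d\tau^2$ into $-d\tilde{\tau}^2$, together with the coordinate $\rho\propto t$; the metric then takes the form $g=-d\tilde{\tau}^2+(\text{coeff})\,d\rho^2+(\text{coeff})\,r_-^2\,d\Sigma_k$. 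I would then \emph{define} $\mathfrak{E}^{rad}_\ep$, $\mathfrak{E}^{ang}_\ep$ (and $E^{rad}_\ep$, $E^{ang}_\ep$) by declaring the two spatial coefficients to equal $[1+\mathfrak{E}^{rad}_\ep]\,\tilde{\tau}^{2[1-4b_-^2\ep^2(1+\ep P)]}$ and $[1+\mathfrak{E}^{ang}_\ep]\,\tilde{\tau}^{4b_-^2\ep^2(1+\ep P)}\,r_-^2$ (respectively $\tilde{\tau}^2[1+E^{rad}_\ep]$ and $r_-^2[1+E^{ang}_\ep]$), with $P$ as in Proposition~\ref{Prop.X}; with these definitions \eqref{metric.Kasner.unif} and \eqref{metric.Kasner.BMO} hold tautologically and everything reduces to the estimates \eqref{main.uniform.0}--\eqref{main.uniform.2} and \eqref{main.BMO1}--\eqref{main.BMO2}.

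For the uniform bounds: \eqref{main.uniform.0} is exactly \eqref{P0.bound}. The bounds \eqref{main.uniform.1} and \eqref{main.uniform.2} on $\mathfrak{E}^{rad}_\ep$ and $\mathfrak{E}^{ang}_\ep$ are, after the coordinate change, precisely \eqref{OmegaKasner2} and \eqref{XKasner2} — these are phrased in the variable $\tau$, so I would convert them to $\tilde{\tau}=\tau/\tau_0$ using Lemma~\ref{tau0.lemma}, observing that because the exponents $4b_-^2\ep^2(1+\ep P)$ are $O(\ep^2)$, the constant $\tau_0=|K_-|+O(\ep^{0.8})$ only enters those powers through factors $\tau_0^{O(\ep^2)}=1+O(\ep^2)$, harmless for the $L^\infty$ bounds, while the leading power of $\tau_0$ in the $d\rho^2$-coefficient is exactly what the constant rescaling $\rho\propto t$ is there to absorb. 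The Kasner-time comparison \eqref{Kasner.time} is then obtained by solving the near-linear relation \eqref{XKasner1.5} for $\tilde{\tau}$ in terms of $X$, and the scalar-field statement \eqref{SF.thm} follows from \eqref{phi.Kasner} after substituting, via \eqref{XKasner1.5}, $\log(r_-/r)=2b_-^2\ep^2(1+\ep P)\log(\tilde{\tau}^{-1})+O(\ep^2\log\ep^{-1})$ and extracting the bounded factor $Q(\tilde{\tau})$.

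For the $BMO$ bounds \eqref{main.BMO1} and \eqref{main.BMO2}: since $E^{rad}_\ep$ and $E^{ang}_\ep$ differ from the quantities $\Omega^2-\tau^2$ and $X^2-1$ controlled in \eqref{OmegaBMO} and \eqref{XBMO} only through the rescaling $\tau\mapsto\tilde{\tau}$ of the variable of integration, and the $BMO$-seminorm over an interval is scale-invariant (cf.\ its definition in the introduction), the bounds transfer directly to the interval $[0,\tilde{\tau}_{\mathcal{LB}}]$; the part of these bounds that comes from the region $\mathcal{LB}\setminus\mathcal{C}$ is already handled inside the proofs of \eqref{OmegaBMO} and \eqref{XBMO} through the $L^\infty$ estimates invoked there (from \eqref{LB1} and Lemma~\ref{LB.crushing.lem}).

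I do not expect a genuine obstacle here — the Corollary is assembly, not analysis. The one delicate point, and where I would be careful, is the passage from the variable $\tau$ (in which Propositions~\ref{Prop.X}, \ref{Prop.Omega} and the identities of Lemma~\ref{computation.Kasner.lemma} are stated) to the normalised Kasner time $\tilde{\tau}=\tau/\tau_0$: one must check via Lemma~\ref{tau0.lemma} that the constant $\tau_0\to|K_-|$ affects only the overall (harmless) normalisation of the $\rho$-direction and contributes only $1+O(\ep^2)$ multiplicative errors elsewhere, so that none of the quantities which must vanish as $\ep\to0$ — $\mathfrak{E}^{rad}_\ep$, $\mathfrak{E}^{ang}_\ep$, $E^{rad}_\ep$, $E^{ang}_\ep$ — acquires an $O(1)$ remainder.
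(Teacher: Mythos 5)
Your proposal is correct and follows exactly the paper's own (one-line) proof: the corollary is assembled from Propositions~\ref{Prop.X} and \ref{Prop.Omega} in $\mathcal{C}$ and from \eqref{LB.Kasner2}, \eqref{LB1}, \eqref{LB4} and Lemma~\ref{LB.crushing.lem} in $\mathcal{LB}-\mathcal{C}$, with the error terms defined tautologically and the $BMO$ bounds transferred by scale invariance. Your care about absorbing the overall factor $\tau_0^2$ into the $d\rho^2$-coefficient via $\rho\propto t$ is in fact slightly more precise than the paper's literal ``$\rho=t$''.
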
\begin{proof}
								This follows from Proposition \eqref{Prop.X}, Proposition \eqref{Prop.Omega}, and \color{black}\eqref{tau0.est}\color{black}, \eqref{LB1}, \eqref{LB4}, \color{black}\eqref{phi.Kasner}. For \eqref{SF.thm}, note that, recalling $\tilde{\tau} = \frac{\tau}{\tau_0}$, we have $$ \ep  \log(\tilde{\tau})= \ep [\log(\tau)- \log(\tau_0)]= \ep \log(\tau) + O(\ep), $$ which is why \eqref{phi.Kasner} is still true with $\tilde{\tau}$ replacing $\tau$.
								
								\color{black}

							\end{proof}

						\section*{Statements and declarations} \textbf{Competing Interests}: No competing interests.\\ \textbf{Data availability}: Data sharing not applicable to this article as no datasets were generated or analysed during the current study.

					\end{document}